\newcommand{\uncert}{\omega}
\newcommand{\uncertbis}{\uncert^{+}}
\newcommand{\uncertter}{\uncert^{-}}
\renewcommand{\moinsfine}{\subset}
\renewcommand{\triplet}[3]{\np{#1,#3,#2}}
\newcommand{\astrategyall}{m}
\newcommand{\astrategyplayer}{m^{\player}}
\newcommand{\astrategyothers}{m^{-\player}}
\newcommand{\astrategyprim}{m'^{\player}}
\newcommand{\astrategysecond}{m''}
\newcommand{\mixingdevice}{w}
\newcommand{\MIXINGDEVICE}{\mathbb{W}}
\newcommand{\MixingDeviceField}{\tribu{W}}
\newcommand{\LebesgueMeasure}{\ell}
\newcommand{\AumannSolutionMap}[3]{T^{#1}_{#2}\np{#3}}
\newcommand{\npAumannSolutionMap}[3]{T^{#1}_{\np{#2}}\np{#3}}
\newcommand{\ReducedAumannSolutionMap}[2]{T^{#1}_{#2}}
\newcommand{\preference}{\precsim}
\renewcommand{\AGENT}{{\mathbb A}}
\renewcommand{\Bgent}{B}
\renewcommand{\cardinal}[1]{\left| #1 \right|}
\renewcommand{\range}[1]{\|#1\|}
\newcommand{\LastElement}[1]{{#1}_\star}
\newcommand{\FirstElements}[1]{{#1}_{-}}
\newcommand{\FutureElements}[1]{\totalordering\!\setminus\!#1} 
\newcommand{\Choice}{C}
\newcommand{\cut}{\psi}
\newcommand{\ORDER}{\Sigma}
\newcommand{\ordering}{\varphi}
\newcommand{\totalordering}{\rho}
\newcommand{\quotient}[2]{\raise1ex\hbox{$#1$}\Big/\lower1ex\hbox{$#2$}}
\newcommand{\BehavioralWStrategy}{\beta}
\newcommand{\StochasticKernel}{\Gamma}
\newcommand{\STOCHASTICKERNEL}[2]{\StochasticKernel^{#1}_{#2}}
\renewcommand{\defset}[3][]{\ba{#2\,\big\vert\, #3}}
\renewcommand{\sequence}[2]{\left(#1\right)_{#2}}           
\renewcommand{\borel}[1]{\tribu{B}_{#1}}
\newcommand{\BigFunctionZ}{Z}
\newcommand{\SmallFunctionZ}{z}
\newcommand{\horizonminusone}{\horizon{-}1}
  \newenvironment{keyword}{\bgroup
  \hsize=\textwidth%
  \noindent\unskip\textbf{Keywords.}\noindent\,\ignorespaces}
 {\egroup}
\title{Kuhn's Equivalence Theorem\\ for Games in Product Form}
\author{Benjamin Heymann\footnote{Criteo, Paris, France}, 
  Michel De Lara\footnote{CERMICS, Ecole des Ponts, Marne-la-Vall\'ee, France},
  Jean-Philippe Chancelier$^\dagger$}
\date{\today}
\begin{document}

\maketitle

\begin{abstract}
  We propose an alternative to the tree representation of extensive
  form games.
  Games in product form 
  represent information with $\sigma$-fields over a product set,
  and do not require an explicit description of the play
  {temporal ordering}, as opposed to extensive form games on trees.
  {This representation encompasses games with continuum of actions and imperfect information.}
  We adapt and prove Kuhn's theorem --- regarding equivalence between mixed and
  behavioral strategies under perfect recall ---
  for games in product form with continuous action sets.
\end{abstract}

\begin{keyword}
  Games with information, Kuhn's equivalence theorem, perfect recall, Witsenhausen intrinsic model.
\end{keyword}

\section{Introduction}

From the origin, games in extensive form have been formulated on a tree.
In his seminal 1953 paper
\emph{Extensive Games and the Problem of Information}~\cite{Kuhn:1953},
Kuhn claimed that ``The use of a geometrical model  (\ldots)
clarifies the delicate problem of information''.
The proper handling of information was thus a strong 
motivation for Kuhn's extensive games. 
On the game tree, moves are those vertices that possess alternatives,
then moves are partitioned into players moves, themselves partitioned into 
information sets (with the constraint that no two moves in an 
information set can be on the same play).
Kuhn mentions agents, one agent per information set, 
to ``personalize the interpretation'' but the notion is not central
(to the point that his definition of perfect recall
``obviates the use of agents'').
Then (pure) strategies of a player are defined as mappings\footnote{%
  {Adopting usage in mathematics, we follow Serge Lang and use ``function'' only to
    refer to mappings in which the codomain is numerical --- that is, a set of numbers (i.e. a subset
    of~$\RR$ or $\CC$, or their possible extensions with $\pm \infty$)
    --- and reserve the term ``mapping'' for more general codomains.}
  \label{ft:mapping_vs_function}} from player moves
to alternatives, with the property of being constant on every information set.

By contrast, agents play a central role in the so-called Witsenhausen's intrinsic model
\cite{Witsenhausen:1971a,Witsenhausen:1975}, {although the vocable ``agent'' does
  not refer to the same mathematical objects.
  A Kuhn agent is identified with one of the information sets in the finite
  partition of a player.
  A Witsenhausen agent is a primitive object whose role is central as a decision
  maker equipped with the algebra of his\footnote{%
    In the paper, we adopt (except for the Alice and Bob models)
    the convention that a player is female (hence using ``she'' and
    ``her''), whereas an agent is male (``he'', ``his'').}
  information events (and not only a single information event). }
The novelty introduced in 1971 by Witsenhausen is the notion of information
field (or algebra), that we summarize as follows:
(i) each agent is equipped with a measurable action space (set and
$\sigma$-algebra) and so is chance;
(ii) the product of those measurable spaces, called the hybrid space, serves as a
unique domain for all the strategies (or policies in a control theoretic
wording);
(ii) the hybrid product $\sigma$-algebra hosts the agents' information subfields, and 
the  (pure) strategies of an agent are required to be measurable with respect to 
the agent's information field.
{The information field of an agent contains all the ``information events'' that
  the agent can observe before taking a decision.}

Witsenhausen's intrinsic model was elaborated 
in the control theory setting, to {model} how information 
is distributed among agents 
and how it impacts their strategies.
Although not explicitly designed for games, 
Witsenhausen's intrinsic model had, from the start, the potential 
to be adapted to games. Indeed, in~\cite{Witsenhausen:1971a} Witsenhausen 
placed his own model in the context of game theory, as he made references to
von~Neuman and Morgenstern~\cite{vonNeuman-Morgenstern:1947},
Kuhn~\cite{Kuhn:1953} and Aumann~\cite{Aumann:1964}.
After Witsenhausen put forward his intrinsic model in 1971, 
Harsanyi and Selten proposed, in their 1988 book, the notion of \emph{game in standard form}
\cite[\S~2.3]{Harsanyi-Selten:1988}, where they advocated for the role of both
agents and players in their theory. 
However, in the Harsanyi-Selten games in standard form,
the primitives are the agents’ choice sets\footnote{%
  Then, they call pure strategy of a player a collection of choices for her
  agents.
  This notion of strategy differs from the one we use in this paper,
  where by strategy we mean a mapping (see Footnote~\ref{ft:mapping_vs_function}) with values in the choice sets.},
whereas, in Witsenhausen's intrinsic model, the primitives are information structures, modeled by measurable spaces,
one for each agent and one for chance.
\medskip

In this paper, we\footnote{The paper uses the convention that the pronoun
  ``we'' refers to the authors, or the authors and the reader in the formal statements.} introduce a new representation of games
that we call \emph{games in product form}, or \emph{W-games} (W- as a reference to Witsenhausen). 
Game representations play a key role in the analysis of games
(see the illuminating introduction of the book~\cite{Alos-Ferrer-Ritzberger:2016}).
In the philosophy of the tree-based extensive form (Kuhn's view), 
the temporal ordering is 
hard-coded in the tree structure: one goes from the root to the leaves,
making decisions at the moves, contingent on information, chance and strategies.
For Kuhn, the chronology (tree) comes first; information comes second
(partition of the move vertices).
By contrast, for Witsenhausen, information comes first;
the chronology comes (possibly) second, 
under a so-called \emph{causality} assumption contingent on the information
structure \cite{Witsenhausen:1971a}.

Trees are perfect to follow step by step how a game is played
as any strategy profile induces a unique play:
one goes from the root to the leaves, passing from one node
to the next by an edge that depends on the strategy profile.
On the other hand, the notion of games in product form
does not require
an explicit description of the play {temporal ordering}, and the product form replaces
the tree structure with a product structure.

{Games in product form display the following features.
  By focusing on agents (each with an action set and an information field),
  they offer a different way to model strategic interactions. Having a product structure enables the possibility of decomposition,
  agent by agent.}
Beliefs and transition probabilities can be introduced in a unified
framework, and extended to the ambiguity setting and beyond.
To illustrate the potential of games in product form and the analytic techniques used,
we provide a statement and a proof of the 
celebrated Kuhn's equivalence theorem in the case of continuous action sets:
we show that perfect recall implies the equivalence between mixed
and behavioral strategies; we also show the reverse implication.

\medskip

The paper is organized as follows.
In Sect.~\ref{Witsenhausen_intrinsic_model},
we present a slightly extended version of Witsenhausen's intrinsic model. 
Then, in Sect.~\ref{Games_in_product_form}, we
propose a formal definition of games in product form (W-games),
and define mixed and behavioral strategies.
Finally, we derive an equivalent of Kuhn's equivalence theorem for games in product form
in Sect.~\ref{Kuhn_Theorem}.
The proofs\footnote{%
  The proof of Theorem~\ref{th:KET} in~\S\ref{Proof_of_Theorem_th:KET}
  (sufficiency of perfect recall to obtain equivalence between mixed W-strategies
  and behavioral strategies)
  is decomposed into four lemmata and a final proof. 
  The proof of Theorem~\ref{th:reciproq} in~\S\ref{Proof_of_Theorem_th:reciproq} (necessity)
  is decomposed into three lemmata and a final proof. 
  In the published version of this paper, the proofs of the seven lemmata are not
  given.
  They appear however in the online additional material.
  \label{ft:online_additional_material}
}
are relegated in Sect.~\ref{Proofs_of_the_main_results}.

\section{Witsenhausen's intrinsic model}
\label{Witsenhausen_intrinsic_model}

In this paper, we tackle the issue of information in the context of games.
For this purpose, we now present the so-called intrinsic model of Witsenhausen
\cite{Witsenhausen:1975,Carpentier-Chancelier-Cohen-DeLara:2015}.
In~\S\ref{Witsenhausen_intrinsic_model_(W-model)},
we introduce an extended version of Witsenhausen's 
intrinsic model, where we highlight the role of the configuration field
that contains the information subfields of all agents.
In~\S\ref{Examples}, we illustrate, on a few examples, the ease with which 
one can model information in strategic contexts, 
using subfields of the configuration field.
Finally, we present in~\S\ref{playability_Causality} 
the notion of playability. 

\subsection{Witsenhausen's intrinsic model (W-model)}
\label{Witsenhausen_intrinsic_model_(W-model)}

We present an extended version of Witsenhausen's intrinsic model ---
introduced some five decades ago in the control community
\cite{Witsenhausen:1971a,Witsenhausen:1975} --- that we call W-model
(with W- as a reference to Witsenhausen, as will also be the case with pure W-strategy).

We start with background on $\sigma$-fields. 
Let $\SET$ be a set.
Recall that a $\sigma$-field (or $\sigma$-algebra or, shortly, field)
over the set~$\SET$ is a subset $\tribu{\Set} \subset 2^\SET$,
containing~$\SET$, and which is 
closed under complementation and under countable union.
The trivial field over the set~$\SET$ is the field 
\( \{ \emptyset, \SET \} \).
The complete field over the set~$\SET$ is the power set~$2^\SET$.
If $\tribu{\Set}$ is a $\sigma$-field over the set~$\SET$ and if 
\( \SET' \subset \SET \), then \( \SET' \cap \tribu{\Set} =
\nset{ \SET' \cap \SET'' }{ \SET'' \in  \tribu{\Set} } \) is
a $\sigma$-field over the set~$\SET'$, called \emph{trace field}.
Consider two fields~$\tribu{\Set}$ and $\tribu{\Set}'$ over the set~$\SET$.
We say that the field~$\tribu{\Set}$ is \emph{finer} than the
field~$\tribu{\Set}'$
if \( \tribu{\Set} \supset \tribu{\Set}' \) (notice the reverse inclusion);
we also say that  $\tribu{\Set}'$ is a \emph{subfield} of~$\tribu{\Set}$.
As an illustration, the complete field is finer than any field or,
equivalently, any field is a subfield of the complete field.
The \emph{least upper bound} of two fields~$\tribu{\Set}$ and $\tribu{\Set}'$,
denoted by \( \tribu{\Set} \vee \tribu{\Set}' \),
is the smallest field that contains $\tribu{\Set}$ and $\tribu{\Set}'$.
The least upper bound of two fields is finer than any of the two.
Consider a family~$\sequence{\tribu{\Set}_i}{i\in I}$,
where $\tribu{\Set}_i$ is a field over the set~$\SET_i$, for all $i\in I$.
The \emph{product field} \( \bigotimes_{i\in I} \tribu{\Set}_i \)
is the smallest field, over the product set~$\prod_{i\in I} \SET_i$,
that contains all the cylinders. 

\begin{definition}(adapted from \cite{Witsenhausen:1971a,Witsenhausen:1975})

  A \emph{W-model} is a collection
  $\bp{
    \AGENT,
    \np{\Omega, \tribu{\NatureField}}, 
    \sequence{\CONTROL_{\agent}, \tribu{\Control}_{\agent}}{\agent \in \AGENT}, 
    \sequence{\tribu{\Information}_{\agent}}{\agent \in \AGENT} 
  }$, where 
  \begin{itemize}
  \item
    $\AGENT$ is a set, whose elements are called \emph{agents};
  \item 
    \( \Omega \) is a set which represents ``chance'' or ``Nature'';
    any $\omega \in \Omega$ is called a \emph{state of Nature};
    $\tribu{\NatureField}$ is a $\sigma$-field over~\( \Omega \);
  \item 
    for any \( \agent \in \AGENT \), $\CONTROL_{\agent}$ is a set, 
    the \emph{set of actions} for agent~$\agent$;
    $\tribu{\Control}_{\agent}$ is a $\sigma$-field over~$\CONTROL_{\agent}$;
  \item
    for any \( \agent \in \AGENT \), \( \tribu{\Information}_{\agent} \)
    is a subfield of the following product field
    \begin{equation}
      \tribu{\Information}_{\agent} \subset 
      {\oproduct{\bigotimes \limits_{\bgent \in \AGENT} \tribu{\Control}_{\bgent}}{\tribu{\NatureField}}}
      \eqsepv
      \forall \agent \in \AGENT 
      \label{eq:information_field_agent}
    \end{equation}
    and is called the \emph{information field} of the agent~$\agent$.
  \end{itemize}
  \label{de:W-model}
\end{definition}
In \cite{Witsenhausen:1971a,Witsenhausen:1975},
the set~$\AGENT$ of agents is supposed to be finite, but we have relaxed this assumption.
Indeed, there is no formal difficulty in handling a general set of agents,
which makes the W-model possibly relevant for differential or nonatomic games.
A \emph{finite W-model} is a W-model for which the sets~$\AGENT$,
\( \Omega \) and $\CONTROL_{\agent}$, for all \( \agent \in \AGENT \),
are finite, and the $\sigma$-fields $\tribu{\NatureField}$ 
and $\tribu{\Control}_{\agent}$, for all \( \agent \in \AGENT \),
are the power sets (that is, the complete fields).

\begin{subequations}
  The \emph{configuration space} is the product space 
  (called \emph{hybrid space} by Witsenhausen, hence the $\HISTORY$ notation)
  \begin{equation}
    \label{eq:HISTORY}
    \HISTORY = \product{ \prod\limits_{\agent \in \AGENT} \CONTROL_{\agent}}{\Omega} 
  \end{equation}
  equipped with the product \emph{configuration field}
  \begin{equation}
    \tribu{\History} =
    \oproduct{{\bigotimes \limits_{\agent \in \AGENT} \tribu{\Control}_{\agent}}}{\tribu{\NatureField}}
    \eqfinp 
    \label{eq:history_field}
  \end{equation}
  A \emph{configuration} \( \history \in \HISTORY \) is denoted by
  \begin{equation}
    \history=\couple{\sequence{\control_{\agent}}{\agent \in \AGENT}}{\omega}
    \iff
    \history_\emptyset = \omega
    \text{ and }
    \history_{\agent} =\control_{\agent}
    \eqsepv
    \forall \agent \in \AGENT
    \eqfinp
    \label{eq:history}
  \end{equation}
\end{subequations}

Now, we introduce the notion of pure W-strategy.

\begin{definition}(\cite{Witsenhausen:1971a,Witsenhausen:1975})
  \label{de:W-strategy}
  A \emph{pure W-strategy} of agent~$\agent \in \AGENT$ is a mapping
  \begin{subequations}
    \begin{equation}
      \wstrategy_{\agent} : (\HISTORY,\tribu{\History}) \to
      (\CONTROL_{\agent},\tribu{\Control}_{\agent})
      \label{eq:pure_W-strategy_a}
    \end{equation}
    from configurations to actions,
    which is measurable with respect to the information
    field~$\tribu{\Information}_{\agent}$ of agent~$\agent$, that is,
    \begin{equation}
      \label{eq:decision_rule}
      \wstrategy_{\agent}^{-1} (\tribu{\Control}_{\agent})
      \subset \tribu{\Information}_{\agent} 
      \eqfinp
    \end{equation}
    \label{eq:pure_W-strategy}
    Recall that \( \wstrategy_{\agent}^{-1} (\tribu{\Control}_{\agent}) \) is
    the $\sigma$-field (subfield of~$\tribu{\History}$) defined by
    \begin{equation}
      \wstrategy_{\agent}^{-1} (\tribu{\Control}_{\agent})
      = \ba{ \wstrategy_{\agent}^{-1} (\Control_{\agent}) \eqsepv
        \Control_{\agent}         \in \tribu{\Control}_{\agent}} 
      = \bset{ \History \in \tribu{\History} }%
      { \exists \Control_{\agent} \in \tribu{\Control}_{\agent}
        \eqsepv \wstrategy_{\agent}\np{\History} = \Control_{\agent} }
      \eqfinp 
    \end{equation}
  \end{subequations}
  \begin{subequations}
    
    We denote by $\WSTRATEGY_{\agent}$ the set of all pure W-strategies of agent $\agent \in \AGENT$.
    A \emph{pure W-strategies profile}~$\wstrategy$ is a family 
    \begin{equation}
      \wstrategy = \sequence{\wstrategy_{\agent}}{\agent \in \AGENT} 
      \in \prod_{\agent \in \AGENT} \WSTRATEGY_{\agent} 
      \label{eq:W-strategy_profile}
    \end{equation}
    of pure W-strategies, one per agent~$\agent \in \AGENT$.
    The \emph{set of pure W-strategies profiles} is 
    \begin{equation}
      \WSTRATEGY= \prod_{\agent \in \AGENT} \WSTRATEGY_{\agent} 
      \eqfinp
      \label{eq:W-STRATEGY} 
    \end{equation}
  \end{subequations}
\end{definition}
Condition~\eqref{eq:decision_rule} expresses the property that any 
(pure) W-strategy of agent~$\agent$ may only depend upon the
information~$\tribu{\Information}_{\agent}$ available to~$\agent$.
Constant mappings like~\eqref{eq:pure_W-strategy_a} are W-strategies as they
satisfy \( \wstrategy_{\agent}^{-1} (\tribu{\Control}_{\agent})=
\na{\emptyset,\HISTORY} \subset  \tribu{\Information}_{\agent} \), hence 
satisfy~\eqref{eq:decision_rule}. 
\medskip

The following self-explanatory notations (for  $\Bgent \subset \AGENT$)
will be useful: 
\begin{subequations}
  \begin{align}
    \CONTROL_\Bgent
    &=
      \prod \limits_{\bgent \in \Bgent} \CONTROL_\bgent
      \eqfinv
    \\    
    \tribu{\Control}_\Bgent 
    &= 
      \bigotimes \limits_{\bgent \in \Bgent} \tribu{\Control}_\bgent
      \otimes
      \bigotimes \limits_{\agent \not\in \Bgent} \{ \emptyset, \CONTROL_{\agent} \}
      \subset
      \bigotimes \limits_{\agent \in \AGENT} \tribu{\Control}_{\agent}
      \eqfinv
      \label{eq:sub_control_field_BGENT}
    \\
    \tribu{\History}_\Bgent 
    &= 
      \tribu{\NatureField} \otimes \tribu{\Control}_\Bgent
      = \tribu{\NatureField} \otimes \bigotimes \limits_{\bgent \in \Bgent} \tribu{\Control}_\bgent
      \otimes
      \bigotimes \limits_{\agent \not\in \Bgent} \{ \emptyset, \CONTROL_{\agent} \}
      \subset \tribu{\History}
      \eqfinv
      \label{eq:sub_history_field_BGENT}
    \\
    \bp{\text{when } \Bgent \neq \emptyset}\qquad
    \history_\Bgent 
    &=
      \sequence{\history_\bgent}{\bgent \in \Bgent}
      \in \prod \limits_{\bgent \in \Bgent} \CONTROL_\bgent
      \eqsepv \forall \history \in \HISTORY
      \eqfinv
      \label{eq:sub_history_BGENT}
    \\
    \bp{\text{when } \Bgent \neq \emptyset}\qquad
    \projection_\Bgent
    &:
      \HISTORY \to \prod \limits_{\bgent \in \Bgent} \CONTROL_\bgent
      \eqsepv
      \history \mapsto \history_\Bgent
      \eqfinv
      \label{eq:projection_BGENT}
    \\    
    \bp{\text{when } \Bgent \neq \emptyset}\qquad
    \wstrategy_\Bgent 
    &=
      \sequence{\wstrategy_\bgent}{\bgent \in \Bgent}
      \in \prod \limits_{\bgent \in \Bgent} \WSTRATEGY_{\bgent} 
      \eqsepv \forall \wstrategy \in \WSTRATEGY
      \eqfinp
      \label{eq:sub_wstrategy_BGENT}
  \end{align}
\end{subequations}
In~\eqref{eq:sub_control_field_BGENT}, when \( \Bgent=\na{\agent} \) is a
singleton, we will sometimes (abusively) identify
\( \tribu{\Control}_{\na{\agent}} = 
\tribu{\Control}_\agent \otimes
\bigotimes \limits_{\bgent \neq \agent} \{ \emptyset, \CONTROL_{\bgent} \}
\) with \(
\tribu{\Control}_\agent \).

\subsection{Examples}
\label{Examples}

We illustrate, on a few examples, the ease with which 
one can model information in strategic contexts, 
using subfields of the configuration field.
In some examples, there are no chance moves.
As the W-model involves a Nature set~$\Omega$, we should consider a (spurious) Nature
set, reduced to a singleton~$\Omega=\na{\omega}$ for instance. However, to alleviate notation, we do
not mention~$\Omega$.

\subsubsubsection{Alice and Bob models}
To illustrate the W-formalism presented above in~\S\ref{Witsenhausen_intrinsic_model_(W-model)},
we give here three examples with two agents, 
Alice and Bob (who can belong either to the same player or to two different players)\footnote{%
  For the Alice and Bob models, we do not follow
  the convention that a player is female, whereas an agent is male.}: 
first, acting simultaneously (Figure~\ref{fig:first});
second, one acting after another (Figure~\ref{fig:second}) ;
third acting after the Nature's move (Figure~\ref{fig:third}).

\begin{figure}[hbt!]
  \begin{center}
    \begin{subfigure}[b]{0.3\textwidth}
      \includegraphics[width=\textwidth]{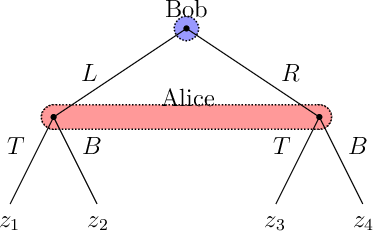}
      \caption{\label{fig:first}}
    \end{subfigure}
    \hfill
    \begin{subfigure}[b]{0.3\textwidth}
      \includegraphics[width=\textwidth]{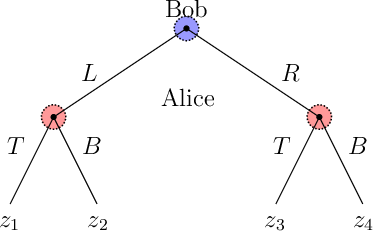}
      \caption{\label{fig:second}}
    \end{subfigure}
    \hfill
    \begin{subfigure}[b]{0.3\textwidth}
      \includegraphics[width=\textwidth]{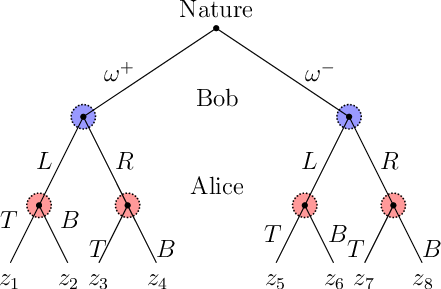}
      \caption{\label{fig:third}}
    \end{subfigure}
  \end{center}
  \caption{\label{AB:fig}Alice and Bob examples in the tree model}
\end{figure}

\subsubsubsubsection{Alice and Bob as unordered agents (trivial information,
  Figures~\ref{fig:first} and~\ref{fig:first_bis})}
\label{two_agents_unordered_W}
In the simplest W-model, we consider 
two agents $\agent$ (Alice) and $\bgent$ (Bob) having two possible actions each
(top~$T$ and bottom~$B$ for Alice~$\agent$,
left~$L$ and right~$R$ for Bob~$\bgent$), that is, 
\begin{subequations}
  \begin{equation}
    \CONTROL_\agent= \{ \control_T, \control_B \}, \quad \CONTROL_\bgent= \{  \control_L, \control_R  \}
    \eqfinp
    \label{eq:two_actions_Alice_Bob}
  \end{equation}
  We also suppose that Alice and Bob have no information about each other's actions --- see
  Figure~\ref{Pic:trivial_information} where the two grey disks represent the (here trivial)
  atoms (that is, the minimal elements for the inclusion order) of the finite $\sigma$-fields
  \(  \tribu{\Information}_\agent \) and \( \tribu{\Information}_\bgent \) --- that
  is, \(
  \tribu{\Information}_\agent=\tribu{\Information}_\bgent= \{ \emptyset, \CONTROL_\agent \} \otimes \{ \emptyset, \CONTROL_\bgent \}
  \), 
  which can be interpreted as Alice and Bob acting simultaneously.
  As Nature is absent, the configuration space consists of four elements
  \begin{equation}
    \HISTORY= \CONTROL_\agent\times\CONTROL_\bgent=  \{ \control_T,\control_B \} \times \{
    \control_L,\control_R  \} 
    \eqfinv
    \label{eq:two_actions_Alice_Bob_HISTORY}
  \end{equation}
  hence the square in Figure~\ref{fig:first_bis}.
\end{subequations}

\begin{figure}[htbp] 
  \centering
  \begin{tikzpicture}
    \draw[fill=gray!20]  (1,1) circle (1.7cm);
    \draw[densely dotted] 
    (0,0) node[anchor=north]{$(\control_B, \control_R)$} -- 
    (2,0) node[anchor=north]{$(\control_B, \control_L)$}  -- 
    (2,2) node[anchor=south]{$(\control_T, \control_L)$} --
    (0,2) node[anchor=south]{$(\control_T, \control_R)$} -- 
    (0,0);
    \draw (1,3) node{$\tribu{\Information}_\agent$};
    \draw (2,0) node{$\bullet$};
    \draw (2,2) node{$\bullet$};
    \draw (0,2) node{$\bullet$};
    \draw (0,0) node{$\bullet$};
    \draw[fill=gray!20]  (6,1) circle (1.7cm);
    \draw[densely dotted] 
    (5,0) node[anchor=north]{$(\control_B, \control_R)$} -- 
    (7,0) node[anchor=north]{$(\control_B, \control_L)$} --
    (7,2) node[anchor=south]{$(\control_T, \control_L)$} --
    (5,2) node[anchor=south]{$(\control_T, \control_R)$} -- 
    (5,0);
    \draw (6,3) node{$\tribu{\Information}_\bgent$};
    \draw (5,0) node{$\bullet$};
    \draw (7,0) node{$\bullet$};
    \draw (7,2) node{$\bullet$};
    \draw (5,2) node{$\bullet$};
  \end{tikzpicture}
  \caption{Atoms (grey disks) of the information fields of the agents $\agent$ and $\bgent$
    acting simultaneously (case of Figure~\ref{fig:first})\label{fig:first_bis}}
  \label{Pic:trivial_information}
\end{figure}
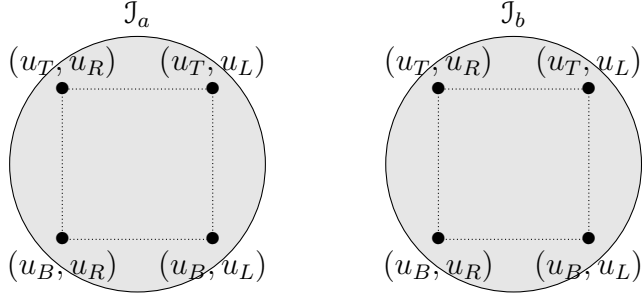

\subsubsubsubsection{Alice and Bob as ordered agents (without Nature,
  Figures~\ref{fig:second} and ~\ref{fig:second_bis})}
\label{two_agents_ordered_W}

As in the previous example, Nature is absent, 
and there are two agents $\agent$ (Alice) and $\bgent$ (Bob), 
having two possible actions each (see~\eqref{eq:two_actions_Alice_Bob}), so that
the configuration space consists of four elements (see~\eqref{eq:two_actions_Alice_Bob_HISTORY}).
Suppose that Bob's information field is trivial (Bob knows nothing of Alice's actions), that is,
\[
  \tribu{\Information}_\bgent = \big\{ 
  \emptyset, 
  \{ \control_T , \control_B \} \} \otimes
  \{ \emptyset, \{ \control_L,\control_R \} 
  \big\} 
\]
(a trivial field represented by its single atom, a grey disk on the right hand
side of Figure~\ref{fig:second_bis}), 
and that Alice knows what Bob does (Alice can distinguish between $\control_L$ and $\control_R$)
\[
  \tribu{\Information}_\agent = 
  \big\{ \emptyset, \{ \control_T , \control_B \} \} 
  \otimes 
  \{ \emptyset,  \{ \control_L \}, \{ \control_R \}, 
  \{ \control_L,\control_R \} 
  \big\}
\]
(a nontrivial field represented by its two atoms, the two grey vertical ellipses on the left hand
side of Figure~\ref{fig:second_bis}). 

In this example, the agents are naturally ordered: Bob plays first, Alice plays second. 
Had the order been inverted, then there would have been a sort of paradox -- Alice would play first,
before Bob, and would know Bob's action that has not been yet taken by him.

\begin{figure}[htbp] 
  \centering
  \begin{tikzpicture}
    \draw[fill=gray!20] (0,1) ellipse (0.7cm and 1.7cm);
    \draw[fill=gray!20]  (2,1) ellipse (0.7cm and 1.7cm);
    \draw[densely dotted] 
    (0,0) node[anchor=north]{$(\control_B, \control_R)$} -- 
    (2,0) node[anchor=north]{$(\control_B, \control_L)$}  -- 
    (2,2) node[anchor=south]{$(\control_T, \control_L)$} --
    (0,2) node[anchor=south]{$(\control_T, \control_R)$} -- 
    (0,0);
    \draw (1,3) node{$\tribu{\Information}_\agent$};
    \draw (2,0) node{$\bullet$};
    \draw (2,2) node{$\bullet$};
    \draw (0,2) node{$\bullet$};
    \draw (0,0) node{$\bullet$};
    \draw[fill=gray!20]  (6,1) circle (1.7cm);
    \draw[densely dotted] 
    (5,0) node[anchor=north]{$(\control_B, \control_R)$} -- 
    (7,0) node[anchor=north]{$(\control_B, \control_L)$} --
    (7,2) node[anchor=south]{$(\control_T, \control_L)$} --
    (5,2) node[anchor=south]{$(\control_T, \control_R)$} -- 
    (5,0);
    \draw (6,3) node{$\tribu{\Information}_\bgent$};
    \draw (5,0) node{$\bullet$};
    \draw (7,0) node{$\bullet$};
    \draw (7,2) node{$\bullet$};
    \draw (5,2) node{$\bullet$};
  \end{tikzpicture}
  \caption{Atoms of the information fields of the ordered agents $\agent$ and
    $\bgent$, without Nature (case of Figure~\ref{fig:second})\label{fig:second_bis}}
\end{figure}
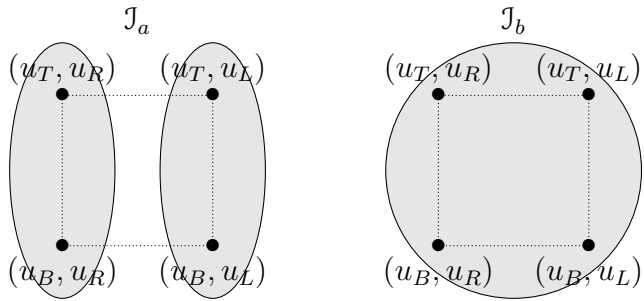

\subsubsubsubsection{Alice and Bob as ordered agents (with Nature,
  Figures~\ref{fig:third} and~\ref{fig:third_bis})}
\label{two_agents_ordered_with_Nature_W}

In this example, there are two agents $\agent$ (Alice) and $\bgent$ (Bob) and
two states of Nature $\Omega = \{ \uncertbis, \uncertter \}$ (say, heads or tails).
As in the previous examples, agents have two possible actions each (see~\eqref{eq:two_actions_Alice_Bob}).
Thus, the configuration space consists of eight elements
\[
  \HISTORY= \{ \uncertbis, \uncertter \} \times \{
  \control_T,\control_B \} \times \{
  \control_L,\control_R  \}
  \eqfinv
\]
hence the cube in Figure~\ref{fig:third_bis}. 
We consider the following information structure:
\begin{subequations}
  \begin{align}
    \tribu{\Information}_\bgent
    &= 
      \overbrace{ \big\{ \emptyset, \{ \uncertbis \}, \{ \uncertter \}, \{ \uncertbis, \uncertter \} \big\} }^{\text{Bob knows Nature's move}}
      \otimes 
      \overbrace{ \big\{ \emptyset,  \{ \control_T , \control_B \} \big\} }^{ \text{Bob does not know what Alice does}}
      \otimes 
      \{ \emptyset, \CONTROL_\bgent \}
      \eqfinv
    \\[5mm]
    \tribu{\Information}_\agent
    &= 
      \underbrace{ \big\{ \emptyset, \{ \uncertbis \}, \{ \uncertter \}, \{ \uncertbis, \uncertter \} \big\} }_{\text{Alice knows Nature's move}}
      \otimes 
      \{ \emptyset, \CONTROL_\agent \}
      \otimes 
      \underbrace{\big\{ \emptyset,  \{ \control_L \}, \{ \control_R \}, \{ \control_L,\control_R \} \big\} }_{ \text{Alice knows what Bob does}}
      \eqfinp
  \end{align}
\end{subequations}
Again, here agents are naturally ordered: Bob plays first, Alice plays second. 

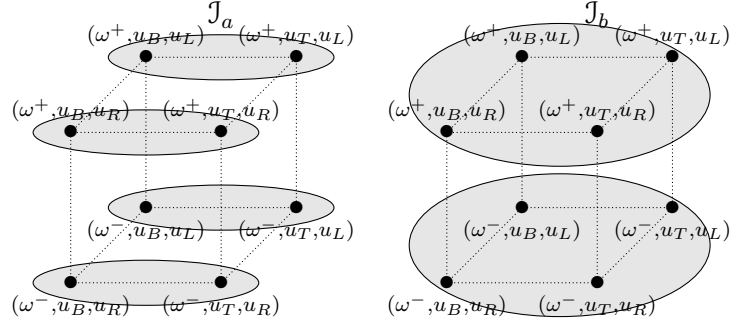
\begin{figure}[htbp]
  \centering
  \begin{tikzpicture}
    \draw[fill=gray!20] (1,0) ellipse (1.5cm and 0.3cm);
    \draw[fill=gray!20] (1,2) ellipse (1.5cm and 0.3cm);
    \draw[fill=gray!20] (2,1) ellipse (1.5cm and 0.3cm);
    \draw[fill=gray!20] (2,3) ellipse (1.5cm and 0.3cm);
    \draw[densely dotted] 
    (0,0) node[anchor=north]{$\begin{smallmatrix} (\omega^-, \control_B, \control_R) \end{smallmatrix}$} -- 
    (2,0) node[anchor=north]{$\begin{smallmatrix} (\omega^-,\control_T, \control_R) \end{smallmatrix}$}  -- 
    (2,2) node[anchor=south]{$\begin{smallmatrix} (\omega^+,\control_T, \control_R) \end{smallmatrix}$} --
    (0,2) node[anchor=south]{$\begin{smallmatrix} (\omega^+, \control_B, \control_R) \end{smallmatrix}$} -- 
    (0,0)
    (0,0) -- (1,1) node[anchor=north]{$\begin{smallmatrix} (\omega^-,\control_B, \control_L) \end{smallmatrix}$}
    (2,0) -- (3,1) node[anchor=north]{$\begin{smallmatrix} (\omega^-,\control_T, \control_L) \end{smallmatrix}$}
    (0,2) -- (1,3) node[anchor=south]{$\begin{smallmatrix} (\omega^+,\control_B, \control_L) \end{smallmatrix}$}
    (2,2) -- (3,3) node[anchor=south]{$\begin{smallmatrix} (\omega^+,\control_T, \control_L) \end{smallmatrix}$}
    (1,1) -- (1,3)  -- (3,3) -- (3,1) -- (1,1);
    \draw (2,3.57) node{$\tribu{\Information}_\agent$};
    \draw[fill=gray!20] (6.5,0.5) ellipse (2cm and 0.95cm);
    \draw[fill=gray!20] (6.5,2.5) ellipse (2cm and 0.95cm);
    \draw (2,0) node{$\bullet$};
    \draw (2,2) node{$\bullet$};
    \draw (0,2) node{$\bullet$};
    \draw (0,0) node{$\bullet$};
    \draw (1,1) node{$\bullet$};
    \draw (3,1) node{$\bullet$};
    \draw (1,3) node{$\bullet$};
    \draw (3,3) node{$\bullet$};
    \draw[densely dotted] 
    (5,0) node[anchor=north]{$\begin{smallmatrix} (\omega^-, \control_B, \control_R) \end{smallmatrix}$} -- 
    (7,0) node[anchor=north]{$\begin{smallmatrix} (\omega^-,\control_T, \control_R) \end{smallmatrix}$}  -- 
    (7,2) node[anchor=south]{$\begin{smallmatrix} (\omega^+,\control_T, \control_R) \end{smallmatrix}$} --
    (5,2) node[anchor=south]{$\begin{smallmatrix} (\omega^+, \control_B, \control_R) \end{smallmatrix}$} -- 
    (5,0)
    (5,0) -- (6,1) node[anchor=north]{$\begin{smallmatrix} (\omega^-,\control_B, \control_L) \end{smallmatrix}$}
    (7,0) -- (8,1) node[anchor=north]{$\begin{smallmatrix} (\omega^-,\control_T, \control_L) \end{smallmatrix}$}
    (5,2) -- (6,3) node[anchor=south]{$\begin{smallmatrix} (\omega^+,\control_B, \control_L) \end{smallmatrix}$}
    (7,2) -- (8,3) node[anchor=south]{$\begin{smallmatrix} (\omega^+,\control_T, \control_L) \end{smallmatrix}$}
    (6,1) -- (6,3)  -- (8,3) -- (8,1) -- (6,1);
    \draw (7,3.57) node{$\tribu{\Information}_\bgent$};
    \draw (5,0) node{$\bullet$};
    \draw (7,2) node{$\bullet$};
    \draw (5,2) node{$\bullet$};
    \draw (7,0) node{$\bullet$};
    \draw (6,1) node{$\bullet$};
    \draw (8,1) node{$\bullet$};
    \draw (6,3) node{$\bullet$};
    \draw (8,3) node{$\bullet$};
  \end{tikzpicture}
  \caption{Atoms of the information fields of the ordered agents $\agent$ and
    $\bgent$, with Nature (case of Figure~\ref{fig:third})\label{fig:third_bis}}
\end{figure}

\subsubsubsection{Sequential decision-making}
{In this example we illustrate the case of continuous action sets}.
Suppose a player takes her decisions (say, an element of $\RR^n$) 
at every discrete time step in the set\footnote{%
  For any integers $a \leq b$, $\ic{a,b}$ denotes the subset
  $\na{a,a+1,\ldots,b-1,b}$.
  \label{ft:ic}} 
\( \ic{0,\horizonminusone} \), where $\horizon \geq 1$ is an integer.
The situation will be modeled with (possibly) Nature set and field
\( \np{\Omega, \tribu{\NatureField}} \), 
and with $\horizon$~agents in $\AGENT=\ic{0,\horizonminusone}$, 
and their corresponding sets, $\CONTROL_t= \RR^n$, and fields,
$\tribu{\Control}_t = \borel{\RR^{n}} $
(the Borel $\sigma$-field of~$\RR^n$), for $t \in \AGENT$.
Then, one builds up the product set 
$\HISTORY=\product{\prod_{t=0}^{\horizonminusone} \CONTROL_{t} }{\Omega}$ and 
the product field $\tribu{\History}= \oproduct{%
  \bigotimes_{t=0}^{\horizonminusone} \tribu{\Control}_{t} }{\tribu{\NatureField}}$.
Every agent \( t \in \ic{0,\horizonminusone} \) is equipped with an 
information field \( \tribu{\Information}_{t} \subset \tribu{\History} \).
Then, we show how we can express four information patterns:
sequentiality, memory of past information, 
memory of past actions, perfect recall.
Following the notation~\eqref{eq:sub_control_field_BGENT},
we set \( \tribu{\Control}_{\{0,\ldots,t{-}1\}}=
\bigotimes_{s=0}^{t-1} \tribu{\Control}_{s} \otimes 
\bigotimes_{s=t}^{\horizonminusone} \{ \emptyset, \CONTROL_{s} \} \)
for \( t \in \ic{1,\horizon} \).
The inclusions \( \tribu{\Information}_{t} \subset 
\tribu{\History}_{\{0,\ldots,t{-}1\}} = \oproduct{%
  \tribu{\Control}_{\{0,\ldots,t{-}1\}} }{\tribu{\NatureField}} \),
for \( t\in \ic{0,\horizonminusone} \), 
express that every agent can remember no more than the past actions
of the agents before him (sequentiality); 
memory of past information is represented by the inclusions 
\(\tribu{\Information}_{t-1} \subset \tribu{\Information}_{t} \),
for \( t\in \ic{1,\horizonminusone} \);
memory of past actions is represented by the inclusions 
\( 
\oproduct{ \tribu{\Control}_{\{0,\ldots,t{-}1\}}}%
{ \{ \emptyset, \Omega \} } 
\subset \tribu{\Information}_{t} \),
for \( t\in \ic{1,\horizonminusone} \);
perfect recall is represented by the inclusions 
\( \tribu{\Information}_{t-1} \vee \bp{ \oproduct{ \tribu{\Control}_{\{0,\ldots,t{-}1\}}}%
  { \{ \emptyset, \Omega \} } }
\subset \tribu{\Information}_{t} \),
for \( t\in \ic{1,\horizonminusone} \).

To represent $N$~players --- where each player~$\player$ takes a sequence of
decisions, 
one for each period~$t \in \ic{0,\horizon^{\player}{-}1}$ --- 
we use $\prod_{\player=1}^N \horizon^{\player}$~agents, labeled by
\( (\player,t) \in \bigcup_{\playerbis=1}^N \bp{ \na{\playerbis} \mathord{\times} \ic{0,\horizon^{\playerbis}{-}1}} \).
With obvious notations, the inclusions 
\(\tribu{\Information}_{\np{\player,t-1}} \subset \tribu{\Information}_{\np{\player,t}} \)
express memory of one's own past information, 
whereas (with obvious notation) the inclusions 
\( \bigvee_{\playerbis=1}^N
\oproduct{\bigotimes_{s=0}^{t-1} \tribu{\Control}_{s}^{\playerbis} \otimes \)
  \( \bigotimes_{s=t}^{\horizon^{\playerbis}-1} \{ \emptyset, \CONTROL_{s}^{\playerbis} \}}
{ \{ \emptyset, \Omega \} } \) \( \subset \tribu{\Information}_{\np{\player,t}} \)
express memory of all players past actions.

\subsubsubsection{Embedding measurability constraints}
We go on with continuous action sets.
There are two agents, $\AGENT=\{\agent,\bgent\}$, and
Nature is absent, $\Omega=\{0\}$, $\tribu{F}=\na{\emptyset,\na{0}}$.
The action set of agent~$\agent$ is the  unit interval $\CONTROL_{\agent}=[0,1]$ equipped with its  Borel $\sigma$-algebra
$\tribu{U}_{\agent}=\tribu{B}_{[0,1]}$, and agent~$\agent$ does not know what
agent $\bgent$ does,
represented by $\tribu{I}_{\agent}=\na{\emptyset, \mathbb{U}_{\agent}} \otimes \na{\emptyset, \mathbb{U}_{\bgent}}$. Agent~$\bgent$
has two possible actions --- namely $\mathbb{U}_{\bgent}=\na{0,1}$,
$\tribu{U}_{\bgent}=2^{\mathbb{U}_{\bgent}}$ --- and
observes the action of agent~$\agent$, represented by 
$\tribu{I}_{\bgent}=\tribu{B}_{[0,1]} \otimes \na{\emptyset,\mathbb{U}_{\bgent}}$.
This models ultimatum bargaining (this example is taken from~\cite[p.157]{Alos-Ferrer-Ritzberger:2016})
where agent $\agent$ chooses an offer $x$ in the unit interval, which agent $\bgent$ perfectly observes.
Then, agent $\bgent$ either accepts the offer $(Y)$ or rejects it $(N)$.

In the model above, consider $A \subset [0,1]$ and a pure ``strategy'' (mapping) for agent $\bgent$ defined by
$$
\lambda_{\bgent}^A: x\in [0,1] \mapsto
\begin{cases}
  Y & \text{if}\; x\in A \eqsepv 
  \\
  N & \text{if}\; x\not\in A
  \eqfinp
\end{cases}
$$
The ``strategy''~$\lambda_{\bgent}^A$
is not a W-strategy when \( A \notin \tribu{B}_{[0,1]} \) (hence the quotes in ``strategy'');
indeed, condition~\eqref{eq:decision_rule} is not satisfied since
\( \npConverse{ \lambda_{\bgent}^A }\bp{\na{Y}} = A \notin \tribu{B}_{[0,1]} \).
Thus, games in product form can embed measurability constraints
to prevent strategies that would lead to no outcomes when combined with expected utilities.  
But if one is not interested in using
probability distributions --- as is the case for instance when preferences are
not measured by expected utility but by infimal utility (worst-case) ---
then nothing prevents from choosing the same model, but with
\( \tribu{U}_{\agent}=2^{[0,1]} \) and \( \tribu{I}_{\bgent}=2^{[0,1]} \otimes\left\{\emptyset,
  \mathbb{U}_{\bgent}\right\} \). Then, in this latter case,
the pure ``strategy''~$\lambda_{\bgent}^A$ is a W-strategy.

To stress the point, if one is compelled to use probability distributions over
infinite sets, then perfect information --- in the sense of 
$\tribu{I}_{\bgent}=2^{\mathbb{U}_{\agent}} \otimes
\na{\emptyset,\mathbb{U}_{\bgent}}$,
where the complete field $2^{\mathbb{U}_{\agent}}$ represents perfect
information --- 
has to be ruled out in favor of
$\tribu{I}_{\bgent}=\tribu{B}_{[0,1]} \otimes
\na{\emptyset,\mathbb{U}_{\bgent}}$ --- where the Borel field
$\tribu{B}_{[0,1]}$ represents ``approximate'' perfect information.

\subsection{Playability}
\label{playability_Causality}

Regarding Kuhn's tree formulation, Witsenhausen says that ``For any
combination of policies one can find the corresponding outcome by
following the tree along selected branches, and this is an explicit
procedure'' \cite{Witsenhausen:1971a}. 
In the Witsenhausen product formulation, there is no such explicit procedure
as, for any combination of policies, there may be none, one or many solutions to 
the (forthcoming) closed-loop equations~\eqref{eq:solution_map_sets} which express the action of one agent as 
the output of his strategy, supplied with Nature outcome and with all agents actions.
This is why Witsenhausen needs a {well-posedness property
  (that is, the existence and uniqueness of a solution to a set of equations)}
that he calls \emph{solvability} in~\cite{Witsenhausen:1971a},
whereas Kuhn does not need it as it is hard-coded in the tree structure.
From now on, we will no longer use the terminology of Witsenhausen
and we will use \emph{playability} and \emph{playable}, where he used
solvability and solvable. We indeed think that such vocabulary is more telling
to a game theory audience. 

\subsubsection{Playability}

\begin{definition}(\cite{Witsenhausen:1971a,Witsenhausen:1975})
  \label{de:playability}
  A W-model (see Definition~\ref{de:W-model}) is \emph{playable} if, 
  for every pure W-strategies profile $\wstrategy = \sequence{\wstrategy_{\agent}}{\agent \in \AGENT}
  \in \prod_{\agent \in \AGENT} \WSTRATEGY_{\agent} $
  and every state of Nature~\(  \omega\in \Omega \), the mapping 
  \begin{equation}
    \wstrategy\bp{\omega,\cdot}= \sequence{%
      \wstrategy_{\agent}\bp{\omega,\cdot} }{\agent \in \AGENT} :
    \prod_{\agent \in \AGENT} \CONTROL_{\agent} \to  \prod_{\agent \in \AGENT} \CONTROL_{\agent} 
  \end{equation}
  has a unique fixed point.
  In this case, we introduce the \emph{solution map} 
  \begin{equation}
    \SolutionMap_\wstrategy: \Omega \rightarrow \HISTORY
    \label{eq:solution_map}
  \end{equation}
  that associates $\omega\in\Omega$ with  the unique
  \( \history =\np{\omega,\control}=
  \bp{\omega, \sequence{\control_\agent}{\agent \in \AGENT}} \in\HISTORY \) 
  solution of the \emph{closed-loop equations}
  \begin{equation}
    \control_{\agent} = \wstrategy_{\agent}\bp{\omega, \sequence{\control_\bgent}{\bgent \in \AGENT}}
    \eqsepv
    \forall \agent \in \AGENT
    \eqfinp
    \label{eq:solution_map_sets}
  \end{equation}
  that is,   
  \begin{equation}
    \SolutionMap_\wstrategy(\omega) = \history
    \iff 
    \begin{cases}
      \history_\emptyset &= \omega
      \\
      \history_{\agent} &= \wstrategy_{\agent}\np{\history}
      \eqsepv
      \forall \agent \in \AGENT
      \eqfinp
    \end{cases}
    \label{eq:solution_map_IFF}
  \end{equation}
\end{definition}

This definition of ``playability'' is consistent with the term 
used in~\cite[p.102]{Alos-Ferrer-Ritzberger:2016}.
{It corresponds to a well-posedness property, 
  that is, the existence and uniqueness of a solution to the set of equations~\eqref{eq:solution_map_sets}.}

\begin{proposition}
  \label{pr:absence_of_self-information}
  If a W-model is \emph{playable},
  then 
  \begin{equation}
    \tribu{\Information}_{\agent} \subset
    \tribu{\History}_{\AGENT\setminus\na{\agent}}
    = \tribu{\NatureField} \otimes \bigotimes \limits_{\bgent \in \AGENT\setminus\na{\bgent}} \tribu{\Control}_\bgent
    \otimes
    \{ \emptyset, \CONTROL_{\agent}       \}
    \eqsepv \forall \agent \in \AGENT
    \eqfinp
    \label{eq:absence_of_self-information}
  \end{equation}
\end{proposition}
The latter property~\eqref{eq:absence_of_self-information} is referred to as
\emph{absence of self-information} \cite[p.~325]{Witsenhausen:1975}, that is, 
that the decision  of an agent is not contingent on the decision itself.
{Technically, it means that, for any agent \( \agent \in \AGENT \), a
  subset in the field~\(  \tribu{\Information}_{\agent} \) is necessarily a
  cylinder ``in the direction~\( \CONTROL_{\agent} \)'', that is,
  that the $\agent$-
  coordinate of \( \tribu{\Information}_{\agent} \) must always be \(
  \na{\emptyset, \CONTROL_{\agent}} \) for all agents 
  \( \agent \in \AGENT\).
  In other words, absence of self-information is the property that,
  for any agent \( \agent \in \AGENT \), for any 
  nonempty subset \(I_{\agent} \in \tribu{\Information}_{\agent}\), and for any two
  configurations   \(\history', \history'' \in \HISTORY \), we have that
  \begin{subequations}
    \begin{align}
    \history' \in I_{\agent} \mtext{ and }
      \history'_{\emptyset}=\history''_{\emptyset} \mtext{ and }
      & 
    \projection_{\AGENT\setminus\na{\agent}} \history' =
    \projection_{\AGENT\setminus\na{\agent}} \history''
    \implies
    \history'' \in I_{\agent} \eqfinv
\intertext{or, equivalently, by~\eqref{eq:projection_BGENT} }      
    \history' \in I_{\agent} \mtext{ and }
      \history'_{\emptyset}=\history''_{\emptyset} \mtext{ and }
      &
        \history'_{\bgent} = \history''_{\bgent} \eqsepv
\forall \bgent \in \AGENT\setminus\na{\agent} 
    \implies
    \history'' \in I_{\agent} \eqfinp      
    \label{eq:absence_of_self-information_Technically}
    \end{align}
  \end{subequations}
}
To avoid paradoxes, absence of self-information is a clear minimal axiomatic
requirement that one should ask of a W-model. 

As Witsenhausen pointed out that playability
implied absence of self-information, but without giving a proof, we provide one below.

\begin{proof}
  We consider a playable W-model.
  To prove~\eqref{eq:absence_of_self-information}, we use the
  characterization~\eqref{eq:absence_of_self-information_Technically}. 
  For this purpose, we consider an agent~\(\agent\in \AGENT\), a nonempty
  subset \(I_{\agent} \in \tribu{\Information}_{\agent}\), a configuration
  \(\history' \in I_{\agent}\), and another configuration~\(\history''\in \HISTORY\)
  satisfying \(\history'_{\emptyset}=\history''_{\emptyset}\),
  \(         \history'_{\bgent} = \history''_{\bgent} \), 
\( \forall \bgent \in \AGENT\setminus\na{\agent} \) and 
  \(\history' \not=\history''\). We prove that the configuration
  \(\history''\) necessarily belongs to the subset~\(I_{\agent}\).

  The proof is by contradiction. 
  Assume that
  \(\history'' \not\in I_{\agent}\) and define the pure W-strategies profile
  \(\wstrategy = \sequence{\wstrategy_{\bgent}}{\bgent \in \AGENT}\) as follows:
  for any \( \bgent \in \AGENT\setminus\na{\agent} \),
  \( \wstrategy_{\bgent}\np{\history}= \history'_{\bgent} \);
  \( \wstrategy_{\agent}\np{\history}=\history'_{\agent} \) if
  \(\history\in I_{\agent}\), and
  \( \wstrategy_{\agent}\np{\history}=\history''_{\agent} \) if
  \(\history\not\in I_{\agent}\).  The
  mapping~\( \wstrategy_{\agent}\) is
  \( \tribu{\Information}_{\agent} \)-measurable since
  \( I_{\agent} \in \tribu{\Information}_{\agent} \);  the
  mapping~\( \wstrategy_{\bgent}\) is
  \( \tribu{\Information}_{\bgent} \)-measurable since \( \wstrategy_{\bgent}\) is
  constant for \( \bgent \in \AGENT\setminus\na{\agent} \). As a consequence,
  \(\wstrategy = \sequence{\wstrategy_{\bgent}}{\bgent \in \AGENT}\) is a pure
  W-strategies profile (see Definition~\ref{de:W-strategy}). Now, we observe that the two (distinct)
  configurations \(\history'\) and \(\history''\) are fixed point of the
  W-strategies profile \(\wstrategy\) for the same
  \(\omega=\history'_{\emptyset}=\history''_{\emptyset}\).
  Indeed, first, for the configuration \(\history''\) we have that, for any 
  \( \bgent \in \AGENT\setminus\na{\agent} \), 
  \( \wstrategy_{\bgent}\np{\history''}= \history'_{\bgent} = \history''_{\bgent}\) and 
  as \(\history''\not\in I_{\agent}\) we have that \( \wstrategy_{\agent}\np{\history''}= \history''_{\agent}\). 
  Second, for the configuration \(\history'\) we have that, for any \( \bgent \in \AGENT\setminus\na{\agent} \), 
  \( \wstrategy_{\bgent}\np{\history'}= \history'_{\bgent}\) and, 
  as \(\history'\in I_{\agent}\), we have that \( \wstrategy_{\agent}\np{\history'}= \history'_{\agent}\).
  Thus,  the two
  configurations \(\history'\) and \(\history''\) are fixed point of the
  W-strategies profile~\(\wstrategy\) for the same
  \(\omega=\history'_{\emptyset}=\history''_{\emptyset}\), which contradicts
  uniqueness (as we also have \(\history' \not=\history''\)) in the
  Definition~\ref{de:playability} of 
  playability. 

  Therefore, we have proved (by contradiction) that \(\history''\in I_{\agent}\).
  As a consequence, we have obtained that
  \(I_{\agent} \in \tribu{\History}_{\AGENT\setminus\na{\agent}}\) and thus
  \( \tribu{\Information}_{\agent} \subset
  \tribu{\History}_{\AGENT\setminus\na{\agent}} \).
  This ends the proof. 
\end{proof}

\bigskip

We now present some useful properties of playable W-models.
{The first one states that the playability property
  implies a form of partial playability property, by leveraging the fact that
  any constant strategy is a W-strategy.}
Let a W-model be playable, let \( \wstrategy =
\sequence{\wstrategy_\agent}{\agent \in \AGENT}
\in \prod \limits_{\agent \in \AGENT} \WSTRATEGY_{\agent} \)
be a pure W-strategies profile like in~\eqref{eq:W-strategy_profile},
and let \( \Bgent \subset \AGENT \) be a nonempty subset of agents.
From~\eqref{eq:solution_map_IFF}, we readily get that
\begin{equation}
  \projection_{\Bgent}\bp{\SolutionMap_{\wstrategy}\np{\omega}}
  = 
  \wstrategy_\Bgent\bp{\SolutionMap_{\wstrategy}\np{\omega}}
  \eqsepv \forall \omega\in \Omega
  \eqfinv
  \label{eq:piBcircS}
\end{equation}
where the projection~$\projection_{\Bgent}$ is defined in Equation~\eqref{eq:projection_BGENT} and
$\wstrategy_\Bgent$ is defined in Equation~\eqref{eq:sub_wstrategy_BGENT}.
Now, we examine what happens when we replace some of the
W-strategies~\( \wstrategy_\agent \) by constant ones.
For this purpose, for any subset \( \Bgent \subset \AGENT \) of agents, 
we introduce the \emph{partial solution map}~\(
\widehat{\SolutionMap}^\Bgent_{\wstrategy_{-\Bgent}} \), defined by 
\begin{equation}
  \widehat{\SolutionMap}^\Bgent_{\wstrategy_{-\Bgent}}        
  \couple{\control_\Bgent}{\omega}
  =
  \SolutionMap_{\control_\Bgent,\wstrategy_{-\Bgent}}\np{\omega}
  \eqsepv \forall \omega \in \Omega
  \eqsepv \forall \control_\Bgent \in \CONTROL_{\Bgent}
  \eqfinv
  \label{eq:widehat_SolutionMap_BGENT}      
\end{equation}
where  \( \np{\control_\Bgent,\wstrategy_{-\Bgent}} \) has to be understood
as the pure W-strategies profile made of two 
subprofiles, like in~\eqref{eq:sub_wstrategy_BGENT}, namely 
constant subprofile with values \( \control_\Bgent \)
and subprofile \( \wstrategy_{-\Bgent} =
\sequence{\wstrategy_\cgent}{\cgent \not\in \Bgent}
\in \prod \limits_{\cgent \not\in \Bgent} \WSTRATEGY_{\cgent} \).

We obtain the following result, as a straightforward application
of~\eqref{eq:solution_map_IFF}--\eqref{eq:piBcircS}--\eqref{eq:widehat_SolutionMap_BGENT}.

\begin{proposition}
  \label{pr:SolutionMap_and_widehat_SolutionMap_BGENT}
  Let a W-model be playable, as in Definition~\ref{de:playability}.
  For any subset \( \Bgent \subset \AGENT \) of agents, 
  the solution map~\( \SolutionMap_{\wstrategy} \)
  in~\eqref{eq:solution_map} and the partial solution map
  \( \widehat{\SolutionMap}^\Bgent_{\wstrategy_{-\Bgent}} \)
  in~\eqref{eq:widehat_SolutionMap_BGENT} are related as follows: 
  \begin{equation}
    \SolutionMap_{\wstrategy}\np{\omega}
    =
    \SolutionMap_{\wstrategy_{\Bgent},\wstrategy_{-\Bgent}}\np{\omega}
    =
    \widehat{\SolutionMap}^\Bgent_{\wstrategy_{-\Bgent}}
    \bcouple{\projection_{\Bgent}\bp{\SolutionMap_{\wstrategy}\np{\omega}}}{\omega}
    = \widehat{\SolutionMap}^\Bgent_{\wstrategy_{-\Bgent}}
    \bcouple{\wstrategy_\Bgent\bp{\SolutionMap_{\wstrategy}\np{\omega}}}{\omega}
    \eqsepv  \forall \omega \in \Omega
    \eqfinp
    \label{eq:SolutionMap_and_widehat_SolutionMap_BGENT}
  \end{equation}
  As a consequence, for any 
  two pure W-strategies profiles $\wstrategy$ and $\wstrategy'$ which are such that
  $\wstrategy_{-\Bgent} = \wstrategy'_{-\Bgent}$,
  we have that $\widehat{\SolutionMap}^\Bgent_{\wstrategy_{-\Bgent}}=\widehat{\SolutionMap}^\Bgent_{\wstrategy'_{-\Bgent}}$
  and that 
  \begin{equation}
    \bgp{   \projection_{\Bgent}\bp{\SolutionMap_{\wstrategy}\np{\omega}}
      = \projection_{\Bgent}\bp{\SolutionMap_{\wstrategy'}\np{\omega}}
      \implies 
      \SolutionMap_{\wstrategy}\np{\omega}
      =
      \SolutionMap_{\wstrategy'}\np{\omega} }
    \eqsepv  \forall \omega \in \Omega
    \eqfinp
    \label{eq:PiBImplies}
  \end{equation}        
\end{proposition}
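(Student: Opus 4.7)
The plan is to unwind the definitions~\eqref{eq:solution_map}, \eqref{eq:piBcircS} and~\eqref{eq:widehat_SolutionMap_BGENT} and invoke uniqueness of the solution granted by playability. Write $\history = \SolutionMap_{\wstrategy}\np{\omega}$, so that by~\eqref{eq:solution_map_IFF} we have $\history_\emptyset=\omega$ and $\history_{\agent} = \wstrategy_{\agent}\np{\history}$ for every $\agent \in \AGENT$. Set $\control_\Bgent = \projection_{\Bgent}\np{\history} = \sequence{\history_\bgent}{\bgent\in \Bgent}$. The first equality $\SolutionMap_{\wstrategy}=\SolutionMap_{\wstrategy_\Bgent,\wstrategy_{-\Bgent}}$ is tautological, as the splitting of $\wstrategy$ according to $\Bgent$ and its complement leaves $\wstrategy$ unchanged.

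For the central equality, I consider the pure W-strategies profile $\np{\control_\Bgent,\wstrategy_{-\Bgent}}$ in which every agent $\bgent\in\Bgent$ is replaced by a constant strategy with value $\control_\bgent = \history_\bgent$. Then $\history$ still satisfies the closed-loop equations~\eqref{eq:solution_map_IFF} for this modified profile: indeed $\history_\emptyset=\omega$ holds by construction, $\history_\bgent = \control_\bgent$ holds by choice of $\control_\Bgent$, and $\history_\cgent = \wstrategy_\cgent\np{\history}$ holds for $\cgent\notin\Bgent$ because $\wstrategy_{-\Bgent}$ is unchanged. Playability forces the solution to be unique, so
\[
  \history = \SolutionMap_{\control_\Bgent,\wstrategy_{-\Bgent}}\np{\omega}
  = \widehat{\SolutionMap}^\Bgent_{\wstrategy_{-\Bgent}}\couple{\control_\Bgent}{\omega}
  \eqfinv
\]
which is precisely the central equality. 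The last equality in~\eqref{eq:SolutionMap_and_widehat_SolutionMap_BGENT} is then obtained by substituting~\eqref{eq:piBcircS}, which identifies $\projection_{\Bgent}\bp{\SolutionMap_{\wstrategy}\np{\omega}}$ with $\wstrategy_\Bgent\bp{\SolutionMap_{\wstrategy}\np{\omega}}$.

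For the consequence, observe that the right-hand side of~\eqref{eq:widehat_SolutionMap_BGENT} depends on $\wstrategy$ only through $\wstrategy_{-\Bgent}$; hence $\widehat{\SolutionMap}^\Bgent_{\wstrategy_{-\Bgent}} = \widehat{\SolutionMap}^\Bgent_{\wstrategy'_{-\Bgent}}$ whenever $\wstrategy_{-\Bgent} = \wstrategy'_{-\Bgent}$. The implication~\eqref{eq:PiBImplies} then follows immediately from the central equality applied to both $\wstrategy$ and $\wstrategy'$: with $\control_\Bgent = \projection_\Bgent\bp{\SolutionMap_\wstrategy\np{\omega}} = \projection_\Bgent\bp{\SolutionMap_{\wstrategy'}\np{\omega}}$, one gets $\SolutionMap_\wstrategy\np{\omega} = \widehat{\SolutionMap}^\Bgent_{\wstrategy_{-\Bgent}}\couple{\control_\Bgent}{\omega} = \widehat{\SolutionMap}^\Bgent_{\wstrategy'_{-\Bgent}}\couple{\control_\Bgent}{\omega} = \SolutionMap_{\wstrategy'}\np{\omega}$.

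Since the argument consists of relabelings and a single appeal to uniqueness under playability, I do not expect any real obstacle; the only point requiring care is the verification that $\history = \SolutionMap_\wstrategy\np{\omega}$ still solves the closed-loop equations after the $\Bgent$-coordinates of the profile are frozen to the values they take at the original solution, which is immediate once one writes down the equations.
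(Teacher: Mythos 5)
Your proof is correct and follows essentially the route the paper intends: the paper gives no separate argument, asserting the proposition is a straightforward application of~\eqref{eq:solution_map}, \eqref{eq:piBcircS} and~\eqref{eq:widehat_SolutionMap_BGENT}, and your write-up fleshes out exactly that — the frozen profile $\np{\control_\Bgent,\wstrategy_{-\Bgent}}$ (with constant strategies, which are trivially measurable hence admissible) still admits $\SolutionMap_{\wstrategy}\np{\omega}$ as a solution of the closed-loop equations, and playability's uniqueness does the rest, with~\eqref{eq:piBcircS} giving the last equality and the $\wstrategy_{-\Bgent}$-only dependence of $\widehat{\SolutionMap}^\Bgent_{\wstrategy_{-\Bgent}}$ giving~\eqref{eq:PiBImplies}.
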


Here is a nice application of
property~\eqref{eq:SolutionMap_and_widehat_SolutionMap_BGENT},
that will be useful in the proof of Kuhn's equivalence Theorem
(Lemma~\ref{lem:substitution_sur_MIXINGDEVICEtotalorderingplayerorderingmixingdevice-playeromega}).

\begin{proposition}
  Let a W-model be playable, as in Definition~\ref{de:playability}.
  Let \( \agent \in \AGENT \) be an agent, and
  \( \BigFunctionZ : \np{\HISTORY,\tribu{\Information}_{\agent}}
  \to \np{\mathbb{\BigFunctionZ},\tribu{\BigFunctionZ}} \) be a measurable mapping,
  where $\mathbb{\BigFunctionZ}$ is a set\footnote{
    Not to be taken in the sense of the set of relative integers.
    \label{ft:relative_integers}}
  and where the $\sigma$-field~$\tribu{\BigFunctionZ}$ contains the singletons.
  Then, for any pair
  \( \wstrategy= \sequence{\wstrategy_{\bgent}}{\bgent \in \AGENT} \)
  and
  \( \wstrategy' = \sequence{\wstrategy'_{\bgent}}{\bgent \in \AGENT} \)
  of W-strategy profiles such that
  \( \bgent \neq \agent \implies \wstrategy_{\bgent}=\wstrategy'_{\bgent} \), we
  have that
  \( \BigFunctionZ \circ \SolutionMap_{\wstrategy}=
  \BigFunctionZ \circ \SolutionMap_{\wstrategy'} \).
  \label{pr:invariance}   
\end{proposition}

\begin{proof}
  The proof is by contradiction.
  Let \( \wstrategy= \sequence{\wstrategy_{\bgent}}{\bgent \in \AGENT} \)
  and
  \( \wstrategy' = \sequence{\wstrategy'_{\bgent}}{\bgent \in \AGENT} \)
  be a pair of W-strategy profiles such that
  \( \bgent \neq \agent \implies \wstrategy_{\bgent}=\wstrategy'_{\bgent} \),
  and suppose that there exists $\omega\in\Omega$ such that
  \( \BigFunctionZ\bp{\SolutionMap_{\wstrategy}\np{\omega}} \neq
  \BigFunctionZ\bp{\SolutionMap_{\wstrategy'}\np{\omega}} \).
  
  Consider
  $\History = \Converse{\BigFunctionZ}\Bp{\BigFunctionZ\bp{\SolutionMap_{\wstrategy}\np{\omega}}}
  \subset \HISTORY$. By definition of the subset~$\History$ {and by the very
    defining property 
    of~$\omega\in\Omega$ --- that is,  \( \BigFunctionZ\bp{\SolutionMap_{\wstrategy}\np{\omega}} \neq
    \BigFunctionZ\bp{\SolutionMap_{\wstrategy'}\np{\omega}} \) --- }
  we get that 
  $\SolutionMap_{\wstrategy}\np{\omega} \in \History$ and
  $\SolutionMap_{\wstrategy'}\np{\omega} \not\in \History$. Moreover,
  \( \History \in \tribu{\Information}_{\agent} \) since
  \( \BigFunctionZ : \np{\HISTORY,\tribu{\Information}_{\agent}} \to \np{\mathbb{\BigFunctionZ},\tribu{\BigFunctionZ}} \)
  is a measurable mapping and the $\sigma$-field $\tribu{\BigFunctionZ}$ contains the
  singletons.
  We define a new W-strategy~\( \wstrategy''_{\agent} \) for agent~$\agent$ as
  follows: 
  \begin{equation}
    \forall \history'' \in \HISTORY
    \eqsepv \wstrategy''_{\agent}\np{\history''}=
    \begin{cases}
      {\projection_{\agent}\bp{\SolutionMap_{\wstrategy}\np{\omega}}}
      & \text{if } \history''\not\in \History
      \eqsepv
      \\
      {\projection_{\agent}\bp{\SolutionMap_{\wstrategy'}\np{\omega}}}
      & \text{if } \history''\in \History
      \eqfinp
    \end{cases}
    \label{eq:wstrategysecond}
  \end{equation}
  Thus defined, the mapping~\( \wstrategy''_{\agent} \) indeed is a W-strategy
  because, 
  as \( \History \in \tribu{\Information}_{\agent} \), 
  the mapping \( \wstrategy''_{\agent} : (\HISTORY,\tribu{\Information}_{\agent}) \to
  (\CONTROL_{\agent},\tribu{\Control}_{\agent}) \)
  is measurable.
  We define the W-strategies profile
  \( \wstrategy'' = \sequence{\wstrategy''_{\bgent}}{\bgent \in \AGENT} \)
  by completing \( \wstrategy''_{\agent} \) with 
  \( \wstrategy''_{\bgent}=\wstrategy_{\bgent}=\wstrategy'_{\bgent} \)
  when \( \bgent \neq \agent \). 

  We prove that playability fails for the W-strategy profile $\wstrategy''$
  (hence the contradiction).
  For this purpose, we consider the following only two possibilities
  for~$\SolutionMap_{\wstrategy''}\np{\omega}$, depending
  {whether it belongs to~$\History$ or not.}
  
  First, we assume that $\SolutionMap_{\wstrategy''}\np{\omega} \not\in
  \History$. Then, we have that
  \begin{align*}
    {\projection_{\agent}\bp{\SolutionMap_{\wstrategy''}\np{\omega}}}
    &= \wstrategy''_{\agent}\bp{\SolutionMap_{\wstrategy''}\np{\omega}} \tag{by~\eqref{eq:piBcircS}}
    \\
    &= {\projection_{\agent}\bp{\SolutionMap_{\wstrategy}\np{\omega}}}
      \tag{by the first case of~\eqref{eq:wstrategysecond} as $\SolutionMap_{\wstrategy''}\np{\omega} \not\in
      \History$ by assumption}
      \eqfinp
  \end{align*}
  Using Implication~\eqref{eq:PiBImplies} with the W-strategies profiles~$\wstrategy''$ and
  $\wstrategy$ and with the subset $\Bgent = \na{\agent}$, we get that
  $\SolutionMap_{\wstrategy''}\np{\omega} =
  \SolutionMap_{\wstrategy}\np{\omega}$.
  Therefore, as
  $\SolutionMap_{\wstrategy}\np{\omega}  \in  \History$, we deduce that $\SolutionMap_{\wstrategy''}\np{\omega} \in \History$,
  which contradicts the assumption that $\SolutionMap_{\wstrategy''}\np{\omega} \not\in \History$. 

  Second, we assume that $\SolutionMap_{\wstrategy''}\np{\omega} \in \History$.
  Then, we have that
  \begin{align*}
    {\projection_{\agent}\bp{\SolutionMap_{\wstrategy''}\np{\omega}}}
    &= \wstrategy''_{\agent}\bp{\SolutionMap_{\wstrategy''}\np{\omega}} \tag{by~\eqref{eq:piBcircS}}
    \\
    &= {\projection_{\agent}\bp{\SolutionMap_{\wstrategy'}\np{\omega}}}
      \tag{by the second case of~\eqref{eq:wstrategysecond} as $\SolutionMap_{\wstrategy''}\np{\omega} \in
      \History$ by assumption}
      \eqfinp
  \end{align*}
  Using Implication~\eqref{eq:PiBImplies} with the W-strategies profiles~$\wstrategy''$ and
  $\wstrategy'$ and with the subset $\Bgent = \na{\agent}$, we get that
  $\SolutionMap_{\wstrategy''}\np{\omega} = \SolutionMap_{\wstrategy'}\np{\omega}$.
  Therefore, as
  $\SolutionMap_{\wstrategy'}\np{\omega}  \not\in  \History$, we deduce that
  $\SolutionMap_{\wstrategy''}\np{\omega} \not\in  \History$, which contradicts
  the assumption that
  $\SolutionMap_{\wstrategy''}\np{\omega} \in \History$.
  
  We obtain a contradiction and conclude that
  \( \BigFunctionZ\bp{\SolutionMap_{\wstrategy}\np{\omega}} =
  \BigFunctionZ\bp{\SolutionMap_{\wstrategy'}\np{\omega}} \).
  \medskip

  This ends the proof.
\end{proof}

Witsenhausen introduced the notion of solvable (here, playable) measurable (SM) property 
in~\cite{Witsenhausen:1971a} when the solution map is measurable.
We will need a stronger definition.

\begin{definition}
  Let \( \Bgent \subset \AGENT \) be a nonempty subset of agents.
  We say that a W-model is \emph{playable and partially measurable
    \wrt\footnote{with respect to}~$\Bgent$}
  if it is playable and,
  for any subset \( \Bgent' \subset \Bgent \), the partial solution
  map in~\eqref{eq:widehat_SolutionMap_BGENT}
  is a measurable mapping
  \( \widehat{\SolutionMap}^{\Bgent'}_{\wstrategy_{-\Bgent'}} :
  \np{
    \Omega\times\CONTROL_{\Bgent'}, \tribu{\NatureField}\otimes\tribu{\Control}_{\Bgent'}
  }
  \to \np{ \HISTORY,\tribu{\History} } \),
  for any pure W-strategies profile
  \( \wstrategy =
  \sequence{\wstrategy_\agent}{\agent \in \AGENT}
  \in \prod \limits_{\agent \in \AGENT} \WSTRATEGY_{\agent} \)
  like in~\eqref{eq:W-strategy_profile}.
  \label{de:playability_measurability}
\end{definition}
Of course, a playable finite W-model is always
playable and partially measurable \wrt~$\Bgent$,
for any nonempty subset \( \Bgent \subset \AGENT \) of agents.

\subsubsection{{An example of a playable non causal game: the clapping hand game}}

Witsenhausen defines the notion of causality 
and proves in~\cite{Witsenhausen:1971a} that  causality implies playability
The reverse, however, is not true.  In~\cite[Theorem~2]{Witsenhausen:1971a},
Witsenhausen exhibits an example of noncausal W-model that is
playable.
The construction relies on three agents with binary action
sets --hence 
$\AGENT=\{ \agent, \bgent, \agentter \}$, 
$\CONTROL_\agent=\CONTROL_\bgent=\CONTROL_\agentter= \{0,1\} $-- and
Nature does not play any role -- so that  
$\HISTORY = \{0,1 \}^3$.
The example (see Figure~\ref{fig:wce}) relies on a choice of information fields so that
(i) no information field is trivial --- which means that there is no
first agent --- (ii) the W-model is playable though.
{The triplet of information fields
  \begin{eqnarray*}
    \tribu{\Information}_\agent
    &=
      \Ba{ \emptyset, \na{0,1}^3,
      \ba{ \np{0,1,0}, \np{1,1,0} },
      \ba{ \np{0,0,0}, \np{1,0,0}, \np{0,0,1}, \np{1,0,1}, \np{0,1,1}, \np{1,1,1} } }
    \\
    \tribu{\Information}_\bgent
    &=
      \Ba{ \emptyset, \na{0,1}^3,
      \ba{ \np{0,1,1}, \np{0,0,1} },
      \ba{ \np{0,0,0}, \np{0,1,0}, \np{1,0,0}, \np{1,1,0}, \np{1,0,1}, \np{1,1,1} } }
    \\
    \tribu{\Information}_\cgent
    &=
      \Ba{ \emptyset, \na{0,1}^3,
      \ba{ \np{1,0,0}, \np{1,0,1} },
      \ba{ \np{0,0,0}, \np{0,0,1}, \np{0,1,0}, \np{0,1,1}, \np{1,1,0}, \np{1,1,1} } }
  \end{eqnarray*}
} --- that is, $\tribu{\Information}_\agent = \sigma (\projection_\bgent (1-\projection_\agentter)) \eqsepv
\tribu{\Information}_\bgent = \sigma(\projection_\agentter (1-\projection_\agent)) \eqsepv
\tribu{\Information}_\agentter = \sigma(\projection_\agent
(1-\projection_\bgent))$ 
(where $\sigma$ denotes the
$\sigma$-field generated by a measurable mapping, here built up from
the projections~$\projection_\agent$, $\projection_\bgent$, $\projection_\agentter$
defined in Equation~\eqref{eq:projection_BGENT}) --- 
clearly satisfies~(i).
Let us show that playability holds. First we observe that
the W-strategies can be written as
\[
  \policy_{\agent}\triplet{\control_{\agent}}{\control_{\agentter}}{\control_{\agentbis}}
  =\widetilde{\policy}_{\agent}\bp{\control_{\agentbis}(1-\control_{\agentter})},
  \; 
  \policy_{\agentbis}\triplet{\control_{\agent}}{\control_{\agentter}}{\control_{\agentbis}}
  =\widetilde{\policy}_{\agentbis}\bp{\control_{\agentter}(1-\control_{\agent})
  }, \; 
  \policy_{\agentter}\triplet{\control_{\agent}}{\control_{\agentter}}{\control_{\agentbis}}
  =\widetilde{\policy}_{\agentter}\bp{\control_{\agent}(1-\control_{\agentbis})
  }, 
\]   
where \( \widetilde{\policy} : \{ 0,1 \} \to \{ 0,1 \} \), hence
\(
\np{ \widetilde{\policy}_{\agent}, \widetilde{\policy}_{\agentbis},
  \widetilde{\policy}_{\agentter} } \in \{ \textrm{Id}, 1-\textrm{Id} \}^3 
\) ($\textrm{Id}$ denotes the identity mapping).
From there, we check that playability holds true, with the (constant) solution map
given by 
\[
  \solutionmap_{\np{\textrm{Id}, \textrm{Id}, \textrm{Id}} } = \np{0,0,0}, \; 
  \solutionmap_{\np{1-\textrm{Id}, \textrm{Id}, \textrm{Id}} } = \np{1,0,1}, \; 
  \solutionmap_{\np{1-\textrm{Id}, 1-\textrm{Id}, \textrm{Id}} } = \np{0,1,0}, \; 
  \solutionmap_{\np{1-\textrm{Id}, 1-\textrm{Id}, 1-\textrm{Id}} } = \np{1,1,1}. 
\]
Hence the W-model is noncausal (because there is no
first agent) but playable.

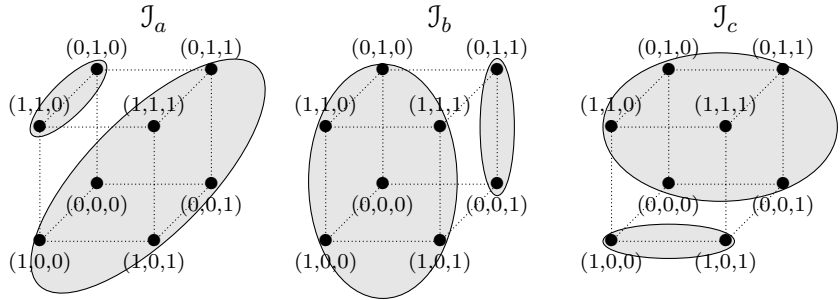
\begin{figure}[!htb]
  \centering
  \begin{tikzpicture}[scale=0.76] 
    \draw[rotate around={45:(1.892,1.142)},fill=gray!20] (1.892,1.142) ellipse (2.7cm and 1.05cm);
    \draw[rotate around={45:(0.5,2.5)},fill=gray!20] (0.5,2.5) ellipse (0.9cm and 0.3cm);
    \draw[densely dotted] 
    (0,0) node[anchor=north]{$\begin{smallmatrix} (1,0,0) \end{smallmatrix}$} -- 
    (2,0) node[anchor=north]{$\begin{smallmatrix} (1,0,1) \end{smallmatrix}$}  -- 
    (2,2) node[anchor=south]{$\begin{smallmatrix} (1,1,1) \end{smallmatrix}$} --
    (0,2) node[anchor=south]{$\begin{smallmatrix} (1,1,0) \end{smallmatrix}$} -- 
    (0,0)
    (0,0) -- (1,1) node[anchor=north]{$\begin{smallmatrix} (0,0,0) \end{smallmatrix}$}
    (2,0) -- (3,1) node[anchor=north]{$\begin{smallmatrix} (0,0,1) \end{smallmatrix}$}
    (0,2) -- (1,3) node[anchor=south]{$\begin{smallmatrix} (0,1,0) \end{smallmatrix}$}
    (2,2) -- (3,3) node[anchor=south]{$\begin{smallmatrix} (0,1,1) \end{smallmatrix}$}
    (1,1) -- (1,3)  -- (3,3) -- (3,1) -- (1,1);
    \draw (2,3.9) node{$\tribu{\Information}_\agent$};
    \draw (2,0) node{$\bullet$};
    \draw (2,2) node{$\bullet$};
    \draw (0,2) node{$\bullet$};
    \draw (0,0) node{$\bullet$};
    \draw (1,1) node{$\bullet$};
    \draw (3,1) node{$\bullet$};
    \draw (1,3) node{$\bullet$};
    \draw (3,3) node{$\bullet$};
    \draw[fill=gray!20] (8,2) ellipse (0.3cm and 1.2cm);
    \draw[fill=gray!20] (6,1.05) ellipse (1.3cm and 2.05cm);
    \draw[densely dotted] 
    (5,0) node[anchor=north]{$\begin{smallmatrix} (1,0,0) \end{smallmatrix}$} -- 
    (7,0) node[anchor=north]{$\begin{smallmatrix} (1,0,1) \end{smallmatrix}$}  -- 
    (7,2) node[anchor=south]{$\begin{smallmatrix} (1,1,1) \end{smallmatrix}$} --
    (5,2) node[anchor=south]{$\begin{smallmatrix} (1,1,0) \end{smallmatrix}$} -- 
    (5,0)
    (5,0) -- (6,1) node[anchor=north]{$\begin{smallmatrix} (0,0,0) \end{smallmatrix}$}
    (7,0) -- (8,1) node[anchor=north]{$\begin{smallmatrix} (0,0,1) \end{smallmatrix}$}
    (5,2) -- (6,3) node[anchor=south]{$\begin{smallmatrix} (0,1,0) \end{smallmatrix}$}
    (7,2) -- (8,3) node[anchor=south]{$\begin{smallmatrix} (0,1,1) \end{smallmatrix}$}
    (6,1) -- (6,3)  -- (8,3) -- (8,1) -- (6,1);
    \draw (7,3.9) node{$\tribu{\Information}_\bgent$};
    \draw (5,0) node{$\bullet$};
    \draw (7,2) node{$\bullet$};
    \draw (5,2) node{$\bullet$};
    \draw (7,0) node{$\bullet$};
    \draw (6,1) node{$\bullet$};
    \draw (8,1) node{$\bullet$};
    \draw (6,3) node{$\bullet$};
    \draw (8,3) node{$\bullet$};
    \draw[fill=gray!20] (11,0) ellipse (1.15cm and 0.3cm);
    \draw[fill=gray!20] (11.9,2) ellipse (2.05cm and 1.3cm);
    \draw[densely dotted] 
    (10,0) node[anchor=north]{$\begin{smallmatrix} (1,0,0) \end{smallmatrix}$} -- 
    (12,0) node[anchor=north]{$\begin{smallmatrix} (1,0,1) \end{smallmatrix}$}  -- 
    (12,2) node[anchor=south]{$\begin{smallmatrix} (1,1,1) \end{smallmatrix}$} --
    (10,2) node[anchor=south]{$\begin{smallmatrix} (1,1,0) \end{smallmatrix}$} -- 
    (10,0)
    (10,0) -- (11,1) node[anchor=north]{$\begin{smallmatrix} (0,0,0) \end{smallmatrix}$}
    (12,0) -- (13,1) node[anchor=north]{$\begin{smallmatrix} (0,0,1) \end{smallmatrix}$}
    (10,2) -- (11,3) node[anchor=south]{$\begin{smallmatrix} (0,1,0) \end{smallmatrix}$}
    (12,2) -- (13,3) node[anchor=south]{$\begin{smallmatrix} (0,1,1) \end{smallmatrix}$}
    (11,1) -- (11,3)  -- (13,3) -- (13,1) -- (11,1);
    \draw (12,3.9) node{$\tribu{\Information}_\agentter$};
    \draw (10,0) node{$\bullet$};
    \draw (12,2) node{$\bullet$};
    \draw (10,2) node{$\bullet$};
    \draw (12,0) node{$\bullet$};
    \draw (11,1) node{$\bullet$};
    \draw (13,1) node{$\bullet$};
    \draw (11,3) node{$\bullet$};
    \draw (13,3) node{$\bullet$};
  \end{tikzpicture}
  \caption{Noncausal playable W-model: information partitions of the three agents}
  \label{fig:wce}
\end{figure}
This model can be illustrated by the following ``clapping hands''
story\footnote{ We thank Benjamin Jourdain for the idea of the story to
  illustrate Witsenhausen's abstract example.}.  Alice, Bob and Carol are
sitting around a circular table, with their eyes closed. Each of them has to
decide either to extend her/his left hand to the left or to extend her/his right
hand to the right.  When two hands touch, the remaining player is informed (say,
a clap is directly conveyed to her/his ears); when two hands do not touch, the
remaining player is not informed.  For each triplet of strategies --- one for
each of Alice, Bob and Carol --- there is a unique outcome of extended hands:
the game is playable.  However, the game cannot start.

{Hence a game can be well-posed (playable), but yet miss the crucial
  feature of being implementable in practice.
  Fortunately, Witsenhausen provides in~\cite{Witsenhausen:1971a} sufficient
  conditions (causality) to rule out such (pathological) cases. }
\bigskip

Witsenhausen's intrinsic model deals with agents, information and strategies,
but not with players and preferences.
We now turn to extending the Witsenhausen's intrinsic model to games.

\section{Games in product form}
\label{Games_in_product_form}

We are now ready to embed Witsenhausen's intrinsic model
into game theory.
In~\S\ref{Definition_of_a_game_in_product_form},
we introduce a formal definition of a game in product form (W-game).
In~\S\ref{Mixed_and_behavioral_strategies},  
we define mixed and behavioral strategies in the spirit of
Aumann~\cite{Aumann:1964}.

\subsection{Definition of a game in product form (W-game)}
\label{Definition_of_a_game_in_product_form}

We introduce a formal definition of a game in product form (W-game).

\begin{definition}
  A \emph{W-game}
  \( \Bp{ 
    \bp{ 
      \sequence{\AGENT^{\player}}{\player \in \PLAYER}, 
      \np{\Omega, \tribu{\NatureField}},
      \sequence{\CONTROL_{\agent}, \tribu{\Control}_{\agent},
        \tribu{\Information}_{\agent}}{\agent \in \AGENT }}, 
    (\preference^{\player})_{\player \in \PLAYER}
  }
  \),
  or a \emph{game in product form},
  is made of 
  \begin{itemize}
  \item
    a set~\( \AGENT \) of \emph{agents} with a partition
    \( \sequence{\AGENT^{\player}}{\player \in \PLAYER} \),
    where~$\PLAYER$ is the set of \emph{players};
    each subset~$\AGENT^{\player}$ is interpreted as 
    the subset of executive agents of the 
    \emph{player} \( \player \in \PLAYER \);
  \item 
    a W-model (called underlying W-model)
    \( \bp{ 
      \AGENT,
      \np{\Omega, \tribu{\NatureField}}, 
      \sequence{\CONTROL_{\agent}, \tribu{\Control}_{\agent},
        \tribu{\Information}_{\agent}}{\agent \in \AGENT} } \),
    as in Definition~\ref{de:W-model};
  \item   { for each player $\player \in \PLAYER$,
      a preference\footnote{%
        As a matter of fact, we do not need a preference relation for the results
        in this paper.}
      relation~$\preference^{\player}$ 
      on 
      \(\Delta (\HISTORY,\tribu{\History})\), the set of probability distributions on $\HISTORY$}.
  \end{itemize}
  Let  $\player \in \PLAYER$ be a player.
  A W-game is said to be 
  \emph{playable 
    (resp. playable and partially measurable \wrt~$\player$)},
  if the underlying W-model 
  is playable as in Definition~\ref{de:playability} 
  (resp. playable and partially measurable \wrt~$\AGENT^\player$ as in  
  Definition~\ref{de:playability_measurability}).
  \label{de:W-game}
\end{definition}

A \emph{finite W-game} is a W-game whose underlying W-model is finite. 
In a W-game, the family \( \sequence{\AGENT^{\player}}{\player \in \PLAYER}
\) consists of pairwise disjoint nonempty sets     
whose union is $\AGENT= \bigcup_{\player \in \PLAYER} \AGENT^{\player}$.
When we focus on a specific player~\( \player \in \PLAYER \),
we denote \( \AGENT^{-\player}=
\bigcup_{\playerbis \in \PLAYER\setminus\na{\player}}
\AGENT^{\playerbis} \).
In what follows, agents appear as lower indices and
(most of the time) players as upper indices. 

With the above definition, we cover
(like in~\cite{Blume-Brandenburger-Dekel:1991})
the most traditional preference relation~$\preference^{\player}$,
which is the numerical \emph{expected utility} preference.
In this latter, each player~$\player \in \PLAYER$
is endowed, on the one hand, with a \emph{criterion} (payoff, objective function), that is,
a measurable function\footnote{%
  {See Footnote~\ref{ft:mapping_vs_function} regarding why we use the
    term ``function'' here as the codomain is numerical.}}
$\Criterion^\player: (\HISTORY, \tribu{\History}) 
\rightarrow [-\infty,+\infty[$
(we include $-\infty$ in the codomain of the criterion as a way to handle
constraints) which is bounded above,
and, on the other hand, with a \emph{belief}, that is,
a probability distribution
\(\probability^\player: \tribu{\NatureField} \rightarrow [0, 1]\) 
over the states of Nature \((\Omega, \tribu{\NatureField})\).
Then, given two measurable mappings
\( S_i : \np{\Omega, \tribu{\NatureField}} \to \Delta\np{\CONTROL_\AGENT, \tribu{\Control}_\AGENT} \),
\(i=1,2\), one says that
\( S_1 \preference^{\player} S_2 \) if
\( \int_{\Omega} \probability^\player\np{d\omega}
\int_{\CONTROL_\AGENT} \Criterion^\player\np{\omega,\control}S_1\np{\omega,\dd\control}
\leq
\int_{\Omega} \probability^\player\np{d\omega}
\int_{\CONTROL_\AGENT}
\Criterion^\player\np{\omega,\control}S_1\np{\omega,\dd\control} \)
where both integrals are well defined
in $[-\infty,+\infty[$ because the function~$\Criterion^{\player}$
is supposed to be bounded above.

{The preference relation~$\preference^{\player}$ need not be over
  probability distributions. 
  This is the case in the \emph{infimal utility} (worst-case) setting,
  where each player~$\player \in \PLAYER$ is only endowed with a criterion
  $\Criterion^\player: (\HISTORY, \tribu{\History}) 
  \rightarrow [-\infty,+\infty]$, not necessarily a measurable function.
  Then, given two mappings \( S_i : \Omega \to \HISTORY \),
  $i=1,2$, not necessarily measurable, one says that
  \( S_1 \preference^{\player} S_2 \) if
  \( \inf_{\omega\in\Omega} \Criterion^\player\bp{S_1(\omega)}
  \leq
  \inf_{\omega\in\Omega} \Criterion^\player\bp{S_2(\omega)} \).
}

Note also that Definition~\ref{de:W-game} can encompass Bayesian games,
by specifying a product structure $\Omega = \Omega^0 \times \prod_{\player \in \PLAYER} \Omega^\player$
--- where one factor~$\Omega^{0}$ may represent chance, and the others~$\Omega^{\player}$ may represent 
types of players --- and a probability on~$\Omega$.

\subsection{Mixed and behavioral strategies}
\label{Mixed_and_behavioral_strategies}

We define mixed and behavioral strategies in the spirit of
Aumann in~\cite{Aumann:1964}, using the vocable of 
A-pure, A-mixed and A-behavioral strategies (with A- as a reference to Aumann). 

For this purpose, for any agent \( \agent\in \AGENT \),
we denote by \( \bp{\MIXINGDEVICE_{\agent}, \MixingDeviceField_{\agent} } \)
a copy of the Borel space \( \bp{ [0,1], \borel{[0,1]} } \),
by  \( \LebesgueMeasure_{\agent} \) a copy of the Lebesgue measure
on \( \bp{\MIXINGDEVICE_{\agent}, \MixingDeviceField_{\agent} } =
\bp{ [0,1], \borel{[0,1]} } \),
and we also set
\begin{subequations}
  \begin{equation}
    \MIXINGDEVICE^{\player} = \prod_{\agent\in \AGENT^{\player}}\MIXINGDEVICE_{\agent}
    \eqsepv
    \MixingDeviceField^{\player}= \bigotimes_{\agent\in \AGENT^{\player}}\MixingDeviceField_{\agent} 
    \eqsepv
    \LebesgueMeasure^{\player}=\bigotimes_{\agent\in \AGENT^{\player}}
    \LebesgueMeasure_{\agent}
    \eqsepv \forall \player \in \PLAYER
  \end{equation}
  and
  \begin{equation}
    \MIXINGDEVICE = \prod_{\player \in \PLAYER}\MIXINGDEVICE^{\player}
    \eqsepv
    \MixingDeviceField= \bigotimes_{\player \in \PLAYER} \MixingDeviceField^{\player}
    \eqsepv
    \LebesgueMeasure=\bigotimes_{\player \in \PLAYER} \LebesgueMeasure^{\player}
    \eqfinp  
  \end{equation}
  \label{eq:MIXINGDEVICE_LebesgueMeasure}
\end{subequations}
The existence of a product probability space $( \MIXINGDEVICE, \MixingDeviceField, \LebesgueMeasure)$, 
that is, the existence of a product space $\MIXINGDEVICE$ equipped with a product $\sigma$-algebra 
$\MixingDeviceField$ and a probability measure $\LebesgueMeasure$ 
with $\LebesgueMeasure_{\agent}$ as marginal probability for each agent $\agent \in \AGENT$ is developed in 
\cite[\S15.6]{Aliprantis-Border:2006} and is, in the case we consider, 
a consequence of the Kolmogorov extension theorem. 

\begin{subequations}
  \begin{definition}
    For the player~\( \player \in \PLAYER \),
    an \emph{A-mixed strategy} is a family
    \( \astrategyplayer=\sequence{\astrategyplayer_{\agent}}{\agent\in \AGENT^{\player}} \) of measurable mappings
    \begin{equation}
      \astrategyplayer_{\agent} : \bp{ \prod_{\bgent\in \AGENT^{\player}}\MIXINGDEVICE_{\bgent} \times \HISTORY,
        \bigotimes_{\bgent\in \AGENT^{\player}}\MixingDeviceField_{\bgent} \otimes
        \tribu{\Information}_{\agent}} 
      \to \np{\CONTROL_{\agent},\tribu{\Control}_{\agent}}
      \eqsepv \forall \agent\in \AGENT^{\player}
      \eqfinv
      \label{eq:Aumann_mixed_strategy}
    \end{equation}
    an \emph{A-behavioral strategy} is an A-mixed strategy
    \( \astrategyplayer=\sequence{\astrategyplayer_{\agent}}{\agent\in \AGENT^{\player}} \) 
    with the property that 
    \begin{equation}
      \npConverse{\astrategyplayer_{\agent}}\np{\tribu{\Control}_{\agent}} \subset
      \bp{  \MixingDeviceField_{\agent} \otimes
        \bigotimes_{\bgent\in \AGENT^{\player}\setminus\na{\agent}}
        \na{\emptyset,\MIXINGDEVICE_{\bgent}} } \otimes
      \tribu{\Information}_{\agent}
      \eqsepv \forall \agent\in \AGENT^{\player}
      \eqfinv
      \label{eq:Aumann_behavioral_strategy}
    \end{equation}
    and an \emph{A-pure strategy} 
    is an A-mixed strategy
    \( \astrategyplayer=\sequence{\astrategyplayer_{\agent}}{\agent\in \AGENT^{\player}} \) 
    with the property that 
    \begin{equation}
      \npConverse{\astrategyplayer_{\agent}}\np{\tribu{\Control}_{\agent}} \subset
      \bigotimes_{\bgent\in \AGENT^{\player}}
      \na{\emptyset,\MIXINGDEVICE_{\bgent}} 
      \otimes\tribu{\Information}_{\agent} 
      \eqsepv \forall \agent\in \AGENT^{\player}
      \eqfinp
      \label{eq:Aumann_pure_strategy}
    \end{equation}
    An \emph{A-mixed strategy profile} is a family
    \( \astrategyall = \sequence{\astrategyplayer}{\player \in \PLAYER} \)
    of A-mixed strategies. 
    \label{de:Aumann_strategy}
  \end{definition}
\end{subequations}
By definition, A-behavioral strategies form a subset of A-mixed strategies.
Equation~\eqref{eq:Aumann_behavioral_strategy} means that, for any
agent~$\agent$ and any fixed
configuration \( \history \in \HISTORY \),
\( \astrategyplayer_{\agent}\np{\mixingdevice^{\player},\history} \)
only depends on the randomizing component~\( {\mixingdevice}_{\agent}
\). Thus, under the product probability distribution
\( \LebesgueMeasure^{\player}=\bigotimes_{\agent\in \AGENT^{\player}}
\LebesgueMeasure_{\agent} \) in~\eqref{eq:MIXINGDEVICE_LebesgueMeasure}, the random variables
\( \sequence{\astrategyplayer_{\agent}\np{\cdot,\history}}{\agent\in
  \AGENT^{\player}} \) are independent.
In other words, an A-behavioral strategy is an A-mixed strategy in which the 
randomization is made independently, agent by agent,
for each fixed configuration \( \history \in \HISTORY \).
An A-pure strategy is an A-mixed strategy in which there is no randomization,
hence can be identified with a pure W-strategy as in
Definition~\ref{de:W-strategy}.

The connection between A-mixed strategies profiles and 
pure W-strategies profiles, as in~\eqref{eq:W-strategy_profile}, 
is as follows: if
\( \astrategyplayer=\sequence{\astrategyplayer_{\agent}}{\agent\in \AGENT^{\player}} \) 
is an A-mixed strategy~\eqref{eq:Aumann_mixed_strategy}, then every mapping
\begin{equation*}
  \astrategyplayer_{\agent}\np{\mixingdevice^{\player},\cdot}
  : \np{\HISTORY,\tribu{\Information}_{\agent}} 
  \to \np{\CONTROL_{\agent},\tribu{\Control}_{\agent}}
  \eqsepv \history \mapsto \astrategyplayer_{\agent}\np{\mixingdevice^{\player},\history}
  \eqsepv  \forall \mixingdevice^{\player}=
  \sequence{\mixingdevice_{\bgent}}{\bgent\in \AGENT^{\player}}
  \in \MIXINGDEVICE^{\player} = \prod_{\bgent\in \AGENT^{\player}}\MIXINGDEVICE_{\bgent}
\end{equation*}
belongs to \( \WSTRATEGY_{\agent} \) (see~\eqref{eq:pure_W-strategy}), for \( \agent\in \AGENT^{\player} \),
and thus \( 
\sequence{ \astrategyplayer_{\agent}\np{\mixingdevice^{\player},\cdot} }%
{\agent\in \AGENT^{\player}} \in \WSTRATEGY^{\player} = 
\prod_{\agent \in \AGENT^{\player}} \WSTRATEGY_{\agent} \).
In the same way, an A-mixed strategy profile
\( \astrategyall = \sequence{\astrategyall^{\player}}{\player \in \PLAYER} \)
induces, for any \( \mixingdevice\in\MIXINGDEVICE \), 
a mapping \( \astrategyall\np{\mixingdevice,\cdot}
\in \WSTRATEGY= \prod_{\agent \in \AGENT} \WSTRATEGY_{\agent} \)
in~\eqref{eq:W-STRATEGY}.
\medskip

Consider a playable W-model (see Definition~\ref{de:playability}), and a profile
\( \astrategyall = \sequence{\astrategyall^{\player}}{\player \in \PLAYER} \) of
A-mixed strategies. For any $ \mixingdevice \in \MIXINGDEVICE$,
$\astrategyall\np{\mixingdevice,\cdot}$ is a pure strategy and
$\SolutionMap_{\astrategyall\np{\mixingdevice,\cdot} }\np{\omega}$ is well
defined by playability  We use the following compact notation for the solution
map as in~\eqref{eq:solution_map}:
\begin{align}
  \AumannSolutionMap{\omega}{\astrategyall}{\mixingdevice}
  &=
    \SolutionMap_{\astrategyall\np{\mixingdevice,\cdot} }\np{\omega}
    \eqsepv \forall \omega\in\Omega
    \eqsepv \forall \mixingdevice \in \MIXINGDEVICE
    \eqfinp
    \label{eq:AumannSolutionMap}
\end{align}

As we introduce A-mixed strategies, we need to adapt the definition 
of solvable measurable (SM) property in~\cite{Witsenhausen:1971a}.
To stress the difference, the notion below is for W-games
(to distinguish it from a possible definition for W-models inspired by the SM
property in~\cite{Witsenhausen:1971a}).

\begin{definition}
  \label{de:W-model_playable_measurable} 
  We say that a W-game is \emph{playable and measurable}
  if, for any profile \( \astrategyall = \sequence{\astrategyall^{\player}}{\player \in \PLAYER} \)
  of A-mixed strategies, the following mapping is measurable
  \begin{equation}
    \ReducedAumannSolutionMap{}{\astrategyall} : 
    \np{
      \product{\Omega}{\MIXINGDEVICE},\oproduct{\tribu{\NatureField}}{\MixingDeviceField} }
    \to \np{\HISTORY,\tribu{\History}}
    \eqsepv
    \couple{\omega}{\mixingdevice}     \mapsto
    \AumannSolutionMap{\omega}{\astrategyall}{\mixingdevice}
    \eqfinv
    \label{eq:ReducedAumannSolutionMap}
  \end{equation}
  where $\AumannSolutionMap{\omega}{\astrategyall}{\mixingdevice}$
  is defined in Equation~\eqref{eq:AumannSolutionMap}. 
  In that case, for any probability~$\nu$ on
  \( \np{\Omega, \tribu{\NatureField}} \), 
  we denote by 
  \begin{equation}
    \QQ^{\nu}_{\astrategyall} 
    =
    \QQ^{\nu}_{\np{\astrategyplayer}_{\player \in \PLAYER}} 
    = 
    \Bp{\oproduct{\nu}{\bp{\bigotimes_{\player \in \PLAYER}\LebesgueMeasure^{\player}}}}
    \circ \npConverse{ \ReducedAumannSolutionMap{}{\np{\astrategyplayer}_{\player \in \PLAYER}} }
    =  \np{ \oproduct{\nu}{\LebesgueMeasure} }
    \circ \npConverse{ \ReducedAumannSolutionMap{}{\astrategyall} }
    \label{eq:push_forward_probability}
  \end{equation}
  the pushforward probability, on the space \( \np{\HISTORY,\tribu{\History}} \),
  of the product probability distribution~\(\oproduct{\nu}{\LebesgueMeasure}=
  \oproduct{\nu}{\bp{\bigotimes_{\player \in \PLAYER}\LebesgueMeasure^{\player}}} \)
  on~\( \product{\Omega}{\MIXINGDEVICE}=\product{\Omega}{\bp{\prod_{\player \in \PLAYER} \MIXINGDEVICE^{\player}}} \)
  by the mapping~\(  \ReducedAumannSolutionMap{}{\astrategyall} \) in~\eqref{eq:AumannSolutionMap}.
\end{definition}

Of course, a playable finite W-game is always playable and measurable.

\section{Kuhn's equivalence theorem}
\label{Kuhn_Theorem}

{%
  In this section, we provide, for games in product form, a statement and a proof of the 
  celebrated Kuhn's equivalence theorem: when a player satisfies perfect recall, 
  for any mixed W-strategy, there is an equivalent behavioral strategy
  (and the converse).
  We start by  adapting, in~\S\ref{Perfect_recall}, the definition of perfect recall to games in
  product forms and by illustrating the soundness of this new definition
  with Proposition~\ref{prop:partial_causality}.}
Then, in~\S\ref{Main_results}, we outline the main results. 

\subsection{Perfect recall in W-games}
\label{Perfect_recall} 

For any agent~\( \agent\in\AGENT \), 
we define the \emph{choice field}~\( \tribu{\Choice}_{\agent} 
\subset \tribu{\History} \) {as the
  least upper bound of the action\footnote{%
    As indicated after the definition~\eqref{eq:sub_control_field_BGENT},
    we (abusively) identify
    \( \tribu{\Control}_{\na{\agent}} = 
    \tribu{\Control}_\agent \otimes
    \bigotimes \limits_{\bgent \neq \agent} \{ \emptyset, \CONTROL_{\bgent} \}
    \) with \( \tribu{\Control}_\agent \).
    \label{ft:(abusively)}}
  field~\( \tribu{\Control}_{\agent} \) 
  and of the information field~\( \tribu{\Information}_{\agent} \), namely}
\begin{equation}
  \tribu{\Choice}_{\agent} =
  \tribu{\Control}_{\agent} \bigvee \tribu{\Information}_{\agent}
  \eqsepv \forall \agent \in \AGENT
  \eqfinp
  \label{eq:ChoiceField}
\end{equation}
Thus defined, the choice field of agent~$\agent$ contains both what the agent did
($\tribu{\Control}_{\agent}$ identified with \( \tribu{\Control}_{\na{\agent}} \))
and what he knew ($\tribu{\Information}_{\agent}$) when
{taking a decision}.
{As formulated,}
our definition is close to
the notion of choice in \cite[Definition~4.1]{Alos-Ferrer-Ritzberger:2016}.

We consider a focus player~\( \player \in \PLAYER \) and 
we suppose that the set~\( \AGENT^{\player} \) of her executive agents is
finite\footnote{%
  We make this finiteness assumption because our proof of Kuhn's equivalence
  Theorem~\ref{th:KET} relies on a finite induction.}
with cardinality~\( \cardinal{\AGENT^{\player}} \).
For any $k \in \ic{1,\cardinal{\AGENT^{\player}}}$, let $\ORDER^{\player}_k$ denote the set of 
\emph{$k$-orderings of player~$\player$}, that is, injective
mappings from $\ic{1, k}$ to $\AGENT^{\player}$:
\begin{subequations}
  \begin{equation}
    \ORDER^{\player}_k=\defset{ \kappa: \ic{1,k} \to \AGENT^{\player} }%
    { \kappa \mtext{ is an injection} }
    \eqfinp 
    \label{eq:ORDER^player_k}
  \end{equation}
  We define the \emph{set of orderings of player~$\player$},
  shortly \emph{set of $\player$-orderings}, by
  \begin{equation}
    \ORDER^{\player}= \bigcup_{ k=1}^{\cardinal{\AGENT^{\player}}} \ORDER^{\player}_k
    \eqfinp 
    \label{eq:ORDER_player}
  \end{equation}  
\end{subequations}
The set \( \ORDER^{\player}_{\cardinal{\AGENT^{\player}}} \)
is the set of \emph{total orderings of player~$\player$},
shortly \emph{total $\player$-orderings}, of
agents in~$\AGENT^{\player}$, that is, bijective
mappings from $\ic{1,\cardinal{\AGENT^{\player}}}$ to $\AGENT^{\player}$
(in contrast with \emph{$\player$-partial orderings} in~$\ORDER^{\player}_k$ for $k < \cardinal{\AGENT^{\player}}$).
For any $k \in \ic{1, \cardinal{\AGENT^{\player}}}$, any $\player$-ordering $\kappa \in \ORDER^{\player}_{k}$,
and any $\LocalIndexbis \in \ic{1, k} $, \( \kappa_{\vert \ic{1,\LocalIndexbis}}  \in
\ORDER^{\player}_{\LocalIndexbis} \)
is the restriction of the $\player$-ordering~$\kappa$ to the first $\LocalIndexbis$~integers.
For any $k \in \ic{1,\cardinal{\AGENT^{\player}}}$,
there is a natural mapping 
\begin{align}
  \cut_k: \ORDER^{\player}_{\cardinal{\AGENT^{\player}}} \to \ORDER^{\player}_{k} 
  \eqsepv 
  \totalordering \mapsto \totalordering_{\vert \ic{1,k} } 
  \eqfinv 
  \label{eq:cut}
\end{align}
which is the restriction of any (total) $\player$-ordering of~$\AGENT^{\player}$ 
to~$\ic{1,k}$.
For any \( k \in \ic{1,\cardinal{\AGENT^{\player}}} \), 
we define
the \emph{range} $\range{\kappa}$ of the $\player$-ordering~$\kappa \in \ORDER^{\player}_k$
as the subset of agents
\begin{subequations}
  \begin{align}
    \range{\kappa}
    &=
      \ba{ \kappa(1), \ldots, \kappa(k) }
      \subset \AGENT^{\player}
      \eqsepv \forall \kappa \in \ORDER^{\player}_k
      \eqfinv
      \label{range_kappa}
      \intertext{the \emph{cardinality} $\cardinal{\kappa}$ of the $\player$-ordering~$\kappa \in \ORDER^{\player}_k$ as
      the integer}
      \cardinal{\kappa}
    &=k
      \in \ic{1, \cardinal{\AGENT^{\player}}}
      \eqsepv \forall \kappa \in \ORDER^{\player}_k
      \eqfinv
      \intertext{the \emph{last element} $\LastElement{\kappa}$ of the $\player$-ordering~$\kappa \in \ORDER^{\player}_k$ as the agent}
      \LastElement{\kappa}
    &=\kappa(k)
      \in \AGENT^{\player}
      \eqsepv \forall \kappa \in \ORDER^{\player}_k
      \eqfinv
      \label{LastElement_kappa}
      \intertext{the \emph{first elements} $\FirstElements{\kappa}$
      as the restriction of the $\player$-ordering~$\kappa \in \ORDER^{\player}_k$ to the first $k{-}1$ elements}
      \FirstElements{\kappa}
    &= \kappa_{\vert \ic{1,k{-}1}} \in \ORDER^{\player}_{k-1}
      \eqsepv \forall \kappa \in \ORDER^{\player}_k
      \eqfinv
      \label{FirstElements_kappa}
  \end{align}  
\end{subequations}
with the convention that \( \FirstElements{\kappa}=\emptyset \in  \ORDER^{\player}_0 = \{
\emptyset \} \)
when $\kappa \in \ORDER^{\player}_1$.
With obvious notation, any $\player$-ordering $\kappa \in \ORDER^{\player} $ 
can be written as $\kappa = \np{\FirstElements{\kappa}, \LastElement{\kappa}}$,
with the convention that 
$\kappa = \np{\LastElement{\kappa}}$ when $\kappa \in \ORDER_1^{\player}$.

The following notion of configuration-ordering is adapted from
\cite[Property~C, p.~153]{Witsenhausen:1971a}. 

\begin{definition}
  We consider a focus player~\( \player \in \PLAYER \) and 
  we suppose that the set~\( \AGENT^{\player} \) of her executive agents is finite.
  A \emph{$\player$-configuration-ordering} is a mapping 
  $\ordering: \HISTORY \to \ORDER^{\player}_{\cardinal{\AGENT^{\player}}}$
  from configurations {to total $\player$-orderings.}
  With any $\player$-configuration-ordering~$\ordering$, 
  and any $\player$-ordering~$\kappa \in \ORDER^{\player}$,
  we associate the subset \( \HISTORY_{\kappa}^{\ordering} \subset \HISTORY \)
  of configurations defined by
  \begin{equation}
    \HISTORY_{\kappa}^{\ordering} =
    \defset{\history \in \HISTORY}{\psi_{\cardinal{\kappa}}\bp{\ordering(\history)} =\kappa}  
    \eqsepv \forall \kappa \in \ORDER^{\player}
    \eqfinp
    \label{eq:HISTORY_k_kappa_player}
  \end{equation}
  By convention, we put 
  \(   \HISTORY_{\emptyset}^{\ordering} = \HISTORY \). 
  \label{de:configuration-ordering_player}
\end{definition}
Thus, the set \( \HISTORY_{\kappa}^{\ordering} \) is made of configurations
along which agents are ordered by~$\kappa$

The following definition of perfect recall is new.

\begin{definition}
  We say that a player \( \player \in \PLAYER \) in a W-model satisfies
  \emph{perfect recall} if
  the set~\( \AGENT^{\player} \) of her executive agents is finite
  and if 
  there exists a $\player$-configuration-ordering
  $\ordering: \HISTORY \to\ORDER^{\player}_{\cardinal{\AGENT^{\player}}}$ such
  that\footnote{%
    When \( \kappa \in \ORDER^{\player}_{1} \), the
    statement~\eqref{eq:PerfectRecall_a} is void.}
  \begin{subequations}
    \begin{equation}
      \HISTORY_{\kappa}^{\ordering} \cap \History \in 
      \tribu{\Information}_{\LastElement{\kappa}}
      \eqsepv \forall \History \in
      \tribu{\Choice}_{\range{\FirstElements{\kappa}}}
      \eqsepv \forall \kappa \in \ORDER^{\player} 
      \eqfinv    
      \label{eq:PerfectRecall_a}
    \end{equation}
    where the subset~$\HISTORY_{\kappa}^{\ordering} \subset \HISTORY$ of configurations 
    has been defined in~\eqref{eq:HISTORY_k_kappa_player}, 
    the last agent~$\LastElement{\kappa}$ in~\eqref{LastElement_kappa},
    the $\player$-ordering~$\FirstElements{\kappa}$ in~\eqref{FirstElements_kappa},
    the set~\( \ORDER^{\player} \) in~\eqref{eq:ORDER_player}, 
    and where\footnote{%
      See Footnote~\ref{ft:(abusively)} for the abuse of notation for~$\tribu{\Control}_{\agent}$.}
    \begin{equation}
      \tribu{\Choice}_{\range{\FirstElements{\kappa}}} =
      \bigvee \limits_{\agent \in \range{\FirstElements{\kappa}}}\tribu{\Choice}_{\agent}
      =
      \bigvee \limits_{\agent \in \range{\FirstElements{\kappa}}}
      \tribu{\Control}_{\agent} \vee \tribu{\Information}_{\agent}
      \subset \tribu{\History}
      \eqfinp
      \label{eq:PastChoiceField}
    \end{equation}    
    \label{eq:PerfectRecall}
  \end{subequations}
  \label{de:PerfectRecall}
\end{definition}
Under perfect recall, we will use the property that \( \HISTORY_{\kappa}^{\ordering}
\in  \tribu{\Information}_{\LastElement{\kappa}} \),
by~\eqref{eq:PerfectRecall} with \( \History=\HISTORY\).

We interpret the above definition as follows.
{A player satisfies perfect recall if each of her agents, when called upon
  to move last at a given ordering, remembers everything that his predecessors
  (according to the ordering), who belong to the same player,
  knew ($\tribu{\Information}_{\agent}$) and did
  ($\tribu{\Control}_{\agent}$ identified with \( \tribu{\Control}_{\na{\agent}}
  \)).}

This definition 
is very close in spirit to the
definitions proposed in~\cite[Definition 203.3]{osborne1994course}, \cite{Aumann:1964}
and~\cite{Schwarz:1974}, that rely on ``recording'' or ``recall'' functions
(whereas~\eqref{eq:PerfectRecall} involves $\sigma$-fields).
To illustrate the definition, let us revisit Alice and Bob examples
in~\S\ref{Examples}. If we consider that Alice and Bob are agents of the same
player, then perfect recall is satisfied in the
second case (one acting after another as in Figures~\ref{fig:second} and~\ref{fig:second_bis}) and
third case (acting after the Nature's move as in Figures~\ref{fig:third} and~\ref{fig:third_bis}), 
but not in the first case (acting simultaneously as in Figures~\ref{fig:first} and~\ref{fig:first_bis})
because neither Alice nor Bob knows which action the other made.

We are going to show, in Proposition~\ref{prop:partial_causality} to come, that 
perfect recall implies the existence of a temporal ordering of the agents of the focus player.
For this purpose, we introduce 
the following definition of partial causality, inspired by
the property of causality in \cite[Property~C, p.~153]{Witsenhausen:1971a}
(and slightly generalized in~\cite[p.~324]{Witsenhausen:1975}). 
For any player \( \player \in \PLAYER \), we set the fields
\begin{equation}
  \tribu{\History}_\Bgent^{\player}
  = \tribu{\NatureField}
  \otimes \bigotimes \limits_{\bgent \in \Bgent} \tribu{\Control}_\bgent 
  \otimes
  \bigotimes \limits_{\agent \in \AGENT^{\player}\setminus \Bgent}  \{ \emptyset, \CONTROL_{\agent} \}
  \otimes
  \bigotimes \limits_{\cgent \not\in \AGENT^{\player}} \tribu{\Control}_\cgent
  \subset \tribu{\History}
  \eqsepv \forall \Bgent \subset \AGENT^{\player}
  \eqfinv
  \label{eq:sub_history_field_player}
\end{equation}
{which represents the knowledge of the actions of all agents, except those 
  in~\( \AGENT^{\player}\setminus \Bgent \).}

\begin{definition}
  We say that a player \( \player \in \PLAYER \) in a W-model
  satisfies \emph{partial causality} if the set~\( \AGENT^{\player} \) of her executive agents is finite
  and if there exists a
  $\player$-configuration-ordering
  $\ordering: \HISTORY
  \to\ORDER^{\player}_{\cardinal{\AGENT^{\player}}}$ such that
  \begin{equation}
    \HISTORY_{\kappa}^{\ordering} \cap \History \in 
    \tribu{\History}_{\range{\FirstElements{\kappa}}}^{\player}
    \eqsepv \forall \History \in
    \tribu{\Information}_{\LastElement{\kappa}}
    \eqsepv \forall \kappa \in \ORDER^{\player} 
    \eqfinv    
    \label{eq:PartialCausality}
  \end{equation}
  where the subset~$\HISTORY_{\kappa}^{\ordering} \subset \HISTORY$ of configurations 
  has been defined in~\eqref{eq:HISTORY_k_kappa_player}, 
  the last agent~$\LastElement{\kappa}$ in~\eqref{LastElement_kappa},
  the $\player$-ordering~$\FirstElements{\kappa}$ in~\eqref{FirstElements_kappa},
  the set~\( \ORDER^{\player} \) in~\eqref{eq:ORDER_player}, 
  and \( \tribu{\History}_{\range{\FirstElements{\kappa}}}^{\player} \)
  in~\eqref{eq:sub_history_field_player}.
  When \( \kappa \in \ORDER^{\player}_{1} \), 
  \( \tribu{\History}_{\range{\FirstElements{\kappa}}}^{\player}=
  \tribu{\History}_{\emptyset}^{\player} =
  \tribu{\NatureField}
  \otimes
  \bigotimes \limits_{\agent \in \AGENT^{\player}}  \{ \emptyset, \CONTROL_{\agent} \}
  \otimes
  \bigotimes \limits_{\cgent \not\in \AGENT^{\player}} \tribu{\Control}_\cgent
  = \tribu{\NatureField} \otimes \tribu{\Control}_{\AGENT^{-\player}}
  \). 
  \label{de:PartialCausality}
\end{definition}
Intuitively, the information of the last agent (in a partial ordering) cannot
depend on the actions of agents with greater order. 

The following Lemma~\ref{lem:PartialCausality_property} will be instrumental in
the coming proofs.

\begin{lemma}
  Suppose that player~$\player \in \PLAYER$ satisfies partial causality
  with $\player$-configuration-ordering
  $\ordering: \HISTORY \to\ORDER^{\player}$.
  Let \( \kappa \in \ORDER^{\player} \) be a $\player$-ordering.
  Then, for any integer \( j \in \ic{1,\cardinal{\kappa}} \)
  and for any \(\tribu{\Information}_{\kappa\np{j}} \)-measurable mapping
  \( \BigFunctionZ : \np{\HISTORY,\tribu{\History}}
  \to \np{\mathbb{\BigFunctionZ},\tribu{\BigFunctionZ}} \) --- 
  where $\mathbb{\BigFunctionZ}$ is a set\footnote{See Footnote~\ref{ft:relative_integers}}
  and where the $\sigma$-field~$\tribu{\BigFunctionZ}$ contains the singletons ---
  we have the property that
  \begin{subequations}
    \begin{align}
      \history' \in\HISTORY \eqsepv
      &\history \in
        \HISTORY_{\kappa\np{1},\ldots,\kappa\np{j-1}}^{\ordering}
        \eqsepv 
        \np{\history_{\emptyset},\history_{\AGENT^{-\player}},\history_{\kappa\np{1}},
        \ldots, \history_{\kappa\np{j{-}1}}} =
        \np{\history'_{\emptyset},\history'_{\AGENT^{-\player}},\history'_{\kappa\np{1}},
        \ldots, \history'_{\kappa\np{j{-}1}}}
        \nonumber \\
      &\implies
        \history' \in \HISTORY_{\kappa\np{1},\ldots,\kappa\np{j-1}}^{\ordering}
        \mtext{ and }
        \BigFunctionZ\np{\history'}= \BigFunctionZ\np{\history}
        \eqfinv 
        \label{eq:PartialCausality_property_2}
    \end{align}
    which we shortly denote by 
    \begin{equation}
      \BigFunctionZ\np{\history}=
      \BigFunctionZ\np{\history_{\emptyset},\history_{\AGENT^{-\player}},\history_{\kappa\np{1}},
        \ldots, \history_{\kappa\np{j{-}1}}} \eqsepv
      \forall \history  \in
      \HISTORY_{\kappa\np{1},\ldots,\kappa\np{j-1}}^{\ordering}
      \eqfinv
      \label{eq:PartialCausality_property}
    \end{equation}
  \end{subequations}
  where the right-hand side means the common value
  \( \BigFunctionZ\np{\history_{\emptyset},\history_{\AGENT^{-\player}},\history_{\kappa\np{1}},
    \ldots,
    \history_{\kappa\np{j{-}1}},\history'_{\AGENT^{\player}\setminus\na{\kappa\np{1},\ldots,\kappa\np{j{-}1}}} }
  \) for any
  \(
  \history'_{\AGENT^{\player}\setminus\na{\kappa\np{1},\ldots,\kappa\np{j{-}1}}}
  \).
  \label{lem:PartialCausality_property}
\end{lemma}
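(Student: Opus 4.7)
The plan is to derive both conclusions --- that $\history'$ again belongs to $\HISTORY^{\ordering}_{\kappa\np{1},\ldots,\kappa\np{j-1}}$ and that $\BigFunctionZ\np{\history'}=\BigFunctionZ\np{\history}$ --- from two applications of partial causality~\eqref{eq:PartialCausality}, exploiting in an essential way the assumption that the target $\sigma$-field $\tribu{\BigFunctionZ}$ contains the singletons.

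For the containment claim, I would apply~\eqref{eq:PartialCausality} to the ordering $\kappa_{\vert \ic{1,j-1}}$ with the trivial event $\History=\HISTORY\in \tribu{\Information}_{\kappa\np{j-1}}$. This yields $\HISTORY^{\ordering}_{\kappa_{\vert \ic{1,j-1}}} \in \tribu{\History}^{\player}_{\range{\kappa_{\vert \ic{1,j-2}}}}$, so that membership in $\HISTORY^{\ordering}_{\kappa\np{1},\ldots,\kappa\np{j-1}}$ depends only on the coordinates $\np{\history_\emptyset, \history_{\AGENT^{-\player}}, \history_{\kappa\np{1}}, \ldots, \history_{\kappa\np{j-2}}}$. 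The agreement hypothesis, which in addition fixes $\history_{\kappa\np{j-1}}$, a fortiori forces $\history'$ into the set.

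For the equality of values, the singleton-measurability is the crucial ingredient: for every $z \in \mathbb{\BigFunctionZ}$ the level set $\BigFunctionZ^{-1}\np{\{z\}}$ belongs to $\tribu{\Information}_{\kappa\np{j}}$. I would then apply~\eqref{eq:PartialCausality} to the one-step-longer ordering $\kappa_{\vert \ic{1,j}}$ --- whose last element is precisely $\kappa\np{j}$ --- with $\History=\BigFunctionZ^{-1}\np{\{z\}}$, obtaining $\HISTORY^{\ordering}_{\kappa_{\vert \ic{1,j}}} \cap \BigFunctionZ^{-1}\np{\{z\}} \in \tribu{\History}^{\player}_{\range{\kappa_{\vert \ic{1,j-1}}}}$. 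This is precisely the statement that $\BigFunctionZ$, restricted to the smaller set $\HISTORY^{\ordering}_{\kappa_{\vert \ic{1,j}}}$, depends only on the coordinates listed on the right-hand side of the compact formula~\eqref{eq:PartialCausality_property}.

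The main obstacle is to bridge the gap between the hypothesis --- which only places $\history$ in the larger set $\HISTORY^{\ordering}_{\kappa_{\vert \ic{1,j-1}}}$ --- and the conclusion of the previous paragraph, which was established on the narrower $\HISTORY^{\ordering}_{\kappa_{\vert \ic{1,j}}}$. To close this gap, I would partition $\HISTORY^{\ordering}_{\kappa_{\vert \ic{1,j-1}}}$ according to the identity of the $j$-th agent of $\ordering\np{\history}$ and use one more application of~\eqref{eq:PartialCausality}, with trivial event on each augmented ordering $\np{\kappa_{\vert \ic{1,j-1}},\agent}$, to see that every cell of this partition belongs to $\tribu{\History}^{\player}_{\range{\kappa_{\vert \ic{1,j-1}}}}$. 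The agreement hypothesis then places $\history$ and $\history'$ in a common cell, on which the level-set computation of the previous paragraph delivers $\BigFunctionZ\np{\history'}=\BigFunctionZ\np{\history}$; the entire argument thus reduces to invoking~\eqref{eq:PartialCausality} several times for suitable choices of ordering and event, together with standard manipulations of product $\sigma$-fields.
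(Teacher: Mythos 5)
Your first two steps are correct and use exactly the ingredients that the paper's argument relies on: the whole-space instance of~\eqref{eq:PartialCausality} for the ordering $\kappa_{\vert\ic{1,j-1}}$ shows that $\HISTORY_{\kappa\np{1},\ldots,\kappa\np{j-1}}^{\ordering}$ is a cylinder in the coordinates $\np{\history_{\emptyset},\history_{\AGENT^{-\player}},\history_{\kappa\np{1}},\ldots,\history_{\kappa\np{j-2}}}$, which gives the containment claim (for $j\geq 2$; the case $j=1$ is covered by the convention $\HISTORY_{\emptyset}^{\ordering}=\HISTORY$), and the singleton assumption on~$\tribu{\BigFunctionZ}$ makes every level set $\Converse{\BigFunctionZ}\np{\na{z}}$ an admissible event $\History\in\tribu{\Information}_{\kappa\np{j}}$ in~\eqref{eq:PartialCausality}, so that on the cell $\HISTORY_{\kappa\np{1},\ldots,\kappa\np{j}}^{\ordering}$ the map~$\BigFunctionZ$ indeed factors through the listed coordinates.

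The gap is in your final bridging step. Each cell $\HISTORY_{\kappa\np{1},\ldots,\kappa\np{j-1},\agent}^{\ordering}$ is a cylinder in the agreed coordinates, so $\history$ and $\history'$ do share a cell; but the level-set computation of your second paragraph is available only on the single cell with $\agent=\kappa\np{j}$. For a cell with $\agent\neq\kappa\np{j}$, the instance of~\eqref{eq:PartialCausality} attached to the ordering $\np{\kappa_{\vert\ic{1,j-1}},\agent}$ quantifies over $\History\in\tribu{\Information}_{\agent}$ and gives no information about $\Converse{\BigFunctionZ}\np{\na{z}}\in\tribu{\Information}_{\kappa\np{j}}$ intersected with that cell, so nothing in your argument forces $\BigFunctionZ$ to be constant along the fiber there. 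The failure is genuine, not merely a missing justification: take one player with three agents $\agent,\bgent,\agentter$, binary actions, trivial Nature, $\ordering\np{\history}=\np{\agent,\bgent,\agentter}$ if $\history_{\agent}=0$ and $\np{\agent,\agentter,\bgent}$ if $\history_{\agent}=1$, with $\tribu{\Information}_{\agent}$, $\tribu{\Information}_{\bgent}$ trivial and $\tribu{\Information}_{\agentter}$ generated by $\na{\history \,:\, \history_{\agent}=0,\ \history_{\bgent}=1}$; one checks~\eqref{eq:PartialCausality} directly, yet for $\kappa=\np{\agent,\agentter}$, $j=2$ and $\BigFunctionZ=\1_{\na{\history_{\agent}=0,\,\history_{\bgent}=1}}$, the configurations with $\np{\history_{\agent},\history_{\bgent},\history_{\agentter}}=\np{0,1,0}$ and $\np{0,0,0}$ agree on $\history_{\agent}$ but give different values of~$\BigFunctionZ$ --- they lie in the cell whose second mover is $\bgent\neq\agentter$. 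So your route establishes the conclusion only under the stronger hypothesis $\history\in\HISTORY_{\kappa\np{1},\ldots,\kappa\np{j}}^{\ordering}$ (your second paragraph), whereas the paper's own proof passes this point by asserting $\HISTORY_{\kappa\np{1},\ldots,\kappa\np{j-1}}^{\ordering}\cap\Converse{\BigFunctionZ}\bp{\BigFunctionZ\np{\history}}\in\tribu{\History}_{\kappa\np{1},\ldots,\kappa\np{j-1}}^{\player}$ as a direct consequence of~\eqref{eq:PartialCausality} --- exactly the inference that your ``main obstacle'' paragraph correctly identifies as not being a literal instance of the definition, and which your partition-into-cells device does not recover.
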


\begin{proof}
  Suppose that player~$\player$ satisfies partial causality
  with $\player$-configuration-ordering
  $\ordering: \HISTORY \to\ORDER^{\player}$.
  Let \( \kappa \in \ORDER^{\player} \), \( j \in \ic{1,\cardinal{\kappa}} \)
  and  \( \BigFunctionZ : \np{\HISTORY,\tribu{\History}}
  \to \np{\mathbb{\BigFunctionZ},\tribu{\BigFunctionZ}} \) be a \(\tribu{\Information}_{\kappa\np{j}} \)-measurable mapping.
  For any configuration~\( \history \in
  \HISTORY_{\kappa\np{1},\ldots,\kappa\np{j-1}}^{\ordering} \),
  the set \( \Converse{\BigFunctionZ}\bp{\BigFunctionZ\np{\history}} \)
  contains~\( \history \) and belongs to
  \( \tribu{\Information}_{\kappa\np{j}} \), by the measurability assumption on the
  mapping~\( \BigFunctionZ \) and the assumption that the $\sigma$-field~$\tribu{\BigFunctionZ}$ contains the singletons.
  By partial causality~\eqref{eq:PartialCausality},
  we get that \( \HISTORY_{\kappa\np{1},\ldots,\kappa\np{j-1}}^{\ordering}
  \cap \Converse{\BigFunctionZ}\bp{\BigFunctionZ\np{\history}} \in
  \tribu{\History}_{\kappa\np{1},\ldots,\kappa\np{j-1}}^{\player} \).
  By definition~\eqref{eq:sub_history_field_player} of this latter field,
  the set \( \HISTORY_{\kappa\np{1},\ldots,\kappa\np{j-1}}^{\ordering}
  \cap \Converse{\BigFunctionZ}\bp{\BigFunctionZ\np{\history}} \) 
  is a cylinder such that, if \( \history' \in\HISTORY \) and 
  \( \np{\history_{\emptyset},\history_{\AGENT^{-\player}},\history_{\kappa\np{1}},
    \ldots, \history_{\kappa\np{j{-}1}}} =
  \np{\history'_{\emptyset},\history'_{\AGENT^{-\player}},\history'_{\kappa\np{1}},
    \ldots, \history'_{\kappa\np{j{-}1}}} \), then
  \( \history' \in \HISTORY_{\kappa\np{1},\ldots,\kappa\np{j-1}}^{\ordering}
  \cap \Converse{\BigFunctionZ}\bp{\BigFunctionZ\np{\history}} \). 
  Therefore, we have gotten~\eqref{eq:PartialCausality_property_2}. 
\end{proof}

Now, we show that perfect recall implies the existence of a temporal ordering of the agents of the focus player.

\begin{proposition}
  \label{prop:partial_causality}
  In a playable W-model,   
  if a player satisfies perfect recall with some con\-fi\-gu\-ration-ordering,
  then she satisfies
  partial causality with the same configuration-ordering.
\end{proposition}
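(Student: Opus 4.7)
The plan is to combine perfect recall with Proposition~\ref{pr:invariance} and the constant-strategy device afforded by playability. I fix a $\player$-ordering $\kappa \in \ORDER^{\player}$ with $\cardinal{\kappa}=k$ and $\History \in \tribu{\Information}_{\LastElement{\kappa}}$; the goal is $\HISTORY_\kappa^\ordering \cap \History \in \tribu{\History}^{\player}_{\range{\FirstElements{\kappa}}}$.

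First I would establish the central claim that, for every $\agent \in \AGENT^{\player}\setminus\range{\FirstElements{\kappa}}$, we have $\HISTORY_\kappa^\ordering \cap \History \in \tribu{\Information}_\agent$. When $\agent = \LastElement{\kappa}$, this is immediate from perfect recall applied to $\kappa$ with $\History'=\HISTORY$. When $\agent \in \AGENT^{\player}\setminus\range{\kappa}$, I would refine the partition of $\HISTORY_\kappa^\ordering$ according to the position $j \in \ic{k{+}1,\cardinal{\AGENT^{\player}}}$ at which $\agent$ occurs in the total ordering $\ordering(\history)$:
\[
  \HISTORY_\kappa^\ordering
  \;=\; \bigsqcup_{j=k+1}^{\cardinal{\AGENT^{\player}}}\;
  \bigsqcup_{\substack{\kappa' \in \ORDER^{\player}_j \\ \kappa'_{\vert \ic{1,k}}=\kappa,\; \kappa'(j)=\agent}}
  \HISTORY_{\kappa'}^{\ordering}\,.
\]
Each such $\kappa'$ satisfies $\LastElement{\kappa'}=\agent$ and $\kappa(k)=\kappa'(k) \in \range{\FirstElements{\kappa'}}$ (since $j{-}1 \ge k$); hence $\History \in \tribu{\Information}_{\kappa(k)} \subset \tribu{\Choice}_{\kappa(k)} \subset \tribu{\Choice}_{\range{\FirstElements{\kappa'}}}$, so perfect recall for $\kappa'$ gives $\HISTORY_{\kappa'}^{\ordering}\cap\History \in \tribu{\Information}_\agent$, and the finite disjoint union preserves this membership.

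Next I would close the argument by invariance. Given two configurations $\history,\history' \in \HISTORY$ that coincide on coordinates indexed by $\{\emptyset\} \cup \AGENT^{-\player} \cup \range{\FirstElements{\kappa}}$, I enumerate $\AGENT^{\player}\setminus\range{\FirstElements{\kappa}} = \{\agent_1,\ldots,\agent_m\}$ and build constant pure W-strategies profiles $\wstrategy^{(i)}$, $i=0,\ldots,m$, whose component at $\bgent$ equals $\history'_\bgent$ when $\bgent\in\{\agent_1,\ldots,\agent_i\}$ and $\history_\bgent$ otherwise. Playability gives $\SolutionMap_{\wstrategy^{(0)}}(\history_\emptyset)=\history$, $\SolutionMap_{\wstrategy^{(m)}}(\history_\emptyset)=\history'$, and consecutive profiles $\wstrategy^{(i-1)}, \wstrategy^{(i)}$ differ only at $\agent_i$. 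The central claim makes $\chi_{\HISTORY_\kappa^\ordering\cap\History}$ measurable with respect to $\tribu{\Information}_{\agent_i}$ for each $i$, so iterating Proposition~\ref{pr:invariance} yields $\chi_{\HISTORY_\kappa^\ordering\cap\History}(\history)=\chi_{\HISTORY_\kappa^\ordering\cap\History}(\history')$. Together with $\HISTORY_\kappa^\ordering\cap\History\in\tribu{\History}$ (which follows from perfect recall) and the standard measurable-section identification of invariant measurable sets with cylinders, this produces the required membership $\HISTORY_\kappa^\ordering\cap\History \in \tribu{\History}^{\player}_{\range{\FirstElements{\kappa}}}$.

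The hard part will be the central claim when $\agent \neq \LastElement{\kappa}$: perfect recall applied directly to $\kappa$ only delivers $\tribu{\Information}_{\LastElement{\kappa}}$-measurability, whereas the iteration requires $\tribu{\Information}_{\agent_i}$-measurability simultaneously for every agent in $\AGENT^{\player}\setminus\range{\FirstElements{\kappa}}$. The refinement by the position of $\agent$ in $\ordering(\history)$, feeding into perfect recall applied to longer prefix orderings that \emph{extend} $\kappa$ and end with $\agent$, is what resolves this; its correctness hinges on $\kappa(k)$ belonging to $\range{\FirstElements{\kappa'}}$, which is the reason we extend $\kappa$ rather than replace $\kappa(k)$.
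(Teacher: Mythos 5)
Your proof is correct, and its first half is the same combinatorial step as the paper's: decompose $\HISTORY_{\kappa}^{\ordering}$ over the (finitely many, since $\AGENT^{\player}$ is finite) orderings $\kappa'$ that extend $\kappa$ and end at a given agent $\agent\notin\range{\FirstElements{\kappa}}$, and apply perfect recall to each $\kappa'$, the point being that $\LastElement{\kappa}\in\range{\FirstElements{\kappa'}}$ so that $\History\in\tribu{\Information}_{\LastElement{\kappa}}\subset\tribu{\Choice}_{\range{\FirstElements{\kappa'}}}$; this yields $\HISTORY_{\kappa}^{\ordering}\cap\History\in\tribu{\Information}_{\agent}$. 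Where you genuinely diverge is in how you convert this into membership in $\tribu{\History}^{\player}_{\range{\FirstElements{\kappa}}}$. The paper argues by contradiction: it uses the identity $\tribu{\History}^{\player}_{\range{\FirstElements{\kappa}}}=\bigcap_{\bgent\in\AGENT^{\player}\setminus\range{\FirstElements{\kappa}}}\tribu{\History}_{\AGENT\setminus\na{\bgent}}$, needs the $\tribu{\Information}_{\bgent}$-membership only for the single offending agent $\bgent$, and then invokes Witsenhausen's absence-of-self-information theorem for playable W-models ($\tribu{\Information}_{\bgent}\subset\tribu{\History}_{\AGENT\setminus\na{\bgent}}$, cited from \cite{Witsenhausen:1975}) to reach the contradiction. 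You instead prove the membership for \emph{every} agent of $\AGENT^{\player}\setminus\range{\FirstElements{\kappa}}$ and close directly: constant pure W-strategy profiles, a finite telescoping swap agent by agent, Proposition~\ref{pr:invariance} applied to the indicator of $\HISTORY_{\kappa}^{\ordering}\cap\History$, and the standard section/cylinder identification (valid here since the set is in $\tribu{\History}$ and the action sets are nonempty). Your route buys a direct, self-contained argument that never leaves the paper — Proposition~\ref{pr:invariance} plays exactly the role that absence of self-information plays for the authors, both being consequences of playability — at the cost of needing the central claim for all agents outside $\range{\FirstElements{\kappa}}$ and an extra measure-theoretic step; the paper's version is shorter because it borrows the external theorem and the intersection identity. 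Both hinge on the finiteness of $\AGENT^{\player}$ (a standing assumption of the perfect-recall setup), which you use twice: for the finite union over $\kappa'$ and for the finite enumeration $\na{\agent_1,\ldots,\agent_m}$.
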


\begin{proof}
  The proof is by contradiction. 
  We will show that,
  if a player satisfies perfect recall with some con\-fi\-gu\-ration-ordering
  and that she does not satisfy 
  partial causality with the same configuration-ordering, then necessarily
  there would exist an agent~$\bgent
  \in\AGENT^\player$ such that
  \( \tribu{\Information}_{\bgent} \not\subset
  \tribu{\History}_{\AGENT\setminus\na{\bgent}} \) (see Equation~\eqref{eq:sub_history_field_BGENT}).
  Now, as proved in Proposition~\ref{pr:absence_of_self-information}, in a playable
  W-model, all agents satisfy absence of self-information,
  namely any agent~$\agent\in\AGENT$ is such that
  \( \tribu{\Information}_{\agent} \subset
  \tribu{\History}_{\AGENT\setminus\na{\agent}} \).
  Therefore, we will obtain 
  a contradiction as it is assumed that the W-model is playable.

  We now give the details. Using Definition~\ref{de:PerfectRecall} of perfect recall,
  there exists a configuration-ordering
  $\ordering: \HISTORY \to\ORDER^{\player}$ such that~\eqref{eq:PerfectRecall}
  holds true. We suppose that player $\player$ is not partially causal for
  this very configuration-ordering~$\ordering$.  
  Then, it follows from Equation~\eqref{eq:PartialCausality} that there 
  exists $\kappa\in \ORDER^{\player}$ and 
  $\History \in \tribu{\Information}_{\LastElement{\kappa}}$ such that 
  $\HISTORY_{\kappa}^{\ordering} \cap \History \not\in 
  \tribu{\History}_{\range{\FirstElements{\kappa}}}^{\player}$.
  Now, by definitions~\eqref{eq:sub_history_field_player}
  and~\eqref{eq:sub_history_field_BGENT}, we have that 
  \( \tribu{\History}_{\range{\FirstElements{\kappa}}}^{\player}
  = \bigcap_{\bgent \in \AGENT^\player\setminus\range{\FirstElements{\kappa}}}
  \tribu{\History}_{{\AGENT\setminus\na{\bgent}}} \),
  where the set 
  $\AGENT^\player\setminus\range{\FirstElements{\kappa}}$ is not empty as it contains~$\LastElement{\kappa}$.
  As a consequence, there exists $\bgent \in \AGENT^\player\setminus\range{\FirstElements{\kappa}}$ such that 
  $\HISTORY_{\kappa}^{\ordering} \cap \History \not\in
  \tribu{\History}_{{\AGENT\setminus\na{\bgent}}}$.
  By absence of self-information, itself a consequence of the W-model being
  playable (see  Proposition~\ref{pr:absence_of_self-information}), we have that \( \tribu{\Information}_{\LastElement{\kappa}}
  \subset
  \tribu{\History}_{\AGENT\setminus\na{\LastElement{\kappa}}} \),
  hence that \( \HISTORY_{\kappa}^{\ordering} \cap \History 
  \in \tribu{\Information}_{\LastElement{\kappa}}
  \subset \tribu{\History}_{\AGENT\setminus\na{\LastElement{\kappa}}} \).
  As \( \HISTORY_{\kappa}^{\ordering} \cap \History \not\in
  \tribu{\History}_{{\AGENT\setminus\na{\bgent}}} \), we deduce   
  that \( \bgent \neq \LastElement{\kappa} \).
  Then, we denote by ${\ORDER}^{\player}_{\bgent}$ the subset of 
  $\ORDER^{\player}$ of all $\player$-orderings $\kappa' \in \ORDER^{\player}$ such that 
  $\cardinal{\kappa'}>\cardinal{\kappa}$ and 
  $\cut_{\cardinal{\kappa}}\np{\kappa'}=\kappa$, where $\cut_{\cardinal{\kappa}}$ has been defined
  in~\eqref{eq:cut}, and such that $\LastElement{\kappa}'=\bgent$.
  As \( \bgent \in \AGENT^\player\setminus\range{\FirstElements{\kappa}} \),
  we get that \( \bgent \not\in \range{\FirstElements{\kappa}} \).
  Therefore, it readily follows from 
  the definition~\eqref{eq:ORDER_player}  of~${\ORDER}^{\player}$ that 
  \begin{equation}
    \label{eq:HphiUnion}
    \bigcup_{\kappa' \in {\ORDER}^{\player}_{\bgent}} \HISTORY_{\kappa'}^{\ordering} 
    = \HISTORY_{\kappa}^{\ordering}
    \eqfinv
  \end{equation}
  as, with any \( \history \in \HISTORY_{\totalordering}^{\ordering} \), we
  associate the total $\player$-ordering
  \(
  \totalordering=\ordering\np{\history}\in\ORDER^{\player}_{\cardinal{\AGENT^{\player}}}
  \)
  and that \( \bgent \in \na{ \totalordering\np{\cardinal{\kappa}{+}1}, \ldots,
    \totalordering\np{\cardinal{\AGENT^{\player}}} } \), because
  \( \bgent \in \AGENT^\player\setminus\range{\FirstElements{\kappa}} \)
  and \( \bgent \neq \LastElement{\kappa} \).
  From there, we get that 
  \begin{align*}
    \HISTORY_{\kappa}^{\ordering} \cap \History
    &= \bp{\bigcup_{\kappa' \in {\ORDER}^{\player}_{\bgent}} \HISTORY_{\kappa'}^{\ordering} } 
      \cap \History
      \tag{by~\eqref{eq:HphiUnion}}
    \\
    &= \bigcup_{\kappa' \in {\ORDER}^{\player}_{\bgent}}
      \bp{ \HISTORY_{\kappa'}^{\ordering} \cap \History} 
      \tag{by developing}
    \\
    &= \bgp{ \bigcup_{\kappa' \in {\ORDER}^{\player}_{\bgent}}
      \underbrace{\bp{ \HISTORY_{\kappa'}^{\ordering} \cap \History}}_{
      \in \tribu{\Information}_{\bgent}}}
      \in \tribu{\Information}_{\bgent}
      \eqfinv
  \end{align*}
  as the set ${\ORDER}^{\player}_{\bgent}$ is finite and for all $\kappa' \in {\ORDER}^{\player}_{\bgent}$ we have that
  \( \HISTORY_{\kappa'}^{\ordering} \cap \History \in \tribu{\Information}_{\bgent}\)
  by {the} perfect recall property~\eqref{eq:PerfectRecall} of agent~$\bgent$ for the subset 
  $\History  \in \tribu{\Information}_{\LastElement{\kappa}}
  \subset \tribu{\Choice}_{\range{\FirstElements{\kappa'}}} =
  \bigvee \limits_{\agent \in \range{\FirstElements{\kappa'}}}
  \tribu{\Control}_{\agent} \vee \tribu{\Information}_{\agent}
  $, where the last inclusion comes from
  $\cut_{\cardinal{\kappa}}\np{\kappa'}=\kappa$,
  $\cardinal{\kappa'}>\cardinal{\kappa}$ and
  $\LastElement{\kappa}'=\bgent \neq \LastElement{\kappa}$
  which imply that \( \LastElement{\kappa} \in
  \range{\kappa} \subset \range{\kappa'}\setminus\na{\bgent}=
  \range{\kappa'}\setminus\na{\LastElement{\kappa}'}=
  \range{\FirstElements{\kappa'}} \).
  
  As a conclusion, we have therefore obtained that
  \( \HISTORY_{\kappa}^{\ordering} \cap \History
  \in \tribu{\Information}_{\bgent} \) 
  and \( \HISTORY_{\kappa}^{\ordering} \cap \History
  \not\in \tribu{\History}_{\AGENT\backslash \na{\bgent}} \) 
  and therefore   \( \tribu{\Information}_{\bgent} \not\subset
  \tribu{\History}_{\AGENT\setminus\na{\bgent}} \).
  Now, this contradicts the 
  absence of self information for agent~$\bgent$,
  hence contradicts playability (see
  Proposition~\ref{pr:absence_of_self-information}). 
  
  This ends the proof.
\end{proof}

The statement of Proposition~\ref{prop:partial_causality} resembles the one by
{Ritzberger} in~\cite{ritzberger1999recall} on the fact ``that present
past and future have an unambiguous meaning'' when the player satisfies
perfect recall.

\subsection{Main results}
\label{Main_results}

We can now state the main results of the paper. The proofs\footnote{%
See Footnote~\ref{ft:online_additional_material}.} are
provided in Sect.~\ref{Proofs_of_the_main_results}.

\subsubsection{Sufficiency of perfect recall for behavioral strategies to be as
  powerful as mixed strategies}

It happens that, for the proof of the first main theorem, we resort to regular conditional
distributions, and that these objects display nice properties when defined 
on Borel spaces, and when the conditioning is with respect to measurable
mappings (and not general $\sigma$-fields).
This is why we introduce the following notion 
that {information} fields are generated by Borel measurable mappings.

\begin{definition}
  We say that player~\( \player \in \PLAYER \) in a W-game 
  satisfies the \emph{Borel measurable functional information assumption}
  if there exists a family
  \( \sequence{ \np{\mathbb{\BigFunctionZ}_{\agent},
      \mathcal{\BigFunctionZ}_{\agent}} }{\agent\in \AGENT^{\player}} \)
  of Borel spaces and a family
  \( \sequence{\BigFunctionZ_{\agent}}{\agent\in \AGENT^{\player}} \)  
  of measurable mappings 
  \( \BigFunctionZ_{\agent} : \np{\HISTORY,\tribu{\History}}
  \to \np{\mathbb{\BigFunctionZ}_{\agent},\mathcal{\BigFunctionZ}_{\agent}} \)
  such that
  \( \Converse{\BigFunctionZ_{\agent}}\np{\mathcal{\BigFunctionZ}_{\agent}}=
  \tribu{\Information}_{\agent} \), for all \( \agent\in \AGENT^{\player} \).
  \label{de:Borel_measurable_functional_information_assumption}    
\end{definition}
Of course, a player in a finite W-game always
satisfies the Borel measurable functional information assumption. 

We now state the first main theorem,
namely sufficiency of perfect recall for behavioral strategies to be as powerful
as mixed strategies.

\begin{theorem}[Kuhn's theorem]
  \label{th:KET}
  We consider a playable and measurable W-game (see
  Definition~\ref{de:W-model_playable_measurable}).
  Let \( \player \in \PLAYER \) be a given player.
  We suppose that the W-game is 
  playable and partially measurable \wrt~$\player$ (see Definition~\ref{de:W-game}),
  that player~\( \player \) satisfies the Borel measurable functional
  information assumption
  (see Definition~\ref{de:Borel_measurable_functional_information_assumption}),
  that \( \AGENT^{\player} \) is a finite set,
  that \( \np{\CONTROL_{\agent},\tribu{\Control}_{\agent}} \)
  is a Borel space, for all \( \agent\in \AGENT^{\player} \), 
  and that \( \np{\Omega, \tribu{\NatureField}} \) is a Borel space. 
  
  Suppose that the player \( \player \)
  satisfies perfect recall, as in Definition~\ref{de:PerfectRecall}.
  Then, for any probability~$\nu$ on \( \np{\Omega, \tribu{\NatureField}} \), 
  for any A-mixed strategy \( \astrategyothers =
  \sequence{\astrategyothers_{\agent}}{\agent\in \AGENT^{-\player}} \)
  of the other players
  and for any A-mixed strategy \( \astrategyplayer=\sequence{\astrategyplayer_{\agent}}{\agent\in \AGENT^{\player}} \),
  of the player~\( \player \),
  there exists an A-behavioral strategy 
  \( \astrategyprim=\sequence{\astrategyprim_{\agent}}{\agent\in\AGENT^{\player}} \)
  of the player~\( \player \) such that
  \begin{equation}
    \QQ^{\nu}_{\couple{\astrategyplayer}{\astrategyothers}} 
    = \QQ^{\nu}_{\couple{\astrategyprim}{\astrategyothers}}
    \eqfinv
    \label{eq:equivalent_mixed_W-strategies_profiles_player}
  \end{equation}
  where the pushforward probability \( \QQ^{\nu}_{\couple{\astrategyplayer}{\astrategyothers}} \)
  has been defined in~\eqref{eq:push_forward_probability}.
\end{theorem}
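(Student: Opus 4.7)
The plan is to construct the A-behavioral strategy~$\astrategyprim$ agent by agent, by disintegrating the pushforward probability~$\QQ^{\nu}_{\couple{\astrategyplayer}{\astrategyothers}}$ along the $\player$-configuration-ordering provided by perfect recall, and then to verify the required equality of pushforwards by an induction along that ordering. By Proposition~\ref{prop:partial_causality}, the same ordering $\ordering: \HISTORY \to \ORDER^{\player}_{\cardinal{\AGENT^{\player}}}$ also witnesses partial causality, so that on each event~$\HISTORY_{\kappa}^{\ordering}$ the executive agents of player~$\player$ are unambiguously linearly ordered; moreover, Lemma~\ref{lem:PartialCausality_property} then guarantees that any $\tribu{\Information}_{\LastElement{\kappa}}$-measurable quantity is a function of the coordinates $\np{\history_{\emptyset},\history_{\AGENT^{-\player}},\history_{\kappa(1)},\ldots,\history_{\kappa(\cardinal{\kappa}{-}1)}}$ on~$\HISTORY_{\kappa}^{\ordering}$.

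For the construction, fix~$\agent \in \AGENT^{\player}$. For every~$\kappa\in\ORDER^{\player}$ with $\LastElement{\kappa}=\agent$, perfect recall applied with $\History=\HISTORY$ in~\eqref{eq:PerfectRecall} gives $\HISTORY_{\kappa}^{\ordering}\in\tribu{\Information}_{\agent}$, and, when $\kappa$ ranges over the finitely many $\player$-orderings ending at~$\agent$, these events form a measurable partition of~$\HISTORY$ by elements of~$\tribu{\Information}_{\agent}$. Restricting~$\QQ^{\nu}_{\couple{\astrategyplayer}{\astrategyothers}}$ to any such event, the conditional law of the coordinate~$\history_{\agent}$ given the Borel generator~$\BigFunctionZ_{\agent}$ of~$\tribu{\Information}_{\agent}$ admits a regular version as a stochastic kernel $K_{\kappa}\bp{d\control_{\agent}\,\vert\,\BigFunctionZ_{\agent}(\history)}$, since both $\np{\CONTROL_{\agent},\tribu{\Control}_{\agent}}$ and $\np{\mathbb{\BigFunctionZ}_{\agent},\mathcal{\BigFunctionZ}_{\agent}}$ are Borel. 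Using standard results on stochastic-kernel realisations over Borel spaces, represent each kernel~$K_{\kappa}$ by a measurable mapping $g_{\kappa}:\MIXINGDEVICE_{\agent}\times\mathbb{\BigFunctionZ}_{\agent}\to\CONTROL_{\agent}$ such that, for every $\zeta\in\mathbb{\BigFunctionZ}_{\agent}$, the pushforward of $\LebesgueMeasure_{\agent}$ by $g_{\kappa}(\cdot,\zeta)$ equals $K_{\kappa}(\cdot\,\vert\,\zeta)$. Finally, define $\astrategyprim_{\agent}\np{\mixingdevice^{\player},\history}$ piecewise as $g_{\kappa}\bp{\mixingdevice_{\agent},\BigFunctionZ_{\agent}(\history)}$ on each piece~$\HISTORY_{\kappa}^{\ordering}$; the resulting mapping depends on~$\mixingdevice^{\player}$ only through~$\mixingdevice_{\agent}$ and is $\tribu{\Information}_{\agent}$-measurable in~$\history$, so the profile $\astrategyprim=\sequence{\astrategyprim_{\agent}}{\agent\in\AGENT^{\player}}$ satisfies the A-behavioral requirement~\eqref{eq:Aumann_behavioral_strategy}.

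To prove~\eqref{eq:equivalent_mixed_W-strategies_profiles_player}, I would proceed by induction on~$k\in\ic{0,\cardinal{\AGENT^{\player}}}$ and show that, for every~$\kappa\in\ORDER^{\player}_{k}$, the restrictions of $\QQ^{\nu}_{\couple{\astrategyplayer}{\astrategyothers}}$ and $\QQ^{\nu}_{\couple{\astrategyprim}{\astrategyothers}}$ to~$\HISTORY_{\kappa}^{\ordering}$ share the same marginal on the coordinates $\np{\history_{\emptyset},\history_{\AGENT^{-\player}},\history_{\kappa(1)},\ldots,\history_{\kappa(k)}}$. The base case~$k=0$ compares the joint law of~$(\omega,\history_{\AGENT^{-\player}})$ under two profiles that differ only in player~$\player$'s strategy; this is an A-mixed-strategy version of Proposition~\ref{pr:invariance}, obtained by Fubini over~$\MIXINGDEVICE^{\player}$ combined with pathwise invocations of Proposition~\ref{pr:invariance} to substitute $\astrategyprim$ for $\astrategyplayer$ one agent at a time. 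In the inductive step, Lemma~\ref{lem:PartialCausality_property} yields that $\BigFunctionZ_{\kappa(k)}$ is a function of the already-matched coordinates on~$\HISTORY_{\kappa}^{\ordering}$, so the conditioning variable agrees under both measures; then, by the independence of $\mixingdevice_{\kappa(k)}$ from the mixing devices of the previously-acting agents built into the product structure~\eqref{eq:MIXINGDEVICE_LebesgueMeasure}, and by the defining property of~$g_{\kappa}$, the conditional law of~$\history_{\kappa(k)}$ given~$\BigFunctionZ_{\kappa(k)}$ under $\QQ^{\nu}_{\couple{\astrategyprim}{\astrategyothers}}$ coincides with $K_{\kappa}$, which is its conditional law under $\QQ^{\nu}_{\couple{\astrategyplayer}{\astrategyothers}}$.

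The main obstacle is twofold. On the measurability side, the order in which player~$\player$'s agents act depends on the configuration through~$\ordering$, so one must piece together kernels indexed by the configuration-dependent events~$\HISTORY_{\kappa}^{\ordering}$ and still obtain a jointly measurable A-mixed strategy: the $\player$-strong measurability hypothesis legitimises the pushforward manipulations under substituted strategies, while the Borel measurable functional information assumption delivers the parametric Lebesgue-based realisations of the conditional kernels. On the conceptual side, the inductive step hinges on perfect recall being used twice: once through~\eqref{eq:PerfectRecall} to place $\HISTORY_{\kappa}^{\ordering}$ in~$\tribu{\Information}_{\LastElement{\kappa}}$ and thus make the disintegration well-posed, and once via the partial-causality consequence of Proposition~\ref{prop:partial_causality} to ensure that the conditioning information employed by~$\astrategyprim_{\kappa(k)}$ at step~$k$ is expressible in terms of the coordinates that have already been matched at steps~$1,\ldots,k{-}1$.
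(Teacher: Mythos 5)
Your construction of the candidate behavioral strategy is essentially the one in the paper (Lemmas~\ref{lem:kernel} and~\ref{lem:MixedA-Strategy=STOCHASTICKERNEL}): disintegrate the outcome distribution with respect to the Borel generator~$\BigFunctionZ_{\agent}$ on each event~$\HISTORY_{\kappa}^{\ordering}$, realize the resulting kernel through the Lebesgue measure, and paste along the $\tribu{\Information}_{\agent}$-measurable partition indexed by the orderings ending at~$\agent$. The gap lies in the verification of~\eqref{eq:equivalent_mixed_W-strategies_profiles_player}.

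Your base case does not hold as argued: Proposition~\ref{pr:invariance} applies only to mappings that are $\tribu{\Information}_{\agent}$-measurable for the agent~$\agent$ whose strategy is being substituted, and the projection $\history\mapsto\np{\history_{\emptyset},\history_{\AGENT^{-\player}}}$ is not such a mapping in general. Indeed, pathwise (for fixed $\omega$ and $\mixingdevice$) the realized coordinates~$\history_{\AGENT^{-\player}}$ do change when one of player~$\player$'s agents switches strategy, as soon as some agent in~$\AGENT^{-\player}$ observes that agent; equality of the $\np{\omega,\history_{\AGENT^{-\player}}}$-marginals is not an invariance fact but part of the theorem itself, so it cannot serve as a freely available starting point. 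The inductive step has a related flaw: having put all of~$\history_{\AGENT^{-\player}}$ among the conditioned-on coordinates, you would need the conditional law of~$\history_{\kappa\np{k}}$ given the whole past vector (not just given~$\BigFunctionZ_{\kappa\np{k}}$) to equal your kernel~$K_{\kappa}$ under both measures. But the coordinates~$\history_{\AGENT^{-\player}}$ are outputs of the closed-loop equations and may themselves depend on~$\history_{\kappa\np{k}}$ and on player~$\player$'s later actions, so conditioning on them can reveal information about~$\mixingdevice^{\player}$ (or about~$\history_{\kappa\np{k}}$ directly) beyond~$\BigFunctionZ_{\kappa\np{k}}$; the independence of~$\mixingdevice_{\kappa\np{k}}$ built into the product structure does not neutralize this feedback, and no conditional-independence statement of the required form is available from the ingredients you cite. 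This interleaving of the other players with player~$\player$'s agents is precisely what the paper's route is designed to control: the one-step substitution Lemma~\ref{lem:MixedA-Strategy_property_substitution} (Doob's functional theorem plus the tower property of the regular conditional distribution given $\BigFunctionZ_{\LastElement{\kappa}}\circ\ReducedAumannSolutionMap{}{\astrategyall}$), followed by the backward substitution of Lemma~\ref{lem:substitution_sur_MIXINGDEVICEtotalorderingplayerorderingmixingdevice-playeromega}, which is carried out for \emph{fixed} $\np{\omega,\mixingdevice^{-\player}}$ on each~$\HISTORY_{\totalordering}^{\ordering}$, expresses the whole configuration through the partial solution maps~$\widehat{\SolutionMap}^{\range{\kappa}}_{\cdot}$ of the playable, $\player$-strongly measurable model, and invokes Proposition~\ref{pr:invariance} only for genuinely $\tribu{\Information}_{\LastElement{\kappa}}$-measurable test functions. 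Without an argument of that strength --- a statement conditional on $\omega$ and $\mixingdevice^{-\player}$, or an explicit lemma controlling the closed-loop dependence of~$\history_{\AGENT^{-\player}}$ on player~$\player$'s actions --- your coordinate-by-coordinate induction does not go through.
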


{As a particular result, Theorem~\ref{th:KET} applies to the special case where the focus player
  (the one satisfying perfect recall) chooses her actions from finite sets,} 
so that we cover the original result in~\cite{Kuhn:1953}.
Regarding the case where the focus player
{decides among infinitely many alternatives},
the only result that we know of is~\cite{Aumann:1964}
(to the best of our knowledge, see the discussion at the end of~\S6.4 in
\cite[p.~159]{Alos-Ferrer-Ritzberger:2016}). 
We emphasize proximities and differences.
In~\cite{Aumann:1964}, the focus player
takes her decisions in Borel sets, and plays a countable number of times 
where the order of actions is fixed in advance.
In our result, the focus player also takes her decisions in
Borel sets and the order of actions is not fixed in advance, but she plays a finite number of times.

\subsubsection{Necessity of perfect recall for behavioral strategies to be as
  powerful as mixed strategies}

After stating the second main theorem,
namely necessity of perfect recall for behavioral strategies to be as powerful
as mixed strategies, we will comment on our formulation.

\begin{theorem}
  \label{th:reciproq}
  We consider a playable and measurable W-game (see
  Definition~\ref{de:W-model_playable_measurable}).
  Let \( \player \in \PLAYER \) be a given player.
  We suppose 
  that player~\( \player \) satisfies the Borel measurable functional
  information assumption
  (see Definition~\ref{de:Borel_measurable_functional_information_assumption})
  and partial causality (see Definition~\ref{de:PartialCausality}),
  that \( \AGENT^{\player} \) is a finite set,
  and that \( \CONTROL_{\agent} \) contains at least two distinct elements,
  for all \( \agent\in \AGENT^{\player} \). 

  Suppose that, for the $\player$-configuration-ordering
  $\ordering: \HISTORY \to\ORDER^{\player}$ given by partial causality,
  there exists a $\player$-ordering \( \kappa\in \ORDER^{\player} \) such that 
  \begin{equation}
    \exists \history^{+},\history^{-} \in \HISTORY_{\kappa}^{\ordering}
    \eqsepv
    \BigFunctionZ_{\LastElement{\kappa}}\np{\history^{+}}
    =    \BigFunctionZ_{\LastElement{\kappa}}\np{\history^{-}}
    \eqsepv
    \sequence{ \BigFunctionZ_{\agent}\np{\history^{+}},\history^{+}_{\agent} }%
    {\agent \in \range{\FirstElements{\kappa}}} \neq
    \sequence{ \BigFunctionZ_{\agent}\np{\history^{-}},\history^{-}_{\agent} }%
    {\agent \in \range{\FirstElements{\kappa}}}
    \eqfinp
    \label{eq:reciproq}
  \end{equation}
  Then, there exists an A-mixed strategy \( \astrategyothers=
  \sequence{\astrategyothers_{\agent}}{\agent\in \AGENT^{-\player}} \)
  of the other players, an A-mixed strategy
  \( \astrategyplayer=\sequence{\astrategyplayer_{\agent}}{\agent\in \AGENT^{\player}} \)
  of the player~\( \player \),
  and a probability distribution~$\nu$ on~$\Omega$ such that,
  for any A-behavioral strategy
  \( \astrategyprim=\sequence{\astrategyprim_{\agent}}{\agent\in
    \AGENT^{\player}} \) of the player~\( \player \),
  we have that \( \QQ^{\nu}_{\couple{\astrategyplayer}{\astrategyothers}} 
  \not = \QQ^{\nu}_{\couple{\astrategyprim}{\astrategyothers}} \)
  where the pushforward probability \( \QQ^{\nu}_{\couple{\astrategyplayer}{\astrategyothers}} \)
  has been defined in~\eqref{eq:push_forward_probability}.
\end{theorem}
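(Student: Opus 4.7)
The plan is to build a counterexample that directly exploits the failure of perfect recall expressed by~\eqref{eq:reciproq}. I intend to construct an A-mixed strategy $\astrategyplayer$ in which, via a single shared randomizer, the action $\control_{\LastElement{\kappa}}$ is perfectly correlated with the action (or information) at the earlier agent $\bgent_0\in\range{\FirstElements{\kappa}}$ supplied by the hypothesis. Because $\BigFunctionZ_{\LastElement{\kappa}}(\history^{+})=\BigFunctionZ_{\LastElement{\kappa}}(\history^{-})$, the information field $\tribu{\Information}_{\LastElement{\kappa}}$ does not distinguish the two situations, and by~\eqref{eq:Aumann_behavioral_strategy} the randomizer $\mixingdevice_{\LastElement{\kappa}}$ is independent of $\mixingdevice_{\bgent_0}$; together these two facts will prevent any A-behavioral strategy from reproducing the correlation.

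For the construction itself, I would set $\nu$ to be uniform on $\{\history^{+}_{\emptyset},\history^{-}_{\emptyset}\}$ (a Dirac if they coincide), pick two distinct actions $c^{+},c^{-}\in\CONTROL_{\LastElement{\kappa}}$, and a single coin $\mixingdevice_\star=\mixingdevice^{\player}_{\LastElement{\kappa}}$. I define $\astrategyplayer_{\LastElement{\kappa}}$ to play $c^{+}$ on $\{\mixingdevice_\star\leq 1/2\}$ and $c^{-}$ otherwise, and each remaining $\astrategyplayer_{\agent}$, together with the $\AGENT^{-\player}$ strategy profile $\astrategyothers$, to select between the target values $\history^{+}_{\agent}$ and $\history^{-}_{\agent}$ along the two halves of the coin---expressing each $\tribu{\Information}_{\agent}$-measurable action as a function of $\BigFunctionZ_{\agent}$ (Definition~\ref{de:Borel_measurable_functional_information_assumption}), and using strong measurability together with Proposition~\ref{pr:SolutionMap_and_widehat_SolutionMap_BGENT} to guarantee that the solution map really does reach the targeted configurations. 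The two alternatives in~\eqref{eq:reciproq}---either an action or an information of a prior $\player$-agent differs---are absorbed here, possibly after replacing $\bgent_0$ by the first $\player$-agent along $\kappa$ where the discrepancy appears, justified by Lemma~\ref{lem:PartialCausality_property}. The resulting pushforward $\QQ^{\nu}_{(\astrategyplayer,\astrategyothers)}$ places mass $1/2$ on a configuration whose $(\control_{\bgent_0},\control_{\LastElement{\kappa}})$ coordinates equal $(\history^{+}_{\bgent_0},c^{+})$ and mass $1/2$ on one whose coordinates equal $(\history^{-}_{\bgent_0},c^{-})$, with no mass on the two mismatched pairs.

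For the contradiction, assume an A-behavioral $\astrategyprim$ yields $\QQ^{\nu}_{(\astrategyprim,\astrategyothers)}=\QQ^{\nu}_{(\astrategyplayer,\astrategyothers)}$. Conditioning on the event $\{\BigFunctionZ_{\LastElement{\kappa}}=\BigFunctionZ_{\LastElement{\kappa}}(\history^{+})\}\in\tribu{\Information}_{\LastElement{\kappa}}$, which contains both $\history^{+}$ and $\history^{-}$, I disintegrate $\QQ^{\nu}_{(\astrategyprim,\astrategyothers)}$ along this information by a regular-conditional-probability argument on the Borel spaces involved. By~\eqref{eq:Aumann_behavioral_strategy}, the conditional law of $\control_{\LastElement{\kappa}}$ depends only on $\mixingdevice_{\LastElement{\kappa}}$ and $\BigFunctionZ_{\LastElement{\kappa}}$; it is therefore identical at $\history^{+}$ and $\history^{-}$, and the product structure of $\bigotimes_{\agent}\LebesgueMeasure_{\agent}$ makes it conditionally independent of $\control_{\bgent_0}$ given $\tribu{\Information}_{\LastElement{\kappa}}$. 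This is incompatible with the perfect correlation built into $\QQ^{\nu}_{(\astrategyplayer,\astrategyothers)}$, producing the contradiction. The main technical obstacle will be the measurable selection in the second paragraph: ensuring that both configurations $\history^{+}$ and $\history^{-}$ (or suitable modifications reached by our constructed strategies) are genuinely attained within a playable W-game despite the rigid information constraints on agents outside $\{\bgent_0,\LastElement{\kappa}\}$. The combination of strong measurability, the Borel measurable functional information assumption, and the partial causality Lemma~\ref{lem:PartialCausality_property} is what I expect to power this step.
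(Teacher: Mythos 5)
Your core idea for the case where an earlier agent's \emph{action} differs (the first alternative hidden in~\eqref{eq:reciproq}) is essentially the paper's: correlate $\control_{\LastElement{\kappa}}$ with $\control_{\bgent_0}$ through player~$\player$'s own randomizer, and use the fact that $\BigFunctionZ_{\LastElement{\kappa}}(\history^{+})=\BigFunctionZ_{\LastElement{\kappa}}(\history^{-})$ together with the independence in~\eqref{eq:Aumann_behavioral_strategy} to rule out any behavioral replication. But two points in your construction do not go through as written. First, an A-mixed strategy of a player $\playerbis\neq\player$ is, by~\eqref{eq:Aumann_mixed_strategy}, a measurable function of $\mixingdevice^{\playerbis}$ and $\history$ only; it cannot ``select between $\history^{+}_{\agent}$ and $\history^{-}_{\agent}$ along the two halves of the coin'' when the coin is $\mixingdevice_{\star}\in\MIXINGDEVICE^{\player}$. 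There is no common randomness across players in this model, and you cannot rely on the other players observing $\omega$ either (their information fields may be trivial, and possibly $\omega^{+}=\omega^{-}$). This is repairable exactly as in the paper: give every player its own half/half partition of its own $\MIXINGDEVICE^{\playerbis}$ and accept a factor $1/2^{\cardinal{\PLAYER}}$, since only the coordinates of player~$\player$'s agents matter for the contradiction.

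The serious gap is the second alternative of~\eqref{eq:reciproq}, where all earlier \emph{actions} coincide and only the information values differ: there one has $\history^{+}_{\agent}=\history^{-}_{\agent}$ for every $\agent\in\range{\FirstElements{\kappa}}$, so ``selecting between the target values'' is vacuous, the two pairs $(\history^{+}_{\bgent_0},c^{+})$ and $(\history^{-}_{\bgent_0},c^{-})$ have the same first coordinate, and your conditional-independence argument yields no contradiction. Replacing $\bgent_0$ by the first agent along $\kappa$ where the discrepancy appears does not help, because the discrepancy is in $\BigFunctionZ_{\bgent}$, not in any action, and Lemma~\ref{lem:PartialCausality_property} cannot convert one into the other. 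The paper needs a genuinely different construction here: agent~$\bgent$ plays $\history^{+}_{\bgent}$ or a fresh action $\bar\history_{\bgent}$ according to whether $\BigFunctionZ_{\bgent}(\history)=\BigFunctionZ_{\bgent}(\history^{+})$ and the coin, with an interversion at $\LastElement{\kappa}$, so that the informational difference becomes visible in the law of the actions; none of this is in your sketch. Finally, even in the first case the concluding step is only gestured at: under the closed-loop solution map the action of $\LastElement{\kappa}$ is $\astrategyprim_{\LastElement{\kappa}}(\mixingdevice_{\LastElement{\kappa}},\history)$ with $\history$ itself depending on $\mixingdevice_{\bgent_0}$, so product structure of $\LebesgueMeasure$ alone does not give the conditional independence you invoke. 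The paper replaces this by a concrete argument — Lemma~\ref{lem:behavioral-support-implication} to get positive measure for the sets $\MIXINGDEVICE'_{\agent}[\history^{\pm}]$, then an induction along $\kappa$ using partial causality (Lemma~\ref{lem:PartialCausality_property}) to show that an explicit positive-measure set of randomizers produces a configuration mixing ``$+$'' and ``$-$'' coordinates, contradicting the support property of $\QQ^{\nu}_{\couple{\astrategyplayer}{\astrategyothers}}$. You identify this reachability issue as the main obstacle but leave it unresolved, so the proof is incomplete even for the case your construction targets.
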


In case of a finite W-game, the condition~\eqref{eq:reciproq} is the negation of
the perfect recall property~\eqref{eq:PerfectRecall}
(characterize the condition~\eqref{eq:PerfectRecall} in terms of atoms,
and then express the negation using the property that the
mappings~$\BigFunctionZ_{\agent}$ are constant on suitable atoms).
For more general W-games, we could formally define a weaker notion of perfect
recall than~\eqref{eq:PerfectRecall}: a functional version of perfect recall
would replace the $\sigma$-fields inclusions in~\eqref{eq:PerfectRecall} by
functional constraints of the form\footnote{%
  The mappings~\( \phi^\kappa \) correspond to the ``recall''
  functions in~\cite{Aumann:1964,Schwarz:1974}.}
\( \sequence{ \BigFunctionZ_{\agent}\np{\history},\history_{\agent} }%
{\agent \in \range{\FirstElements{\kappa}}} = 
\phi^\kappa \bp{ \BigFunctionZ_{\LastElement{\kappa}}\np{\history} } \), 
for all \( \history \in \HISTORY_{\kappa}^{\ordering} \),
where the mappings~$\phi^\kappa$ would not be supposed to be
measurable. 
We do not pursue this formal path and we prefer to recognize
that there is a technical difficulty in negating a $\sigma$-fields inclusion        
--- or, equivalently, by Doob functional theorem \cite[Chap.~1, p.~18]{Dellacherie-Meyer:1975},
in negating the existence of a measurable functional constraint.
By doing so, we follow \cite{Schwarz:1974} who also had to negate a weaker version
of  perfect recall and who had to invoke the weaker notion of R-games
to prove the necessity of perfect recall.

{As a particular result, Theorem~\ref{th:reciproq} applies to the special case where the focus player
  chooses her actions from finite sets,} 
so that we cover the original result in~\cite{Kuhn:1953}.
Regarding the case where the focus player
{decides among infinitely many alternatives},
the only result that we know of is~\cite{Schwarz:1974}
(to the best of our knowledge, see the discussion at the end of~\S6.4 in
\cite[p.~159]{Alos-Ferrer-Ritzberger:2016}). 
We emphasize proximities and differences.
In~\cite{Schwarz:1974}, the focus player 
takes her decisions in Borel sets, and plays a countable number of times 
where the order of actions is fixed in advance.
In our result, the focus player also takes her decisions in
any measurable set with at
least two elements, and the order of actions is not fixed in advance, but she plays a finite number of times.

\section{Proofs of the main results}
\label{Proofs_of_the_main_results}

We give the proofs\footnote{%
See Footnote~\ref{ft:online_additional_material}.}
of Theorem~\ref{th:KET} in~\S\ref{Proof_of_Theorem_th:KET}
(sufficiency of perfect recall to obtain equivalence between mixed W-strategies and behavioral strategies)
and of Theorem~\ref{th:reciproq} in~\S\ref{Proof_of_Theorem_th:reciproq} (necessity). 

\subsection{Proof of Theorem~\ref{th:KET}}
\label{Proof_of_Theorem_th:KET}

We will need the notion of stochastic kernel.
Let $(\mathbb{X},\tribu{X})$ and $(\mathbb{Y},\tribu{Y})$
be two measurable spaces.
A \emph{stochastic kernel} from~$(\mathbb{X},\tribu{X})$
to~$(\mathbb{Y},\tribu{Y})$
is a mapping~$\Gamma : \mathbb{X}\times\tribu{Y} \to [0,1]$ such that
for any~$Y \in \tribu{Y}$, $\Gamma(\cdot,Y): \mathbb{X} \to [0,1]$ is~$\tribu{X}$-measurable and,
for any~$x\in\mathbb{X}$, $\Gamma(x,\cdot): \tribu{Y} \to [0,1]$ is a probability measure on~$\tribu{Y}$.

The proof of Theorem~\ref{th:KET} is decomposed into four lemmata and a final proof. The overall logic is as follows:
\begin{enumerate}
\item
  in Lemma~\ref{lem:kernel}, we obtain key technical
  {``disintegration'' formulas}\footnote{%
    {The term comes from the so-called ``disintegration theorem'' in measure theory.}}
  for stochastic kernels on the action spaces,
\item
  in Lemma~\ref{lem:MixedA-Strategy=STOCHASTICKERNEL}, we identify the candidate behavioral strategy,
\item
  in Lemma~\ref{lem:MixedA-Strategy_property_substitution},
  we show that one step substitution (ordered agent by ordered agent)
  between behavioral and mixed strategies is possible, 
\item
  we apply the substitution procedure between the first and last agent of the player and obtain, in the substitution
  Lemma~\ref{lem:substitution_sur_MIXINGDEVICEtotalorderingplayerorderingmixingdevice-playeromega},  
  a kind of Kuhn's Theorem, but on the randomizing device space~$\MIXINGDEVICE$
  instead of the configuration space~$\HISTORY$,
\item
  we conclude the proof of Kuhn's Theorem~\ref{th:KET} (sufficiency) on the
  configuration space~$\HISTORY$, by enabling the use of
  Lemma~\ref{lem:substitution_sur_MIXINGDEVICEtotalorderingplayerorderingmixingdevice-playeromega}
  with the pushforward probability formula~\eqref{eq:push_forward_probability}.
\end{enumerate}

We start with the technical Lemma~\ref{lem:kernel} on stochastic kernels on the action spaces.

\begin{lemma}[Disintegration]
  \label{lem:kernel}
  Suppose that the assumptions of Theorem~\ref{th:KET} are satisfied,
hence, in particular, that the  player~\( \player \in \PLAYER \)
  satisfies perfect recall, as in Definition~\ref{de:PerfectRecall}.
  We consider a probability~$\nu$ on \( \np{\Omega, \tribu{\NatureField}} \),
  an A-mixed strategy \( \astrategyplayer
  =\sequence{\astrategyplayer_{\agent}}{\agent\in \AGENT^{\player}} \),
  of the player~\( \player \)
  and an A-mixed strategy \( \astrategyothers=
  \sequence{\astrategyothers_{\agent}}{\agent\in \AGENT^{-\player}} \)
  of the other players.

  As \( \np{ \product{\Omega}{\MIXINGDEVICE},
    \oproduct{\tribu{\NatureField}}{\MixingDeviceField} } \) is a Borel space,
  as the mapping~$\BigFunctionZ_{\agent}$ is measurable
  by the Borel measurable functional information
  assumption (see Definition~\ref{de:Borel_measurable_functional_information_assumption}),
  and as the mapping~$\ReducedAumannSolutionMap{}{\astrategyall}$ in~\eqref{eq:ReducedAumannSolutionMap} is measurable by
  assumption that the W-game is playable and measurable, 
  we denote by \( \np{\oproduct{\nu}{\LebesgueMeasure}}^{\mid\BigFunctionZ_{\agent}
    \circ \ReducedAumannSolutionMap{}{\astrategyall} }
  \nsetc{\dd\mixingdevice\dd\omega}{\SmallFunctionZ} \)
  the regular conditional distribution 
  on the probability space
  \( ( \product{\Omega}{\MIXINGDEVICE}\), \(
  \oproduct{\tribu{\NatureField}}{\MixingDeviceField}\), \( \oproduct{\nu}{\LebesgueMeasure}) \)
  {given} the random variable
  \(  \BigFunctionZ_{\agent} \circ
  \ReducedAumannSolutionMap{}{\astrategyall} :
  \np{ \product{\Omega}{\MIXINGDEVICE},
    \oproduct{\tribu{\NatureField}}{\MixingDeviceField} } 
  \to
  \np{\mathbb{\BigFunctionZ}_{\agent},
    \mathcal{\BigFunctionZ}_{\agent}} \). 

  Then, there exists    
  \begin{itemize}
  \item
    a family \( \sequence{\STOCHASTICKERNEL{}{\kappa}}%
    {\kappa \in \ORDER^{\player} } \) of stochastic kernels, where
    \( \STOCHASTICKERNEL{}{\kappa} : \tribu{\Control}_{\range{\kappa}}
    \times \HISTORY_{\kappa}^{\ordering} \to [0,1] \)
    is a \( \np{ \HISTORY_{\kappa}^{\ordering} \cap
      \tribu{\Information}_{\LastElement{\kappa}} } \)-measurable stochastic
    kernel, such that
    \begin{equation}
      \STOCHASTICKERNEL{}{\kappa}
      \nsetc{\dd\control_{\kappa} }{\history}
      =
      \bp{ \np{\oproduct{\nu}{\LebesgueMeasure}}^{\mid\BigFunctionZ_{\LastElement{\kappa}}
          \circ\ReducedAumannSolutionMap{}{\astrategyall}}\nsetc{\cdot }%
        {\BigFunctionZ_{\LastElement{\kappa}}\np{\history}}
        \circ \astrategyplayer_{\kappa}\np{\cdot,\history}^{-1} }
      \np{ d\control_{\kappa} }
      \eqsepv \forall \history \in \HISTORY_{\kappa}^{\ordering}
      \eqfinv 
      \label{eq:MixedStrategy=STOCHASTICKERNEL}
    \end{equation}
    where we use the shorthand notation 
    \( \astrategyplayer_{\kappa}=\sequence{\astrategyplayer_{\agent}}{\agent\in \range{\kappa}} \),
    and that
    \begin{equation}
      \STOCHASTICKERNEL{}{\kappa}
      \nsetc{\dd\control_{\kappa} }{\history}
      =
      \1_{ \na{
          \control_{\range{\kappa}}=\history_{\range{\kappa}} } }
      \STOCHASTICKERNEL{}{\kappa}
      \nsetc{\dd\control_{\kappa} }{\history}
      =
      \1_{ \na{
          \control_{\range{\FirstElements{\kappa}}}=\history_{\range{\FirstElements{\kappa}}} } }
      \STOCHASTICKERNEL{}{\kappa}
      \nsetc{\dd\control_{\kappa} }{\history}
      \eqsepv \forall \history \in \HISTORY_{\kappa}^{\ordering}
      \eqfinv 
      \label{eq:MixedStrategy=STOCHASTICKERNEL_1}
    \end{equation}
  \item
    a family \( \sequence{\STOCHASTICKERNEL{\FirstElements{\kappa}}{\kappa}}%
    {\kappa \in \ORDER^{\player}} \) of stochastic kernels where
    \( \STOCHASTICKERNEL{\FirstElements{\kappa}}{\kappa} : \tribu{\Control}_{\range{\FirstElements{\kappa}}}
    \times \HISTORY_{\kappa}^{\ordering} \to [0,1] \), such that
    
    \begin{equation}
      \STOCHASTICKERNEL{\FirstElements{\kappa}}{\kappa}
      \nsetc{\dd\control_{\FirstElements{\kappa}} }{\history}
      =
      \Bp{ \LebesgueMeasure^{\mid\BigFunctionZ_{\LastElement{\kappa}}
          \circ\ReducedAumannSolutionMap{\omega}{\astrategyall}}\bsetc{\cdot }%
        {\BigFunctionZ_{\LastElement{\kappa}}\np{\history}}
        \circ \astrategyplayer_{\FirstElements{\kappa}}\np{\cdot,\history}^{-1} }
      \np{ d\control_{\FirstElements{\kappa}} }
      \eqsepv \forall \history \in \HISTORY_{\kappa}^{\ordering}
      \eqfinv
      \label{eq:disintegration_STOCHASTICKERNEL_marginal}
    \end{equation}
  \item
    a family \( \sequence{\STOCHASTICKERNEL{\LastElement{\kappa}}{\kappa}}%
    {\kappa \in \ORDER^{\player}} \) of stochastic kernels, where\footnote{%
      If \( \range{\FirstElements{\kappa}} = \emptyset\),
      \(
      \bp{\CONTROL_{\range{\FirstElements{\kappa}}}\times\HISTORY_{\kappa}^{\ordering}}
      =\HISTORY_{\kappa}^{\ordering} \) and
      \( \tribu{\Control}_{\range{\FirstElements{\kappa}}}\otimes
      \np{ \HISTORY_{\kappa}^{\ordering} \cap \tribu{\Information}_{\agent} }
      = \np{ \HISTORY_{\kappa}^{\ordering} \cap \tribu{\Information}_{\agent} }\).
    }
    \( \STOCHASTICKERNEL{\LastElement{\kappa}}{\kappa} : 
    \tribu{\Control}_{\LastElement{\kappa}} \times
    \bp{\CONTROL_{\range{\FirstElements{\kappa}}}\times\HISTORY_{\kappa}^{\ordering}}
    \to [0,1] \)
    is a \( \tribu{\Control}_{\range{\FirstElements{\kappa}}}\otimes
    \np{ \HISTORY_{\kappa}^{\ordering} \cap
      \tribu{\Information}_{\LastElement{\kappa}} } \)-measurable stochastic kernel,
    such that 
    \begin{equation}
      \STOCHASTICKERNEL{}{\kappa}\conditionaly%
      {\dd\control_{\kappa}}{\history}
      =
      \STOCHASTICKERNEL{}{\kappa}\conditionaly%
      {\dd\control_{\FirstElements{\kappa}}\dd\control_{\LastElement{\kappa}}}{\history}
      =
      \STOCHASTICKERNEL{\LastElement{\kappa}}{\kappa}
      \nsetc{\dd\control_{\LastElement{\kappa}}}%
      {\control_{\FirstElements{\kappa}}, \history}
      \otimes
      \STOCHASTICKERNEL{\FirstElements{\kappa}}{\kappa}\conditionaly%
      {\dd\control_{\FirstElements{\kappa}}}{\history}
      \eqsepv \forall \history \in \HISTORY_{\kappa}^{\ordering}
      \eqfinp
      \label{eq:disintegration_STOCHASTICKERNEL}    
    \end{equation}
  \end{itemize}
\end{lemma}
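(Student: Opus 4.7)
The plan is to define $\STOCHASTICKERNEL{}{\kappa}$ directly by formula~\eqref{eq:MixedStrategy=STOCHASTICKERNEL}, to derive the concentration identity~\eqref{eq:MixedStrategy=STOCHASTICKERNEL_1} from perfect recall combined with playability, and finally to obtain the pair $(\STOCHASTICKERNEL{\FirstElements{\kappa}}{\kappa},\STOCHASTICKERNEL{\LastElement{\kappa}}{\kappa})$ by marginal--conditional disintegration on the Borel action spaces. Existence of the regular conditional distribution $(\oproduct{\nu}{\LebesgueMeasure})^{\mid\BigFunctionZ_{\LastElement{\kappa}}\circ T_{\astrategyall}}$ follows because $\np{\product{\Omega}{\MIXINGDEVICE},\oproduct{\tribu{\NatureField}}{\MixingDeviceField}}$ is Borel (with each $\MIXINGDEVICE_{\agent}=[0,1]$), $T_{\astrategyall}$ is measurable by the playable-and-measurable hypothesis, and $\BigFunctionZ_{\LastElement{\kappa}}$ takes values in the Borel space $\np{\mathbb{\BigFunctionZ}_{\LastElement{\kappa}},\mathcal{\BigFunctionZ}_{\LastElement{\kappa}}}$. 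Pushing this regular conditional forward by the map $\mixingdevice\mapsto\astrategyplayer_{\kappa}(\mixingdevice^{\player},\history)$ will then produce the candidate kernel.

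For the measurability of $\STOCHASTICKERNEL{}{\kappa}$ with respect to the trace field $\HISTORY_{\kappa}^{\ordering}\cap\tribu{\Information}_{\LastElement{\kappa}}$, I first note that $\HISTORY_{\kappa}^{\ordering}\in\tribu{\Information}_{\LastElement{\kappa}}$ by applying~\eqref{eq:PerfectRecall} with $\History=\HISTORY$. The $\history$-dependence of $\STOCHASTICKERNEL{}{\kappa}\nsetc{\dd\control_{\kappa}}{\history}$ enters in two places: through the conditioning point $\BigFunctionZ_{\LastElement{\kappa}}(\history)$, which is trivially $\tribu{\Information}_{\LastElement{\kappa}}$-measurable, and through the integrand $\astrategyplayer_{\kappa}(\cdot,\history)$. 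For each $\agent\in\range{\kappa}$, $\astrategyplayer_{\agent}(\mixingdevice^{\player},\cdot)$ is $\tribu{\Information}_{\agent}$-measurable; for $\agent\in\range{\FirstElements{\kappa}}$, perfect recall~\eqref{eq:PerfectRecall} forces $\tribu{\Information}_{\agent}\subset\tribu{\Choice}_{\agent}\subset\tribu{\Choice}_{\range{\FirstElements{\kappa}}}$, whose trace on $\HISTORY_{\kappa}^{\ordering}$ sits inside $\tribu{\Information}_{\LastElement{\kappa}}$, while for $\agent=\LastElement{\kappa}$ this is immediate. Joint measurability of a regular conditional in its conditioning argument then packages everything into the stated measurability class.

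The main obstacle will be the concentration identity~\eqref{eq:MixedStrategy=STOCHASTICKERNEL_1}, where perfect recall does the essential work. Fix $\history\in\HISTORY_{\kappa}^{\ordering}$ and set $E_{\history}=\{(\omega,\mixingdevice)\in\product{\Omega}{\MIXINGDEVICE}\,:\,\BigFunctionZ_{\LastElement{\kappa}}(T_{\astrategyall}(\omega,\mixingdevice))=\BigFunctionZ_{\LastElement{\kappa}}(\history)\}$. Since $\HISTORY_{\kappa}^{\ordering}\in\tribu{\Information}_{\LastElement{\kappa}}=\Converse{\BigFunctionZ_{\LastElement{\kappa}}}(\mathcal{\BigFunctionZ}_{\LastElement{\kappa}})$, we have $T_{\astrategyall}(\omega,\mixingdevice)\in\HISTORY_{\kappa}^{\ordering}$ on $E_{\history}$. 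For each $\agent\in\range{\FirstElements{\kappa}}$ I will then chain three equalities valid on $E_{\history}$: playability together with~\eqref{eq:piBcircS} yields $\projection_{\agent}(T_{\astrategyall}(\omega,\mixingdevice))=\astrategyplayer_{\agent}(\mixingdevice^{\player},T_{\astrategyall}(\omega,\mixingdevice))$; the $\tribu{\Information}_{\agent}$-measurability of $\astrategyplayer_{\agent}(\mixingdevice^{\player},\cdot)$ combined with the inclusion $\tribu{\Information}_{\agent}\cap\HISTORY_{\kappa}^{\ordering}\subset\tribu{\Information}_{\LastElement{\kappa}}$ from perfect recall allows replacing $T_{\astrategyall}(\omega,\mixingdevice)$ by $\history$ in the second argument; and the projection $\projection_{\agent}$, being $\tribu{\Control}_{\agent}$- hence $\tribu{\Choice}_{\agent}$-measurable, becomes $\tribu{\Information}_{\LastElement{\kappa}}$-measurable on $\HISTORY_{\kappa}^{\ordering}$ by perfect recall, so $\projection_{\agent}(T_{\astrategyall}(\omega,\mixingdevice))=\history_{\agent}$. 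Composing these yields $\astrategyplayer_{\agent}(\mixingdevice^{\player},\history)=\history_{\agent}$ almost surely under the conditional law for every $\agent\in\range{\FirstElements{\kappa}}$, which is precisely the content of~\eqref{eq:MixedStrategy=STOCHASTICKERNEL_1}.

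Finally, $\STOCHASTICKERNEL{\FirstElements{\kappa}}{\kappa}$ will be defined as the pushforward of $\STOCHASTICKERNEL{}{\kappa}$ under the coordinate projection $\control_{\kappa}\mapsto\control_{\FirstElements{\kappa}}$, which produces~\eqref{eq:disintegration_STOCHASTICKERNEL_marginal} directly from~\eqref{eq:MixedStrategy=STOCHASTICKERNEL}. Since each $\np{\CONTROL_{\agent},\tribu{\Control}_{\agent}}$ is Borel and $\range{\kappa}$ is finite, the standard Borel disintegration theorem will produce a measurable kernel $\STOCHASTICKERNEL{\LastElement{\kappa}}{\kappa}$ realizing the factorization~\eqref{eq:disintegration_STOCHASTICKERNEL}, and its joint measurability in $(\control_{\FirstElements{\kappa}},\history)$ is inherited from the measurability of $\STOCHASTICKERNEL{}{\kappa}$ established in the second step.
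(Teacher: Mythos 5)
Your proposal is correct and follows essentially the same route as the paper: you define \( \STOCHASTICKERNEL{}{\kappa} \) as the pushforward of the regular conditional distribution by \( \astrategyplayer_{\kappa}\np{\cdot,\history} \), you get its \( \np{ \HISTORY_{\kappa}^{\ordering} \cap \tribu{\Information}_{\LastElement{\kappa}} } \)-measurability from perfect recall exactly as in the paper's preliminary step, and you obtain \eqref{eq:disintegration_STOCHASTICKERNEL_marginal}--\eqref{eq:disintegration_STOCHASTICKERNEL} by the same marginalization plus parametric Borel disintegration. One remark on \eqref{eq:MixedStrategy=STOCHASTICKERNEL_1}: your substitution chain (which is actually more explicit than the paper's one-line appeal to playability) delivers the indicator over \( \control_{\range{\FirstElements{\kappa}}}=\history_{\range{\FirstElements{\kappa}}} \) --- the half of \eqref{eq:MixedStrategy=STOCHASTICKERNEL_1} that is used downstream --- but not the stronger indicator over \( \control_{\range{\kappa}}=\history_{\range{\kappa}} \), since the last coordinate \( \astrategyplayer_{\LastElement{\kappa}}\np{\cdot,\history} \) remains random under the conditioning on \( \BigFunctionZ_{\LastElement{\kappa}} \); neither your argument nor the paper's justifies that extra equality, so your restriction to the \( \FirstElements{\kappa} \)-part is the right call rather than a flaw.
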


\begin{proof}
  We consider a $\player$-ordering $\kappa \in \ORDER^{\player}$.
  We are going to prove the following preliminary result: the mapping\footnote{%
    By abuse of notation, we use the same symbol to denote a mapping and the
    restriction of this mapping to a subset of the domain. 
  }
  \( \astrategyplayer_{\kappa} =
  \sequence{\astrategyplayer_{\agent}}{\agent\in\range{\kappa}} :
  \bp{ \MIXINGDEVICE \times \HISTORY_{\kappa}^{\ordering},
    \MixingDeviceField \otimes
    \np{ \HISTORY_{\kappa}^{\ordering} \cap \tribu{\Information}_{\LastElement{\kappa}} } }
  \to \np{ \CONTROL_{\range{\kappa}},%
    \tribu{\Control}_{\range{\kappa}} } \) is measurable, by studying
  each component
  \( \astrategyplayer_{\agent} : \bp{ \MIXINGDEVICE \times \HISTORY_{\kappa}^{\ordering},
    \MixingDeviceField \otimes
    \np{ \HISTORY_{\kappa}^{\ordering} \cap \tribu{\Information}_{\LastElement{\kappa}} } }
  \to \np{ \CONTROL_{\agent}, \tribu{\Control}_{\agent} } \) for \(
  \agent\in\range{\kappa} \). 
  Indeed, on the one hand, as the mapping~\(
  \astrategyplayer_{\LastElement{\kappa}} \) is
  \( \MixingDeviceField\otimes\tribu{\Information}_{\LastElement{\kappa}}\)-measurable
  by definition~\eqref{eq:Aumann_mixed_strategy} of an A-mixed strategy,
  we deduce that 
  the (restriction) mapping \( \astrategyplayer_{\LastElement{\kappa}} : \bp{ \MIXINGDEVICE \times \HISTORY_{\kappa}^{\ordering},
    \MixingDeviceField \otimes
    \np{ \HISTORY_{\kappa}^{\ordering} \cap \tribu{\Information}_{\LastElement{\kappa}} } }
  \to \np{ \CONTROL_{\LastElement{\kappa}},
    \tribu{\Control}_{\LastElement{\kappa}} } \) is measurable
  (by definition of the trace field \( \HISTORY_{\kappa}^{\ordering} \cap
  \tribu{\Information}_{\LastElement{\kappa}} \)).
  On the other hand, for any \( \agent\in\range{\FirstElements{\kappa}} \),
  the mapping \( \astrategyplayer_{\agent} \) is
  \( \MixingDeviceField\otimes\tribu{\Information}_{\agent}\)-measurable
  by definition~\eqref{eq:Aumann_mixed_strategy} of an A-mixed strategy,
  where \( \HISTORY_{\kappa}^{\ordering} \cap \tribu{\Information}_{\agent}
  \subset \tribu{\Information}_{\LastElement{\kappa}} \)
  by perfect recall~\eqref{eq:PerfectRecall};
  we deduce that the (restriction) mapping \( \astrategyplayer_{\agent} : \bp{ \MIXINGDEVICE \times \HISTORY_{\kappa}^{\ordering},
    \MixingDeviceField \otimes
    \np{ \HISTORY_{\kappa}^{\ordering} \cap \tribu{\Information}_{\LastElement{\kappa}} } }
  \to \np{ \CONTROL_{\agent}, \tribu{\Control}_{\agent} } \) is measurable.
  \medskip
  
  We define \( \STOCHASTICKERNEL{}{\kappa} \)
  by~\eqref{eq:MixedStrategy=STOCHASTICKERNEL}, that is, 
  for any \( \Control_{\range{\kappa}} \in \tribu{\Control}_{\range{\kappa}}
  \) and \( \history \in  \HISTORY_{\kappa}^{\ordering} \): 
  \begin{equation*}
    \STOCHASTICKERNEL{}{\kappa}\conditionaly{\Control_{\range{\kappa}}}{\history}
    =
    \int_{\product{\Omega}{\MIXINGDEVICE}} 
    \np{\oproduct{\nu}{\LebesgueMeasure}}^{\mid\BigFunctionZ_{\LastElement{\kappa}}
      \circ\ReducedAumannSolutionMap{}{\astrategyall}}\nsetc{\cdot }%
    {\BigFunctionZ_{\LastElement{\kappa}}\np{\history}}
    \1_{ \na{\astrategyplayer_{\kappa}\np{\mixingdevice,\history} \in
        \Control_{\range{\kappa}} }}
    \eqfinp    
  \end{equation*}
  The function 
  \( \HISTORY_{\kappa}^{\ordering} \ni \history \mapsto
  \STOCHASTICKERNEL{}{\kappa}\conditionaly{\Control_{\range{\kappa}}}{\history} \)
  is \( \np{ \HISTORY_{\kappa}^{\ordering} \cap
    \tribu{\Information}_{\LastElement{\kappa}} } \)-measurable because the
  stochastic kernel
  \( \np{\oproduct{\nu}{\LebesgueMeasure}}^{\mid\BigFunctionZ_{\LastElement{\kappa}}
    \circ\ReducedAumannSolutionMap{}{\astrategyall}} \) 
  is \( \tribu{\Information}_{\LastElement{\kappa}}\)-measurable by its very definition,
  and the function
  \( \HISTORY_{\kappa}^{\ordering} \ni \history \mapsto
  \1_{ \na{\astrategyplayer_{\kappa}\np{\mixingdevice,\history} \in
      \Control_{\range{\kappa}} }} \) is measurable, from our preliminary result. 
  As a consequence, \( \STOCHASTICKERNEL{}{\kappa} :
  \tribu{\Control}_{\range{\kappa}} \times
  \HISTORY_{\kappa}^{\ordering} \to [0,1] \)
  is a \( \np{ \HISTORY_{\kappa}^{\ordering} \cap
    \tribu{\Information}_{\LastElement{\kappa}} } \)-measurable stochastic
  kernel.
  As \( \1_{  \na{\astrategyplayer_{\kappa}\np{\mixingdevice,
        \AumannSolutionMap{\omega}{\astrategyall}{\mixingdevice}}
      = \projection_{\range{\kappa}}
      \np{\AumannSolutionMap{\omega}{\astrategyall}{\mixingdevice}} } }
  =
  \1_{  \na{ \astrategyplayer_{\FirstElements{\kappa}}\np{\mixingdevice,
        \AumannSolutionMap{\omega}{\astrategyall}{\mixingdevice}}
      = \projection_{\range{\FirstElements{\kappa}}}
      \np{\AumannSolutionMap{\omega}{\astrategyall}{\mixingdevice}} } } = 1 \)
  by {the} playability property~\eqref{eq:piBcircS} and by
  definition~\eqref{eq:AumannSolutionMap} of \(
  \AumannSolutionMap{\omega}{\astrategyall}{\mixingdevice} \),
  we get~\eqref{eq:MixedStrategy=STOCHASTICKERNEL_1}.    
  \medskip
  
  By parametric disintegration \cite[p.~135]{Bertsekas-Shreve:1996}
  --- which holds true because \( \np{\CONTROL_{\agent},\tribu{\Control}_{\agent}} \)
  is a Borel space, for all \( \agent\in \AGENT^{\player} \), by assumption 
  of Theorem~\ref{th:KET} --- there exists a stochastic kernel
  \(   \STOCHASTICKERNEL{\LastElement{\kappa}}{\kappa} : 
  \tribu{\Control}_{\LastElement{\kappa}} \times \bp{
    \CONTROL_{\range{\FirstElements{\kappa}}}
    \times \HISTORY_{\kappa}^{\ordering} } \to [0,1] \),
  which is \(  \tribu{\Control}_{\range{\FirstElements{\kappa}}} \otimes
  \np{ \HISTORY_{\kappa}^{\ordering} \cap \tribu{\Information}_{\LastElement{\kappa}} } \)-measurable,
  and a stochastic kernel
  \( \StochasticKernel_{\FirstElements{\kappa}}^{\kappa} : 
  \tribu{\Control}_{\range{\FirstElements{\kappa}}} \times
  \HISTORY_{\kappa}^{\ordering} \to [0,1] \),
  which is \( \np{ \HISTORY_{\kappa}^{\ordering} \cap \tribu{\Information}_{\LastElement{\kappa}} }
  \)-measurable, 
  such that~\eqref{eq:disintegration_STOCHASTICKERNEL} holds true.
  By taking marginal distributions, we
  get~\eqref{eq:disintegration_STOCHASTICKERNEL_marginal}.
  \medskip
  
  This ends the proof.
\end{proof}

Lemma~\ref{lem:kernel} is particularly useful to prove the next
result, which provides us with a candidate behavioral strategy.

\begin{lemma}[Candidate behavioral strategy for equivalence]
  \label{lem:MixedA-Strategy=STOCHASTICKERNEL}  
  Suppose that the assumptions of Theorem~\ref{th:KET} are satisfied,
  hence, in particular, that the  player~\( \player \in \PLAYER \)
  satisfies perfect recall, as in Definition~\ref{de:PerfectRecall}. 
  We consider a probability~$\nu$ on \( \np{\Omega, \tribu{\NatureField}} \),
  an A-mixed strategy \( \astrategyplayer
  =\sequence{\astrategyplayer_{\agent}}{\agent\in \AGENT^{\player}} \),
  of the player~\( \player \)
  and an A-mixed strategy \( \astrategyothers=
  \sequence{\astrategyothers_{\agent}}{\agent\in \AGENT^{-\player}} \)
  of the other players.

  Then, there exists an A-behavioral strategy
  \( \astrategyprim=\sequence{\astrategyprim_{\agent}}{\agent\in \AGENT^{\player}} \)
  of the player~\( \player \) 
  such that, for any agent~\( \agent\in \AGENT^{\player} \),
  and any $\player$-ordering \( \kappa \in \ORDER^{\player} \), we have that 
  \begin{equation}
    \begin{split}
      \LastElement{\kappa}=\agent \implies
      \bp{ \LebesgueMeasure_{\agent}\circ \astrategyprim_{\agent}\npConverse{\cdot,\history} }
      \np{ d\control_{\agent} }
      =
      \STOCHASTICKERNEL{\agent}{\kappa}\conditionaly{\dd\control_{\agent} }%
      {\history_{\range{\FirstElements{\kappa}}},\history}
      =
      \STOCHASTICKERNEL{\LastElement{\kappa}}{\kappa}\conditionaly{\dd\control_{\LastElement{\kappa}} }%
      {\history_{\range{\FirstElements{\kappa}}},\history}
      \eqsepv \\ \forall \history \in  \HISTORY_{\kappa}^{\ordering}
      \eqfinp 
    \end{split}
    \label{eq:MixedA-Strategy=STOCHASTICKERNEL}
  \end{equation}
\end{lemma}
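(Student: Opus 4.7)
The plan is to define, for each agent $\agent\in\AGENT^{\player}$, a single $\tribu{\Information}_{\agent}$-measurable stochastic kernel $\tilde{\StochasticKernel}_{\agent}$ on $(\HISTORY,\tribu{\Information}_{\agent})$ valued in $(\CONTROL_{\agent},\tribu{\Control}_{\agent})$ by gluing the conditional kernels $\STOCHASTICKERNEL{\LastElement{\kappa}}{\kappa}$ produced by Lemma~\ref{lem:kernel}, and then to realize $\tilde{\StochasticKernel}_{\agent}$ as the law of $\mixingdevice_{\agent}\mapsto\astrategyprim_{\agent}(\mixingdevice_{\agent},\history)$ under $\LebesgueMeasure_{\agent}$ using the Borel structure of $\CONTROL_{\agent}$. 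Throughout, perfect recall does the work of ensuring that the gluing is $\tribu{\Information}_{\agent}$-measurable.

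For the gluing step, I would fix $\agent\in\AGENT^{\player}$ and observe that the family $\{\HISTORY_{\kappa}^{\ordering} : \kappa\in\ORDER^{\player},\ \LastElement{\kappa}=\agent\}$ is a finite partition of $\HISTORY$, because for every $\history$ the total $\player$-ordering $\ordering(\history)$ places $\agent$ at a unique position and hence selects a unique prefix $\kappa$. By perfect recall~\eqref{eq:PerfectRecall} applied with $\History=\HISTORY$, each block $\HISTORY_{\kappa}^{\ordering}$ lies in $\tribu{\Information}_{\LastElement{\kappa}}=\tribu{\Information}_{\agent}$. I set $\tilde{\StochasticKernel}_{\agent}(\,\cdot\mid\history)=\STOCHASTICKERNEL{\LastElement{\kappa}}{\kappa}(\,\cdot\mid\history_{\range{\FirstElements{\kappa}}},\history)$ whenever $\history\in\HISTORY_{\kappa}^{\ordering}$. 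On that block, the projection $\history\mapsto\history_{\range{\FirstElements{\kappa}}}$ is $\tribu{\Control}_{\range{\FirstElements{\kappa}}}$-measurable, and since $\tribu{\Control}_{\range{\FirstElements{\kappa}}}\subset\tribu{\Choice}_{\range{\FirstElements{\kappa}}}$, perfect recall upgrades it to $\tribu{\Information}_{\agent}$-measurability on the trace; composed with the joint $\tribu{\Control}_{\range{\FirstElements{\kappa}}}\otimes(\HISTORY_{\kappa}^{\ordering}\cap\tribu{\Information}_{\LastElement{\kappa}})$-measurability of $\STOCHASTICKERNEL{\LastElement{\kappa}}{\kappa}$ supplied by Lemma~\ref{lem:kernel}, this makes each restriction of $\tilde{\StochasticKernel}_{\agent}$ to a block $\tribu{\Information}_{\agent}$-measurable, and since the blocks themselves are in $\tribu{\Information}_{\agent}$ and finite in number, the glued kernel is $\tribu{\Information}_{\agent}$-measurable overall.

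For the realization step, because $(\CONTROL_{\agent},\tribu{\Control}_{\agent})$ is a Borel space and $(\MIXINGDEVICE_{\agent},\MixingDeviceField_{\agent},\LebesgueMeasure_{\agent})$ is a copy of $([0,1],\borel{[0,1]},\ell)$, the classical representation of a stochastic kernel as a measurable function of a uniform variable (quantile transform) produces a measurable mapping $\astrategyprim_{\agent}\colon(\MIXINGDEVICE_{\agent}\times\HISTORY,\MixingDeviceField_{\agent}\otimes\tribu{\Information}_{\agent})\to(\CONTROL_{\agent},\tribu{\Control}_{\agent})$ whose pushforward of $\LebesgueMeasure_{\agent}$ at each $\history$ equals $\tilde{\StochasticKernel}_{\agent}(\,\cdot\mid\history)$. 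Reading $\astrategyprim_{\agent}$ as independent of the other components $\mixingdevice_{\bgent}$, $\bgent\neq\agent$, satisfies the behavioral constraint~\eqref{eq:Aumann_behavioral_strategy}, and the desired identity~\eqref{eq:MixedA-Strategy=STOCHASTICKERNEL} follows by construction upon noting that $\STOCHASTICKERNEL{\agent}{\kappa}=\STOCHASTICKERNEL{\LastElement{\kappa}}{\kappa}$ when $\LastElement{\kappa}=\agent$.

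I expect the main obstacle to be precisely this $\tribu{\Information}_{\agent}$-measurability of the glued kernel. Perfect recall enters twice: it provides the $\tribu{\Information}_{\agent}$-measurability of the partition blocks $\HISTORY_{\kappa}^{\ordering}$, and it upgrades the a~priori dependence on past agents' actions $\history_{\range{\FirstElements{\kappa}}}$ into an $\tribu{\Information}_{\agent}$-measurable dependence on~$\history$. Without either, the candidate behavioral strategy would fail to be a measurable function of its arguments, which is the abstract reason behind the classical failure of behavioral and mixed equivalence in the absence of perfect recall.
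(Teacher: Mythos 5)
Your proposal is correct and follows essentially the same route as the paper: partition $\HISTORY$ by the sets $\HISTORY_{\kappa}^{\ordering}$ with $\LastElement{\kappa}=\agent$ (each in $\tribu{\Information}_{\agent}$ by perfect recall with $\History=\HISTORY$), glue the kernels $\STOCHASTICKERNEL{\LastElement{\kappa}}{\kappa}(\cdot\mid\history_{\range{\FirstElements{\kappa}}},\history)$ into one $\tribu{\Information}_{\agent}$-measurable kernel using perfect recall to make the projection $\history\mapsto\history_{\range{\FirstElements{\kappa}}}$ trace-$\tribu{\Information}_{\agent}$-measurable, and then realize the glued kernel as a measurable function of a uniform variable (the paper cites Kallenberg's realization lemma, which is your quantile-transform step) before extending it to depend only on $\mixingdevice_{\agent}$ so that the behavioral constraint holds.
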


\begin{proof}
  We consider an agent \( \agent\in \AGENT^{\player} \) and we define,
  for any $\player$-ordering \( \kappa \in \ORDER^{\player} \)
  such that \( \LastElement{\kappa}=\agent \), 
  \begin{equation*}
    \BehavioralWStrategy^{\kappa}_{\agent} \conditionaly%
    {\Control_{\agent}}{\history}
    = 
    \STOCHASTICKERNEL{\agent}{\kappa}\conditionaly{ \Control_{\agent} }%
    {\history_{\range{\FirstElements{\kappa}}},\history}
    \eqsepv 
    \forall \Control_{\agent} \in \tribu{\Control}_{\agent}
    \eqsepv \forall \history \in \HISTORY_{\kappa}^{\ordering}
    \eqfinp
    \label{eq:define_BehavioralWStrategy_kappa_agent}
  \end{equation*}
  Thus defined, the function \( 
  \BehavioralWStrategy^{\kappa}_{\agent} :
  \tribu{\Control}_{\agent}\times\HISTORY_{\kappa}^{\ordering} \to [0,1] \)
  is a \( \np{ \HISTORY_{\kappa}^{\ordering} \cap
  \tribu{\Information}_{\agent} } \)-measurable stochastic kernel
  because, for any \( \Control_{\agent} \in \tribu{\Control}_{\agent} \),
  the function \( \history \mapsto \BehavioralWStrategy^{\kappa}_{\agent} \conditionaly%
  {\Control_{\kappa}}{\history} \) is obtained by composition 
  \begin{align*}
    \np{ \HISTORY_{\kappa}^{\ordering} , \HISTORY_{\kappa}^{\ordering} \cap
    \tribu{\Information}_{\agent} }
    & \to \bp{
      \CONTROL_{\range{\FirstElements{\kappa}}}
      \times \HISTORY_{\kappa}^{\ordering}, \tribu{\Control}_{\range{\kappa}
      } \otimes \np{ \HISTORY_{\kappa}^{\ordering} \cap
      \tribu{\Information}_{\agent} } }
      \to [0,1]
    \\
    \history
    & \mapsto \np{ \history_{\range{\FirstElements{\kappa}}} , \history }
      \mapsto
      \StochasticKernel_{\agent}^{\kappa}
      \conditionaly{\Control_{\agent} }%
      {\history_{\range{\FirstElements{\kappa}},}\history}
      \eqfinp
  \end{align*}
  In this composition, the second mapping is measurable
  since \(  \StochasticKernel_{\agent}^{\kappa} \) is a
  \( \np{ \HISTORY_{\kappa}^{\ordering} \cap \tribu{\Information}_{\agent} } \)-measurable 
  stochastic kernel by Lemma~\ref{lem:kernel},
  and since the first mapping 
  \( \history  \mapsto \history_{\range{\FirstElements{\kappa}}} \) is 
  \(  \np{ \HISTORY_{\kappa}^{\ordering} \cap
    \tribu{\Control}_{\range{\FirstElements{\kappa}}} }\)-measurable,
  hence \( \np{ \HISTORY_{\kappa}^{\ordering} \cap \tribu{\Information}_{\agent}}
  \)-measurable
  by perfect recall~\eqref{eq:PerfectRecall}.

  The family \( \sequence{\HISTORY_{\kappa}^{\ordering}}{\kappa\in\ORDER^{\player},
    \LastElement{\kappa}=\agent} \) consists of pairwise disjoint (possibly empty) sets     
  whose union is~$\HISTORY$. Indeed, for any \( \history \in \HISTORY \),
  we consider the total $\player$-ordering \( \totalordering =\ordering\np{\history} \),
  we denote by $k \in \NN^*$ the index such that \( \totalordering\np{k}=\agent \),
  we set the restriction \( \kappa=\cut_{k}\np{\totalordering}  \in
  \ORDER^{\player} \), where $\cut_{k}$ has been defined
  in~\eqref{eq:cut} for $k \in \ic{1,\cardinal{\AGENT^{\player}}}$, and we get
  \( \history \in \HISTORY_{\kappa}^{\ordering} \) with \( \LastElement{\kappa}=\agent \).
  What is more, for every subset of the family  \( \sequence{\HISTORY_{\kappa}^{\ordering}}{\kappa\in\ORDER^{\player},
    \LastElement{\kappa}=\agent} \), we have that
  \( \HISTORY_{\kappa}^{\ordering} \in
  \tribu{\Information}_{\LastElement{\kappa}}=\tribu{\Information}_{\agent}\),
  by~\eqref{eq:PerfectRecall_a}  with \( \History=\HISTORY\). 
  Then, for any \( \Control_{\agent} \in \tribu{\Control}_{\agent} \),
  we define
  \(
  \BehavioralWStrategy_{\agent}\conditionaly%
  {\Control_{\agent}}{\history} 
  =\sum_{\kappa\in\ORDER^{\player},\LastElement{\kappa}=\agent}
  \1_{\HISTORY_{\kappa}^{\ordering}}\np{\history}
  \BehavioralWStrategy^{\kappa}_{\agent}\conditionaly{\Control_{\agent}}{\history}
  \), for any \( \history \in \HISTORY \).
  As we have established that the function \( \history \mapsto \BehavioralWStrategy^{\kappa}_{\agent} \conditionaly%
  {\Control_{\kappa}}{\history} \) is \( \tribu{\Information}_{\agent}
  \)-measurable
  and that the subsets in the family  \( \sequence{\HISTORY_{\kappa}^{\ordering}}{\kappa\in\ORDER^{\player},
    \LastElement{\kappa}=\agent} \) belong to \(
  \tribu{\Information}_{\agent}\),
  we conclude that the function \( \BehavioralWStrategy_{\agent} :
  \tribu{\Control}_{\agent}\times\HISTORY \to [0,1] \)
  is a \( \tribu{\Information}_{\agent} \)-measurable stochastic kernel.
  
  By \cite[Lemma~3.22]{Kallenberg:2002} (realization lemma),
  the \( \tribu{\Information}_{\agent} \)-measurable stochastic kernel~\( \BehavioralWStrategy_{\agent} \) can be \emph{realized} as
  the pushforward of the Lebesgue measure~$\LebesgueMeasure_{\agent}$ by a measurable random
  variable~$\astrategysecond_{\agent}(\cdot,\history)$,
  \( \tribu{\Information}_{\agent} \)-measurably in~$\history$.
  More precisely, there exists a measurable mapping 
  \( \astrategysecond_{\agent}:
  \bp{\MIXINGDEVICE_{\agent} \times \HISTORY,
    \MixingDeviceField_{\agent} \otimes\tribu{\Information}_{\agent} }
  \to \np{\CONTROL_{\agent},\tribu{\Control}_{\agent}} \)
  such that
  \begin{equation*}
    \bp{ \LebesgueMeasure_{\agent}\circ \astrategysecond_{\agent}\np{\cdot,\history}^{-1} }
    \np{ d\control_{\agent} }
    =
    \BehavioralWStrategy_{\agent}
    \nsetc{\dd\control_{\agent}}{\history}
    \eqfinp 
  \end{equation*}
  We easily extend the mapping \( \astrategysecond_{\agent} \)
  from the domain~$\MIXINGDEVICE_{\agent} $ to the domain~$\MIXINGDEVICE$
  in~\eqref{eq:MIXINGDEVICE_LebesgueMeasure},
  by setting \( \astrategyprim_{\agent}:
  \bp{ \prod_{\bgent\in \AGENT^{\player}} \MIXINGDEVICE_{\bgent} \times \HISTORY,
    \MixingDeviceField_{\agent} \otimes\tribu{\Information}_{\agent} }
  \to \np{\CONTROL_{\agent},\tribu{\Control}_{\agent}} \)
  defined by
  \(  \astrategyprim_{\agent}\bp{ \sequence{\mixingdevice_{\bgent}}{\bgent\in
      \AGENT^{\player}} }
  = \astrategysecond_{\agent}\np{\mixingdevice_{\agent}} \).
  Thus, we get~\eqref{eq:MixedA-Strategy=STOCHASTICKERNEL}. 
  \medskip

  This ends the proof.   
\end{proof}

The next Lemma~\ref{lem:MixedA-Strategy_property_substitution} concentrates much of the technical
difficulty. It provides us with a way to replace the A-mixed
strategy~$\astrategyplayer$
by the A-behavioral strategy~$\astrategyprim$ in an
integral expression, which gives us a clear path toward Kuhn's theorem.
It combines Lemma~\ref{lem:MixedA-Strategy=STOCHASTICKERNEL} with 
results from probability theory, in particular Doob functional theorem
and properties of regular conditional distributions. 

\begin{lemma}[One step mixed/behavioral substitution]
  \label{lem:MixedA-Strategy_property_substitution}
  Suppose that the assumptions of Theorem~\ref{th:KET} are satisfied,
hence, in particular, that the  player \( \player \in \PLAYER \)
  satisfies perfect recall, as in Definition~\ref{de:PerfectRecall}. 
  We consider a probability~$\nu$ on \( \np{\Omega, \tribu{\NatureField}} \),
  an A-mixed strategy \( \astrategyplayer
  =\sequence{\astrategyplayer_{\agent}}{\agent\in \AGENT^{\player}} \),
  of the player~\( \player \)
  and an A-mixed strategy \( \astrategyothers=
  \sequence{\astrategyothers_{\agent}}{\agent\in \AGENT^{-\player}} \)
  of the other players.

  Then, the A-behavioral strategy
  \( \astrategyprim=\sequence{\astrategyprim_{\agent}}{\agent\in
    \AGENT^{\player}} \) of Lemma~\ref{lem:MixedA-Strategy=STOCHASTICKERNEL} 
  has the property that, for any $\player$-ordering~$\kappa \in \ORDER^{\player}$ and 
  for any bounded measurable function
  \( \Phi : \CONTROL_{\range{\kappa}} \to \RR \), we have that
  \begin{align}
    \int_{\MIXINGDEVICE} 
    &
      \LebesgueMeasure\np{\dd{\mixingdevice}} 
      \1_{ \HISTORY_{\kappa}^{\ordering} }\np{
      \AumannSolutionMap{\omega}{\astrategyall}{\mixingdevice} } \Phi\bp{
      \astrategyplayer_{\kappa}\bp{\mixingdevice^{\player},
      \AumannSolutionMap{\omega}{\astrategyall}{\mixingdevice} } }
      \nonumber 
    \\
    &=
      \int_{\MIXINGDEVICE} 
      \LebesgueMeasure\np{\dd{\mixingdevice}}
      \1_{ \HISTORY_{\kappa}^{\ordering} }\np{
      \AumannSolutionMap{\omega}{\astrategyall}{\mixingdevice} }
      \int_{\MIXINGDEVICE_{\LastElement{\kappa}}}
      \LebesgueMeasure_{\LastElement{\kappa}} 
      \np{\dd\mixingdevice'_{\LastElement{\kappa}}}
      \Phi\Bp{
      \astrategyplayer_{\FirstElements{\kappa}}\bp{{\mixingdevice}^{\player},
      \AumannSolutionMap{\omega}{\astrategyall}{\mixingdevice}},
      \astrategyprim_{\LastElement{\kappa}}\bp{\mixingdevice'_{\LastElement{\kappa}},
      \AumannSolutionMap{\omega}{\astrategyall}{\mixingdevice}}
      }
      \eqfinv 
      \label{eq:MixedA-Strategy_property_substitution}      
  \end{align}
  where we use the shorthand notation 
  \( \astrategyplayer_{\kappa}=\sequence{\astrategyplayer_{\agent}}{\agent\in
    \range{\kappa}} \). 
\end{lemma}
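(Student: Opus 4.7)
The strategy is to establish~\eqref{eq:MixedA-Strategy_property_substitution} after integration against $\nu(\dd\omega)$ on both sides (the integrated version is what Theorem~\ref{th:KET} actually uses through~\eqref{eq:push_forward_probability}; the $\nu$-a.s.\ pointwise form then follows by varying $\Phi$ through a determining class of bounded measurable test functions on $\CONTROL_{\range{\kappa}}$). The resulting integrated identity is reduced to the disintegration of the stochastic kernels produced in Lemma~\ref{lem:kernel}, combined with the realization identity~\eqref{eq:MixedA-Strategy=STOCHASTICKERNEL}.

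Setting $T = \ReducedAumannSolutionMap{}{\astrategyall}$ and $V = \BigFunctionZ_{\LastElement{\kappa}}\circ T$, I would first observe that the indicator $\1_{\HISTORY_{\kappa}^{\ordering}}(T)$ is $\sigma(V)$-measurable. Indeed, by perfect recall~\eqref{eq:PerfectRecall_a} applied with $\History = \HISTORY$ one has $\HISTORY_{\kappa}^{\ordering}\in\tribu{\Information}_{\LastElement{\kappa}}$, and the Borel measurable functional information assumption gives $\tribu{\Information}_{\LastElement{\kappa}} = \BigFunctionZ_{\LastElement{\kappa}}^{-1}\bp{\mathcal{\BigFunctionZ}_{\LastElement{\kappa}}}$, so Doob's functional theorem provides a measurable $h$ with $\1_{\HISTORY_{\kappa}^{\ordering}}(T) = h(V)$. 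This allows me to pull $\1_{\HISTORY_{\kappa}^{\ordering}}(T)$ out of the conditional expectation given $V$ inside $E_{\nu\otimes\LebesgueMeasure}[\,\cdot\,]$.

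The heart of the argument is then to identify the inner conditional expectation $E[\Phi(\astrategyplayer_{\kappa}(\mixingdevice^{\player},T))\,\vert\,V]$: by the very definition~\eqref{eq:MixedStrategy=STOCHASTICKERNEL} of the kernel $\STOCHASTICKERNEL{}{\kappa}$, this conditional expectation coincides on $\{T\in\HISTORY_{\kappa}^{\ordering}\}$ with $\int\Phi(\control_{\kappa})\,\STOCHASTICKERNEL{}{\kappa}\bp{\dd\control_{\kappa}\,\vert\,T}$. I would then apply the disintegration~\eqref{eq:disintegration_STOCHASTICKERNEL}: the degeneracy relation~\eqref{eq:MixedStrategy=STOCHASTICKERNEL_1} forces the first marginal $\STOCHASTICKERNEL{\FirstElements{\kappa}}{\kappa}(\cdot\,\vert\,\history)$ to be the Dirac mass at $\history_{\range{\FirstElements{\kappa}}}$, collapsing the $\control_{\FirstElements{\kappa}}$-integration into the substitution $\control_{\FirstElements{\kappa}} = \history_{\range{\FirstElements{\kappa}}}$; and the realization identity~\eqref{eq:MixedA-Strategy=STOCHASTICKERNEL} rewrites the remaining kernel $\STOCHASTICKERNEL{\LastElement{\kappa}}{\kappa}(\dd\control_{\LastElement{\kappa}}\,\vert\,\history_{\range{\FirstElements{\kappa}}},\history)$ as the pushforward of $\LebesgueMeasure_{\LastElement{\kappa}}$ under $\astrategyprim_{\LastElement{\kappa}}(\cdot,\history)$. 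Re-inserting this into the expectation, and using playability~\eqref{eq:piBcircS} to identify $T_{\range{\FirstElements{\kappa}}}$ with $\astrategyplayer_{\FirstElements{\kappa}}(\mixingdevice^{\player},T)$ on the event $\{T\in\HISTORY_{\kappa}^{\ordering}\}$, produces the integrated right-hand side of~\eqref{eq:MixedA-Strategy_property_substitution}.

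The main obstacle I anticipate is the bookkeeping around the conditional distributions: verifying that the Dirac reduction of the marginal $\STOCHASTICKERNEL{\FirstElements{\kappa}}{\kappa}$ is genuinely compatible with the joint disintegration~\eqref{eq:disintegration_STOCHASTICKERNEL} at every $\history\in\HISTORY_{\kappa}^{\ordering}$, ensuring that all exceptional sets arising from the use of regular conditional distributions can be absorbed into a single $\nu$-null set, and lifting the integrated identity to the pointwise-in-$\omega$ equality stated in~\eqref{eq:MixedA-Strategy_property_substitution}. The technical comfort here comes from the Borel space assumptions in Theorem~\ref{th:KET}, which guarantee the existence and regularity of all conditional distributions invoked above.
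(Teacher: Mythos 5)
Your proposal follows essentially the same route as the paper's proof: work under $\oproduct{\nu}{\LebesgueMeasure}$, condition on $\BigFunctionZ_{\LastElement{\kappa}}\circ\ReducedAumannSolutionMap{}{\astrategyall}$, use perfect recall plus the Borel measurable functional information assumption (through a Doob-type factorization) to pass the indicator $\1_{\HISTORY_{\kappa}^{\ordering}}$ and the integrand through the conditioning, and then invoke the kernels of Lemma~\ref{lem:kernel} together with the realization identity~\eqref{eq:MixedA-Strategy=STOCHASTICKERNEL}. The one genuine difference is at the tail end: the paper re-expands the $\control_{\FirstElements{\kappa}}$-integral through the marginal identity~\eqref{eq:disintegration_STOCHASTICKERNEL_marginal}, constructs a second Doob representative~$\Psi'$ and undoes the conditioning, so that $\astrategyplayer_{\FirstElements{\kappa}}$ reappears evaluated at the outer~$\mixingdevice^{\player}$; you instead observe that~\eqref{eq:MixedStrategy=STOCHASTICKERNEL_1} forces the marginal $\STOCHASTICKERNEL{\FirstElements{\kappa}}{\kappa}\nsetc{\cdot}{\history}$ to be the Dirac mass at $\history_{\range{\FirstElements{\kappa}}}$, and then identify $\history_{\range{\FirstElements{\kappa}}}=\projection_{\range{\FirstElements{\kappa}}}\bp{\AumannSolutionMap{\omega}{\astrategyall}{\mixingdevice}}$ with $\astrategyplayer_{\FirstElements{\kappa}}\bp{\mixingdevice^{\player},\AumannSolutionMap{\omega}{\astrategyall}{\mixingdevice}}$ by playability~\eqref{eq:piBcircS}. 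That shortcut is legitimate (the localization on the $\sigma\bp{\BigFunctionZ_{\LastElement{\kappa}}\circ\ReducedAumannSolutionMap{}{\astrategyall}}$-measurable event is what makes the freezing of $\history$ inside the kernel valid) and it spares the second Doob/regular-conditional-distribution round trip.

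The step that does not work as you state it is the recovery of the pointwise-in-$\omega$ identity~\eqref{eq:MixedA-Strategy_property_substitution} from its $\nu$-integrated version ``by varying $\Phi$ through a determining class'': $\Phi$ is a function of the actions only, so equality of the $\nu$-integrals for every such $\Phi$ cannot localize in $\omega$. To separate in $\omega$ you would have to insert an arbitrary bounded $\tribu{\NatureField}$-measurable weight $g\np{\omega}$, and $g$ is not measurable with respect to the conditioning variable $\BigFunctionZ_{\LastElement{\kappa}}\circ\ReducedAumannSolutionMap{}{\astrategyall}$, so the kernels of Lemma~\ref{lem:kernel} no longer represent the relevant conditional expectation without redoing their construction. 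You should be aware, though, that the paper's own proof has exactly the same scope: its chain of equalities begins and ends with an integral against $\nu\np{\dd{\omega}}\LebesgueMeasure\np{\dd{\mixingdevice}}$, so it too only establishes the $\nu$-integrated form, even though the statement displays $\omega$ as a free parameter and the lemma is later invoked at fixed~$\omega$ in Lemma~\ref{lem:substitution_sur_MIXINGDEVICEtotalorderingplayerorderingmixingdevice-playeromega}. So your blind proof matches the paper's argument in substance; just delete the determining-class remark, or replace it by an honest statement that what you prove (like the paper) is the identity integrated against~$\nu$.
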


\begin{proof}
  Let $\kappa \in \ORDER^{\player}$ and \( \Phi : \CONTROL_{\range{\kappa}} \to \RR \)
  be a bounded measurable function.
  As a preliminary result, we show that there exists a measurable function
  \(
  \Psi :
  \np{ \MIXINGDEVICE\times\mathbb{\BigFunctionZ}_{\LastElement{\kappa}},
    \MixingDeviceField\otimes\mathcal{\BigFunctionZ}_{\LastElement{\kappa}}} 
  \to \np{\RR,\borel{\RR}} \)
  such that
  \begin{equation}
    \Psi\bp{\mixingdevice,\BigFunctionZ_{\LastElement{\kappa}}\np{\history}}=
    \1_{ \HISTORY_{\kappa}^{\ordering} }\np{\history}
    \Phi \bp{
      \astrategyplayer_{\kappa}\np{\mixingdevice^{\player},\history} }
    \eqsepv \forall \history \in \HISTORY 
    \eqfinp   
    \label{eq:MixedA-Strategy_property_substitution_inproof_Doob}
  \end{equation}
  Indeed, the function \( \MIXINGDEVICE\times\HISTORY \ni \np{\mixingdevice,\history} \mapsto 
  \1_{ \HISTORY_{\kappa}^{\ordering} }\np{\history}
  \Phi \bp{
    \astrategyplayer_{\kappa}\np{\mixingdevice^{\player},\history} }
  \) is measurable with respect to 
  \( \MixingDeviceField\otimes
  \bp{ \HISTORY_{\kappa}^{\ordering} \cap
    \np{ \bvee \limits_{\agent \in \range{\kappa}}
      \tribu{\Information}_{\agent} } } \)
  by definition~\eqref{eq:Aumann_mixed_strategy} of an
  A-mixed strategy and by definition of the trace field
  \( \HISTORY_{\kappa}^{\ordering} \cap
  \np{ \bvee \limits_{\agent \in \range{\kappa}}
    \tribu{\Information}_{\agent} } \), hence 
  with respect to 
  \( \MixingDeviceField\otimes
  \bp{ \HISTORY_{\kappa}^{\ordering} \cap \np{
      \tribu{\Choice}_{\range{\FirstElements{\kappa}}} 
      \vee \tribu{\Information}_{\LastElement{\kappa}} } } \) by
  definition~\eqref{eq:PastChoiceField} of \( \tribu{\Choice}_{\range{\FirstElements{\kappa}}} \),
  hence with respect to \( \MixingDeviceField\otimes
  \bp{ \tribu{\Information}_{\LastElement{\kappa}} \vee \np{ \HISTORY_{\kappa}^{\ordering} \cap
      \tribu{\Information}_{\LastElement{\kappa}}} } \)
  by perfect recall~\eqref{eq:PerfectRecall}, 
  hence with respect to \( \MixingDeviceField\otimes
  \tribu{\Information}_{\LastElement{\kappa}} \)
  as \( \HISTORY_{\kappa}^{\ordering} \in  \tribu{\Information}_{\LastElement{\kappa}} \)
  by~\eqref{eq:PerfectRecall} with \( \History=\HISTORY\).
  As a consequence, as \( \Converse{\BigFunctionZ_{\LastElement{\kappa}}}\np{\mathcal{\BigFunctionZ}_{\LastElement{\kappa}}}=
  \tribu{\Information}_{\LastElement{\kappa}} \) by assumption,
  by Doob functional theorem \cite[Chap.~1, p.~18]{Dellacherie-Meyer:1975},
  there exists a measurable function
  \(
  \Psi :
  \np{ \MIXINGDEVICE\times\mathbb{\BigFunctionZ}_{\LastElement{\kappa}},
    \MixingDeviceField\otimes\mathcal{\BigFunctionZ}_{\LastElement{\kappa}}} 
  \to \np{\RR,\borel{\RR}} \)
  such that~\eqref{eq:MixedA-Strategy_property_substitution_inproof_Doob} holds
  true, 
  because \( \np{ \MIXINGDEVICE\times\mathbb{\BigFunctionZ}_{\LastElement{\kappa}},
    \MixingDeviceField\otimes\mathcal{\BigFunctionZ}_{\LastElement{\kappa}}} \) is
  a product of Borel spaces, {hence itself a Borel space.}
  
  We have that
  \begin{align*}
    \int_{\product{\Omega}{\MIXINGDEVICE}}
    & \LebesgueMeasure\np{\dd{\mixingdevice}}\nu\np{\dd{\omega}}
      \1_{ \HISTORY_{\kappa}^{\ordering} }\np{
      \AumannSolutionMap{\omega}{\astrategyall}{\mixingdevice} } \Phi\bp{
      \astrategyplayer_{\kappa}\bp{\mixingdevice^{\player},
      \AumannSolutionMap{\omega}{\astrategyall}{\mixingdevice} } }
      \tag{where \(
      \mixingdevice=\np{\mixingdevice^{\player},\mixingdevice^{-\player}} \)}
    \\
    =&
       \int_{\product{\Omega}{\MIXINGDEVICE}}\LebesgueMeasure\np{\dd{\mixingdevice}}\nu\np{\dd{\omega}}
       \Psi\bp{\mixingdevice,\BigFunctionZ_{\LastElement{\kappa}}
       \np{\AumannSolutionMap{\omega}{\astrategyall}{\mixingdevice} } } 
       \tag{by property~\eqref{eq:MixedA-Strategy_property_substitution_inproof_Doob} of
       the function~$\Psi$}
    \\
    =&
       \int_{\product{\Omega}{\MIXINGDEVICE}}\LebesgueMeasure\np{\dd{\mixingdevice}}\nu\np{\dd{\omega}}
       \Bc{ \int_{\product{\Omega}{\MIXINGDEVICE}} 
       \np{\oproduct{\nu}{\LebesgueMeasure}}^{\mid\BigFunctionZ_{\LastElement{\kappa}}
       \circ\ReducedAumannSolutionMap{}{\astrategyall}} 
       \nsetc{\dd\mixingdevice'\dd\omega'}{\SmallFunctionZ} 
       \Psi\bp{\mixingdevice',\SmallFunctionZ}
       }_{\mid \SmallFunctionZ=\BigFunctionZ_{\LastElement{\kappa}}\circ
       \AumannSolutionMap{\omega}{\astrategyall}{\mixingdevice} }
       \intertext{by property of regular conditional distributions
       \cite[Th.~6.4]{Kallenberg:2002} 
       }
    =&   
       \int_{\product{\Omega}{\MIXINGDEVICE}}\LebesgueMeasure\np{\dd{\mixingdevice}}\nu\np{\dd{\omega}}
       \Bc{ \int_{\product{\Omega}{\MIXINGDEVICE}} 
       \np{\oproduct{\nu}{\LebesgueMeasure}}^{\mid\BigFunctionZ_{\LastElement{\kappa}}
       \circ\ReducedAumannSolutionMap{}{\astrategyall}} 
       \nsetc{\dd\mixingdevice'\dd\omega'}{\BigFunctionZ_{\LastElement{\kappa}}\np{\history}}
       \Psi\bp{\mixingdevice',\BigFunctionZ_{\LastElement{\kappa}}\np{\history}}
       }_{\mid \history= \AumannSolutionMap{\omega}{\astrategyall}{\mixingdevice} 
       }
       \tag{by the change of variables \( \SmallFunctionZ=
       \BigFunctionZ_{\LastElement{\kappa}}\np{\history} \),
       \( \history= \AumannSolutionMap{\omega}{\astrategyall}{\mixingdevice} \) }
    \\
    =&   
       \int_{\product{\Omega}{\MIXINGDEVICE}}\LebesgueMeasure\np{\dd{\mixingdevice}}\nu\np{\dd{\omega}}
       \Big[  \int_{\product{\Omega}{\MIXINGDEVICE}} 
       \np{\oproduct{\nu}{\LebesgueMeasure}}^{\mid\BigFunctionZ_{\LastElement{\kappa}}
       \circ\ReducedAumannSolutionMap{}{\astrategyall}} 
       \nsetc{\dd\mixingdevice'\dd\omega'}{\BigFunctionZ_{\LastElement{\kappa}}\np{\history}}
    \\
    &\hspace{6cm}
      \1_{ \HISTORY_{\kappa}^{\ordering} }\np{\history}
      \Phi \bp{
      \astrategyplayer_{\kappa}\np{{\mixingdevice'}^{\player},\history} }
      \Big]_{\mid \history=\AumannSolutionMap{\omega}{\astrategyall}{\mixingdevice} 
      }
      \tag{by property~\eqref{eq:MixedA-Strategy_property_substitution_inproof_Doob} of
      the function~$\Psi$}
    \\
    =&   
       \int_{\product{\Omega}{\MIXINGDEVICE}}\LebesgueMeasure\np{\dd{\mixingdevice}}\nu\np{\dd{\omega}}
       \Big[  \1_{ \HISTORY_{\kappa}^{\ordering} }\np{\history}
       \int_{\product{\Omega}{\MIXINGDEVICE}} 
       \np{\oproduct{\nu}{\LebesgueMeasure}}^{\mid\BigFunctionZ_{\LastElement{\kappa}}
       \circ\ReducedAumannSolutionMap{}{\astrategyall}} 
       \nsetc{\dd\mixingdevice'\dd\omega'}{\BigFunctionZ_{\LastElement{\kappa}}\np{\history}}
       \Phi \bp{
       \astrategyplayer_{\kappa}\np{{\mixingdevice'}^{\player},\history} }
       \Big]_{\mid \history= \AumannSolutionMap{\omega}{\astrategyall}{\mixingdevice} 
       }
  \end{align*}
  where the inner integral (the last one inside the brackets) is given by   
  \begin{align*}
    \int_{\product{\Omega}{\MIXINGDEVICE}}
    &
      \np{\oproduct{\nu}{\LebesgueMeasure}}^{\mid\BigFunctionZ_{\LastElement{\kappa}}
      \circ\ReducedAumannSolutionMap{}{\astrategyall}} 
      \nsetc{\dd\mixingdevice'\dd\omega'}{\BigFunctionZ_{\LastElement{\kappa}}\np{\history}}
      \Phi \bp{
      \astrategyplayer_{\kappa}\np{{\mixingdevice'}^{\player},\history} }
    \\
    &=
      \int_{\CONTROL_{\range{\kappa}}} 
      \Phi\np{\control_{\kappa}} 
      \STOCHASTICKERNEL{}{\kappa}
      \nsetc{ \dd\control_{\kappa} }{\history}
      \tag{by definition~\eqref{eq:MixedStrategy=STOCHASTICKERNEL} of the
      stochastic kernel~\( \STOCHASTICKERNEL{}{\kappa} \)} 
    \\
    &=   
      \int_{\CONTROL_{\range{\kappa}}} 
      \Phi\np{\control_{\FirstElements{\kappa}},\control_{\LastElement{\kappa}}}
      \1_{ \na{ \control_{\FirstElements{\kappa}}=\history_{\range{\FirstElements{\kappa}}} }}
      \STOCHASTICKERNEL{\LastElement{\kappa}}{\kappa}\nsetc{\dd\control_{\LastElement{\kappa}}}%
      {\control_{\FirstElements{\kappa}},\history}
      \otimes 
      \STOCHASTICKERNEL{\FirstElements{\kappa}}{\kappa}
      \nsetc{\dd\control_{\FirstElements{\kappa}}}{\history}
      \intertext{by change of variables \( \control_{\kappa}=
      \np{\control_{\FirstElements{\kappa}},\control_{\LastElement{\kappa}}} \), 
      by property~\eqref{eq:MixedStrategy=STOCHASTICKERNEL_1}
      and by disintegration formula~\eqref{eq:disintegration_STOCHASTICKERNEL} for the
      stochastic kernel~\( \STOCHASTICKERNEL{}{\kappa} \)}
    &=   
      \int_{\CONTROL_{\range{\FirstElements{\kappa}}}} 
      \STOCHASTICKERNEL{\FirstElements{\kappa}}{\kappa}
      \nsetc{\dd\control_{\FirstElements{\kappa}}}{\history}
      \1_{ \na{ \control_{\FirstElements{\kappa}}=\history_{\range{\FirstElements{\kappa}}} }}
      \int_{\CONTROL_{\LastElement{\kappa}}}   
      \Phi\np{\control_{\FirstElements{\kappa}},\control_{\LastElement{\kappa}}}
      \STOCHASTICKERNEL{\LastElement{\kappa}}{\kappa}\nsetc{\dd\control_{\LastElement{\kappa}}}%
      {\history_{\range{\FirstElements{\kappa}},}\history}
      \intertext{by Fubini's Theorem and by substitution \(
      \control_{\FirstElements{\kappa}}=\history_{\range{\FirstElements{\kappa}}}
      \) in the term \(    \STOCHASTICKERNEL{\LastElement{\kappa}}{\kappa}\nsetc{\dd\control_{\LastElement{\kappa}}}%
      {\control_{\FirstElements{\kappa}},\history} \) }
    &=   
      \int_{\CONTROL_{\range{\kappa}}} 
      \STOCHASTICKERNEL{\FirstElements{\kappa}}{\kappa}
      \nsetc{\dd\control_{\FirstElements{\kappa}}}{\history}
      \int_{\CONTROL_{\range{\kappa}}} 
      \Phi\np{\control_{\FirstElements{\kappa}},\control_{\LastElement{\kappa}}}
      \STOCHASTICKERNEL{\LastElement{\kappa}}{\kappa}\nsetc{\dd\control_{\LastElement{\kappa}}}%
      {\history_{\range{\FirstElements{\kappa}},}\history}
      \tag{by property~\eqref{eq:MixedStrategy=STOCHASTICKERNEL_1} for the
      stochastic kernel~\( \STOCHASTICKERNEL{}{\kappa} \)}
    \\
    &=   
      \int_{\CONTROL_{\range{\kappa}}} 
      \STOCHASTICKERNEL{\FirstElements{\kappa}}{\kappa}
      \nsetc{\dd\control_{\FirstElements{\kappa}}}{\history}
      \int_{\MIXINGDEVICE_{\LastElement{\kappa}}}
      \LebesgueMeasure_{\LastElement{\kappa}} 
      \np{\dd{\mixingdevice}'_{\LastElement{\kappa}}}
      \Phi\np{\control_{\FirstElements{\kappa}},
      \astrategyprim_{\LastElement{\kappa}}\np{{\mixingdevice}'_{\LastElement{\kappa}},\history}}
      \tag{by property~\eqref{eq:MixedA-Strategy=STOCHASTICKERNEL} of the mapping~$\astrategyprim_{\LastElement{\kappa}}$}
    \\
    &= 
      \int_{\product{\Omega}{\MIXINGDEVICE}}
      \np{\oproduct{\nu}{\LebesgueMeasure}}^{\mid\BigFunctionZ_{\LastElement{\kappa}}
      \circ\ReducedAumannSolutionMap{}{\astrategyall}} 
      \nsetc{\dd\mixingdevice''\dd\omega''}{\BigFunctionZ_{\LastElement{\kappa}}\np{\history}}
      \int_{\MIXINGDEVICE_{\LastElement{\kappa}}}
      \LebesgueMeasure_{\LastElement{\kappa}} 
      \np{\dd{\mixingdevice}'_{\LastElement{\kappa}}}
      \Phi\np{  \astrategyplayer_{\FirstElements{\kappa}}\np{{\mixingdevice}''^{\player},\history},
      \astrategyprim_{\LastElement{\kappa}}\np{{\mixingdevice}'_{\LastElement{\kappa}},\history}}
      \tag{by property~\eqref{eq:disintegration_STOCHASTICKERNEL_marginal} for the
      stochastic kernel~\( \STOCHASTICKERNEL{}{\kappa} \)}
      \eqfinp
  \end{align*}

  Now, we show that there exists a measurable function
  \(
  \Psi' :
  \np{ \MIXINGDEVICE\times\mathbb{\BigFunctionZ}_{\LastElement{\kappa}},
    \MixingDeviceField\otimes\mathcal{\BigFunctionZ}_{\LastElement{\kappa}}} 
  \to \np{\RR,\borel{\RR}} \)
  such that
  \begin{equation}
    \begin{split}
      \Psi'\bp{\mixingdevice'',\BigFunctionZ_{\LastElement{\kappa}}\np{\history}}
      =
      \1_{ \HISTORY_{\kappa}^{\ordering} }\np{\history}
      \int_{\MIXINGDEVICE_{\LastElement{\kappa}}}
      \LebesgueMeasure_{\LastElement{\kappa}} 
      \np{\dd{\mixingdevice}'_{\LastElement{\kappa}}}
      \Phi\np{  \astrategyplayer_{\FirstElements{\kappa}}\np{{\mixingdevice''}^{\player},\history},
        {\astrategyprim}_{\LastElement{\kappa}}\np{{\mixingdevice}'_{\LastElement{\kappa}},\history}}
      \eqsepv
      \\
      \forall \np{\mixingdevice'',\history} \in \MIXINGDEVICE\times\HISTORY 
      \eqfinp   
    \end{split}
    \label{eq:MixedA-Strategy_property_substitution_inproof_Doob_bis}
  \end{equation}
  Indeed, the function
  \(
  \MIXINGDEVICE^{\player}\times\MIXINGDEVICE_{\LastElement{\kappa}}\times\HISTORY
  \ni \np{\mixingdevice'',{\mixingdevice}'_{\LastElement{\kappa}},\history} \mapsto 
  \1_{ \HISTORY_{\kappa}^{\ordering} }\np{\history}
  \Phi\np{  \astrategyplayer_{\FirstElements{\kappa}}\np{{\mixingdevice''}^{\player},\history},
    \astrategyprim_{\LastElement{\kappa}}\np{{\mixingdevice}'_{\LastElement{\kappa}},\history}}  \)
  is measurable with respect to 
  \( \MixingDeviceField^{\player}\otimes\MixingDeviceField_{\LastElement{\kappa}}
  \otimes\bp{ \HISTORY_{\kappa}^{\ordering} \cap \np{ \bvee \limits_{\agent \in \range{\kappa}}
      \tribu{\Information}_{\agent} } }\)
  by definition~\eqref{eq:Aumann_mixed_strategy} of an
  A-mixed strategy and by definition of the trace field
  \( \HISTORY_{\kappa}^{\ordering} \cap \np{ \bvee \limits_{\agent \in \range{\kappa}}
    \tribu{\Information}_{\agent} } \), hence 
  with respect to 
  \( \MixingDeviceField^{\player}\otimes\MixingDeviceField_{\LastElement{\kappa}}
  \otimes\bp{ \HISTORY_{\kappa}^{\ordering} \cap \np{
      \tribu{\Choice}_{\range{\FirstElements{\kappa}}} 
      \vee \tribu{\Information}_{\LastElement{\kappa}} } } \) by
  definition~\eqref{eq:PastChoiceField} of \( \tribu{\Choice}_{\range{\FirstElements{\kappa}}} \),
  hence with respect to \( \MixingDeviceField^{\player}\otimes\MixingDeviceField_{\LastElement{\kappa}}\otimes
  \np{ \tribu{\Information}_{\LastElement{\kappa}} \vee
    \np{ \HISTORY_{\kappa}^{\ordering} \cap
      \tribu{\Information}_{\LastElement{\kappa}} } } \)
  by perfect recall~\eqref{eq:PerfectRecall},
  hence to \( \MixingDeviceField^{\player}\otimes\MixingDeviceField_{\LastElement{\kappa}}\otimes
  \tribu{\Information}_{\LastElement{\kappa}} \)
  as \( \HISTORY_{\kappa}^{\ordering} \in  \tribu{\Information}_{\LastElement{\kappa}} \)
  by~\eqref{eq:PerfectRecall} with \( \History=\HISTORY\).
  By Fubini's Theorem, we deduce that the function
  \( \MIXINGDEVICE\times\HISTORY \ni \np{\mixingdevice,\history} \mapsto 
  \1_{ \HISTORY_{\kappa}^{\ordering} }\np{\history}
  \int_{\MIXINGDEVICE_{\LastElement{\kappa}}}
  \LebesgueMeasure_{\LastElement{\kappa}} 
  \np{\dd{\mixingdevice}'_{\LastElement{\kappa}}} \)
  \(  \Phi\np{  \astrategyplayer_{\FirstElements{\kappa}}\np{{\mixingdevice''}^{\player},\history},
    \astrategyprim_{\LastElement{\kappa}}\np{{\mixingdevice}'_{\LastElement{\kappa}},\history}}
  \) is measurable with respect to 
  \( \MixingDeviceField\otimes
  \tribu{\Information}_{\LastElement{\kappa}} \).
  As a consequence, as \( \Converse{\BigFunctionZ_{\LastElement{\kappa}}}\np{\mathcal{\BigFunctionZ}_{\LastElement{\kappa}}}=
  \tribu{\Information}_{\LastElement{\kappa}} \)
  by the Borel measurable functional information
  assumption (see Definition~\ref{de:Borel_measurable_functional_information_assumption}),
  by Doob functional theorem \cite[Chap.~1, p.~18]{Dellacherie-Meyer:1975},
  there exists a measurable function
  \(
  \Psi' :
  \np{ \MIXINGDEVICE\times\mathbb{\BigFunctionZ}_{\LastElement{\kappa}},
    \MixingDeviceField\otimes\mathcal{\BigFunctionZ}_{\LastElement{\kappa}}} 
  \to \np{\RR,\borel{\RR}} \)
  such that~\eqref{eq:MixedA-Strategy_property_substitution_inproof_Doob_bis} holds
  true, because \( \np{ \MIXINGDEVICE\times\mathbb{\BigFunctionZ}_{\LastElement{\kappa}},
    \MixingDeviceField\otimes\mathcal{\BigFunctionZ}_{\LastElement{\kappa}}} \) is
  a product of Borel spaces, hence itself a Borel space. 

  We conclude that
  \begin{align*}
    \int_{\product{\Omega}{\MIXINGDEVICE}}
    &\LebesgueMeasure\np{\dd{\mixingdevice}}\nu\np{\dd{\omega}}
      \1_{ \HISTORY_{\kappa}^{\ordering} }\np{
      \AumannSolutionMap{\omega}{\astrategyall}{\mixingdevice} } \Phi\bp{
      \astrategyplayer_{\kappa}\bp{\mixingdevice^{\player},
      \AumannSolutionMap{\omega}{\astrategyall}{\mixingdevice} } }    
    \\
    =&
       \int_{\product{\Omega}{\MIXINGDEVICE}}\LebesgueMeasure\np{\dd{\mixingdevice}}\nu\np{\dd{\omega}}
       \Big[  \int_{\product{\Omega}{\MIXINGDEVICE}}
       \1_{ \HISTORY_{\kappa}^{\ordering} }\np{\history}
       \np{\oproduct{\nu}{\LebesgueMeasure}}^{\mid\BigFunctionZ_{\LastElement{\kappa}}
       \circ\ReducedAumannSolutionMap{}{\astrategyplayer}} 
       \nsetc{\dd\mixingdevice''\dd\omega''}{\BigFunctionZ_{\LastElement{\kappa}}\np{\history}}
    \\
    &\hspace{4cm}
      \int_{\MIXINGDEVICE_{\LastElement{\kappa}}}
      \LebesgueMeasure_{\LastElement{\kappa}} 
      \np{\dd{\mixingdevice}'_{\LastElement{\kappa}}}
      \Phi\np{  \astrategyplayer_{\FirstElements{\kappa}}\np{{\mixingdevice''}^{\player},\history},
      \astrategyprim_{\LastElement{\kappa}}\np{{\mixingdevice}'_{\LastElement{\kappa}},\history}}
      \Big]_{\mid \history= \AumannSolutionMap{\omega}{\astrategyall}{\mixingdevice}
      }
      \tag{by substitution of the inner integral expression}
    \\
    =&
       \int_{\product{\Omega}{\MIXINGDEVICE}}\LebesgueMeasure\np{\dd{\mixingdevice}}\nu\np{\dd{\omega}}
       \Bc{   \int_{\product{\Omega}{\MIXINGDEVICE}}
       \np{\oproduct{\nu}{\LebesgueMeasure}}^{\mid\BigFunctionZ_{\LastElement{\kappa}}
       \circ\ReducedAumannSolutionMap{}{\astrategyall}} 
       \nsetc{\dd\mixingdevice''\dd\omega''}{\SmallFunctionZ}
       \Psi'\np{\mixingdevice'',\SmallFunctionZ }
       }_{\mid \SmallFunctionZ=\BigFunctionZ_{\LastElement{\kappa}}\circ
       \AumannSolutionMap{\omega}{\astrategyall}{\mixingdevice} }
       \intertext{by
       property~\eqref{eq:MixedA-Strategy_property_substitution_inproof_Doob_bis}
       of the function~$\Psi'$}
    =&   
       \int_{\product{\Omega}{\MIXINGDEVICE}}\LebesgueMeasure\np{\dd{\mixingdevice}}\nu\np{\dd{\omega}}
       \Psi'\bp{\mixingdevice,\BigFunctionZ_{\LastElement{\kappa}}\circ
       \AumannSolutionMap{\omega}{\astrategyall}{\mixingdevice} }
       \intertext{by property of regular conditional distributions
       \cite[Th.~6.4]{Kallenberg:2002}} 
       =&   
          \int_{\product{\Omega}{\MIXINGDEVICE}}\LebesgueMeasure\np{\dd{\mixingdevice}}\nu\np{\dd{\omega}}
          \1_{ \HISTORY_{\kappa}^{\ordering} }\np{
          \AumannSolutionMap{\omega}{\astrategyall}{\mixingdevice} }
    \\
    &\hspace{3cm}
      \int_{\MIXINGDEVICE_{\LastElement{\kappa}}}
      \LebesgueMeasure_{\LastElement{\kappa}} 
      \np{\dd{\mixingdevice}'_{\LastElement{\kappa}}}
      \Phi\bp{\mixingdevice^{\player},
      \astrategyplayer_{\FirstElements{\kappa}}\bp{{\mixingdevice}^{\player},
      \AumannSolutionMap{\omega}{\astrategyall}{\mixingdevice}},
      \astrategyprim_{\LastElement{\kappa}}\bp{{\mixingdevice}'_{\LastElement{\kappa}},
      \AumannSolutionMap{\omega}{\astrategyall}{\mixingdevice}}
      }
  \end{align*}
  by property~\eqref{eq:MixedA-Strategy_property_substitution_inproof_Doob_bis} of the function~$\Psi'$. 
  \medskip
  
  This ends the proof.
\end{proof}

The next
Lemma~\ref{lem:substitution_sur_MIXINGDEVICEtotalorderingplayerorderingmixingdevice-playeromega}
is a kind of Kuhn's Theorem, but on the randomizing device space~$\MIXINGDEVICE$
instead of the configuration space~$\HISTORY$.
The proof combines the previous Lemma~\ref{lem:MixedA-Strategy_property_substitution} with the playability
property of the solution map and an induction.

\begin{lemma}[Equivalence on $\MIXINGDEVICE^\player$]
  \label{lem:substitution_sur_MIXINGDEVICEtotalorderingplayerorderingmixingdevice-playeromega}
  Suppose that the assumptions of Theorem~\ref{th:KET} are satisfied,
hence, in particular, that the  player \( \player \in \PLAYER \)
  satisfies perfect recall, as in Definition~\ref{de:PerfectRecall}. 
  We consider a probability~$\nu$ on \( \np{\Omega, \tribu{\NatureField}} \),
  an A-mixed strategy \( \astrategyplayer
  =\sequence{\astrategyplayer_{\agent}}{\agent\in \AGENT^{\player}} \),
  of the player~\( \player \)
  and an A-mixed strategy \( \astrategyothers=
  \sequence{\astrategyothers_{\agent}}{\agent\in \AGENT^{-\player}} \)
  of the other players.
  We let \( \astrategyprim=\sequence{\astrategyprim_{\agent}}{\agent\in \AGENT^{\player}} \)
  denote the A-behavioral strategy of the player~\( \player \)
  given by Lemma~\ref{lem:MixedA-Strategy=STOCHASTICKERNEL}.

  Then, for any total $\player$-ordering \( \totalordering \in \ORDER^{\player}_{\cardinal{\AGENT^{\player}}} \), 
  for any bounded measurable function \( \Criterion : \np{\HISTORY,\tribu{\History}}
  \to \np{\RR,\borel{\RR}} \),
  for any \( \omega\in\Omega \) and \(
  \mixingdevice^{-\player}\in\MIXINGDEVICE^{-\player} \), 
  we have that
  \begin{equation}
    \int_{ \MIXINGDEVICE^\player }
    \LebesgueMeasure^{\player}\np{\dd{\mixingdevice}^{\player}}
    \np{ \1_{ \HISTORY_{\totalordering}^{\ordering} }\Criterion } \bp{
      \npAumannSolutionMap{\omega}{\astrategyothers,\astrategyplayer}{\mixingdevice^{-\player},\mixingdevice^{\player}} }
    =
    \int_{{\MIXINGDEVICE}^{\player}}
    \LebesgueMeasure^{\player}\np{\dd{\mixingdevice}^{\player}}
    \np{ \1_{ \HISTORY_{\totalordering}^{\ordering} }\Criterion } \bp{
      \npAumannSolutionMap{\omega}{\astrategyothers,\astrategyprim}{\mixingdevice^{-\player},{\mixingdevice}^{\player}} }
    \eqfinp 
    \label{eq:substitution_sur_MIXINGDEVICEtotalorderingplayerorderingmixingdevice-playeromega}
  \end{equation}
\end{lemma}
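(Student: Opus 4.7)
The plan is to perform a finite induction on $k \in \ic{0,\cardinal{\AGENT^{\player}}}$, substituting the components of $\astrategyplayer$ one by one by those of $\astrategyprim$, along the total $\player$-ordering $\totalordering$ from the last agent backward to the first. At each step, the atomic tool is Lemma~\ref{lem:MixedA-Strategy_property_substitution}, which performs exactly the ``last agent of $\kappa$'' substitution. Throughout, the structural fact we exploit is partial causality: perfect recall implies it by Proposition~\ref{prop:partial_causality}, and Lemma~\ref{lem:PartialCausality_property} then tells us that on $\HISTORY_{\totalordering}^{\ordering}$ the value of any $\tribu{\Information}_{\totalordering(j)}$-measurable expression is determined by $\omega$, by the $\AGENT^{-\player}$-coordinates of $T_{\astrategyall}$, and by the actions of the preceding player-agents $\totalordering(1),\ldots,\totalordering(j{-}1)$. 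This is what lets us, at each stage, recast the integrand in the form demanded by Lemma~\ref{lem:MixedA-Strategy_property_substitution}.

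Concretely, set $n=\cardinal{\AGENT^{\player}}$, $\astrategyall=(\astrategyothers,\astrategyplayer)$ and $\kappa_j=\cut_j(\totalordering)\in\ORDER^{\player}_j$. For each $k\in\ic{0,n}$ I introduce the intermediate integral
\begin{equation*}
  J_k \;=\; \int_{\MIXINGDEVICE^{\player}}\LebesgueMeasure^{\player}\np{\dd\mixingdevice^{\player}}
  \int\prod_{i=k+1}^{n}\LebesgueMeasure_{\totalordering(i)}\np{\dd\mixingdevice'_{\totalordering(i)}}\;
  \1_{\HISTORY_{\totalordering}^{\ordering}}\np{T_{\astrategyall}}\,\Criterion\np{\widetilde\history^{(n-k)}},
\end{equation*}
where $T_{\astrategyall}=\AumannSolutionMap{\omega}{\astrategyall}{\mixingdevice^{-\player},\mixingdevice^{\player}}$ and the configuration $\widetilde\history^{(n-k)}$ is built from $\widetilde\history^{(0)}=T_{\astrategyall}$ by setting, recursively for $i=1,\ldots,n{-}k$, the coordinate at $\totalordering(k{+}i)$ equal to $c_{\totalordering(k+i)}=\astrategyprim_{\totalordering(k+i)}\np{\mixingdevice'_{\totalordering(k+i)},\widetilde\history^{(i-1)}}$; Lemma~\ref{lem:PartialCausality_property} guarantees that $c_{\totalordering(k+i)}$ does not depend on the ``future'' coordinates of $\widetilde\history^{(i-1)}$ that are still provisional. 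Playability~\eqref{eq:piBcircS} for $\astrategyall$ identifies $J_n$ with the left-hand side of~\eqref{eq:substitution_sur_MIXINGDEVICEtotalorderingplayerorderingmixingdevice-playeromega}. The induction step $J_k\to J_{k-1}$ is then obtained by invoking Lemma~\ref{lem:MixedA-Strategy_property_substitution} with $\kappa=\kappa_k$: using $\HISTORY_{\totalordering}^{\ordering}\subset\HISTORY_{\kappa_k}^{\ordering}$ together with Lemma~\ref{lem:PartialCausality_property}, the remaining indicator $\1_{\HISTORY_{\totalordering}^{\ordering}}$ and the behavioral tail $(c_{\totalordering(k+1)},\ldots,c_{\totalordering(n)})$ are absorbed into a bounded measurable $\Phi\colon\CONTROL_{\range{\kappa_k}}\to\RR$ that depends only on the $k$ actions $\astrategyplayer_{\kappa_k}(\mixingdevice^{\player},T_{\astrategyall})$; applying the lemma then converts the $\LastElement{\kappa_k}=\totalordering(k)$-slot from $\astrategyplayer_{\totalordering(k)}(\mixingdevice^{\player},T_{\astrategyall})$ into $\int\LebesgueMeasure_{\totalordering(k)}(\dd\mixingdevice'_{\totalordering(k)})\,\astrategyprim_{\totalordering(k)}(\mixingdevice'_{\totalordering(k)},T_{\astrategyall})$, which is exactly $J_{k-1}$.

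For the terminal identification $J_0$ = RHS of~\eqref{eq:substitution_sur_MIXINGDEVICEtotalorderingplayerorderingmixingdevice-playeromega} with $\astrategyall'=(\astrategyothers,\astrategyprim)$, one verifies that on $\HISTORY_{\totalordering}^{\ordering}$ the recursively built tuple $(c_{\totalordering(1)},\ldots,c_{\totalordering(n)})$ coincides with the $\player$-coordinates of $T_{\astrategyall'}$, and that the $\AGENT^{-\player}$-coordinates of $T_{\astrategyall}$ and $T_{\astrategyall'}$ agree; both identifications follow from a secondary induction along $\totalordering$, combining playability~\eqref{eq:piBcircS} for $\astrategyall'$ with Lemma~\ref{lem:PartialCausality_property} and, for the other players' components, Proposition~\ref{pr:invariance}. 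The main obstacle will be the verification that at every induction step the integrand really has the form $\1_{\HISTORY_{\kappa_k}^{\ordering}}(T)\,\Phi(\astrategyplayer_{\kappa_k}(\mixingdevice^{\player},T))$ with $\Phi$ depending only on the $\cardinal{\kappa_k}$ listed actions; partial causality is precisely what makes this possible, by preventing the downstream behavioral expressions from carrying a residual dependence on the ``future'' coordinates of $T_{\astrategyall}$ that Lemma~\ref{lem:MixedA-Strategy_property_substitution} cannot absorb.
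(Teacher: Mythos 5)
There is a genuine gap, and it sits exactly where the paper's proof invests its main effort. Your interpolants $J_k$ freeze both the indicator $\1_{\HISTORY_{\totalordering}^{\ordering}}$ and the $\AGENT^{-\player}$-coordinates at the \emph{original} solution $T_{\np{\astrategyothers,\astrategyplayer}}$, and your terminal identification of $J_0$ with the right-hand side rests on the claims that these coordinates agree with those of $T_{\np{\astrategyothers,\astrategyprim}}$ and that the recursively overwritten configuration $\widetilde\history^{(n-k)}$ is the closed-loop solution of the hybrid profile. Proposition~\ref{pr:invariance} does not deliver this: it only asserts that an $\tribu{\Information}_{\agent}$-measurable functional of the solution is unchanged when agent~$\agent$'s \emph{own} strategy is modified (absence of self-information). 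When the focus player's realized actions change (seed $\mixingdevice^{\player}$ replaced by behavioral seeds), every agent of the \emph{other} players who observes those actions reacts, so the $\AGENT^{-\player}$-coordinates of the two solution maps differ pointwise, as does the event $\bp{T\in\HISTORY_{\totalordering}^{\ordering}}$; the lemma is an equality of integrals, not of integrands. A one-agent example makes this fatal: if the other player's agent simply copies the focus agent's action and $\Criterion\np{\history}=\1\na{\history_{a}=\history_{b}}$, the true left- and right-hand sides both equal $1$, while your $J_0$ (behavioral action paired with the frozen old observation) equals $1/2$. Because the other players' coordinates are never re-solved, $\widetilde\history^{(n-k)}$ is not the hybrid closed-loop solution; the paper avoids this by keeping, at every stage, the genuine solution map of $\np{\astrategyothers,\astrategyplayer_{\kappa},\astrategyprim_{\totalordering\setminus\kappa}}$, rewriting it through the partial solution map $\widehat{\SolutionMap}^{\range{\kappa}}$ (this is where $\player$-strong measurability is used), and reserving Proposition~\ref{pr:invariance} for the single swap of $\astrategyplayer_{\LastElement{\kappa}}$ for $\astrategyprim_{\LastElement{\kappa}}$ \emph{inside} the solution map against an $\tribu{\Information}_{\LastElement{\kappa}}$-measurable integrand.

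The induction step is likewise not an instance of Lemma~\ref{lem:MixedA-Strategy_property_substitution}. That lemma requires the integrand to be $\1_{\HISTORY_{\kappa_k}^{\ordering}}$ times a fixed measurable $\Phi$ of the $k$ actions $\astrategyplayer_{\kappa_k}\np{\mixingdevice^{\player},\cdot}$ \emph{alone}. In your $J_k$, the factor $\1_{\HISTORY_{\totalordering}^{\ordering}}\np{T_{\astrategyall}}$ depends, by partial causality itself, on the coordinates $T_{\totalordering(1)},\dots,T_{\totalordering(n-1)}$, hence on the mixed actions of agents beyond position~$k$; and the tail values $c_{\totalordering(k+i)}$ depend on the frozen $\AGENT^{-\player}$-coordinates of $T_{\astrategyall}$, which depend on $\mixingdevice^{\player}$ through the closed loop and are not functions of the listed $k$ actions. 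In the paper this difficulty disappears because the entire downstream object (total-ordering indicator, other players' actions, behavioral tail, criterion) is expressed as a function of $\control_{\range{\kappa}}$ via $\widehat{\SolutionMap}^{\range{\kappa}}$. Finally, invoking Lemma~\ref{lem:PartialCausality_property} at the configurations $\widetilde\history^{(i-1)}$ requires them to lie in $\HISTORY^{\ordering}_{\totalordering(1),\dots,\totalordering(k+i-1)}$, which is not guaranteed once coordinates have been overwritten, since membership depends on those very coordinates. To repair the argument you essentially have to reinstate the paper's scheme: genuine hybrid closed-loop solutions at each stage, the partial solution maps, and the measurability bookkeeping that lets Lemma~\ref{lem:MixedA-Strategy_property_substitution} and Proposition~\ref{pr:invariance} be applied.
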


\begin{proof}
  For any total $\player$-ordering \( \totalordering \in
  \ORDER^{\player}_{\cardinal{\AGENT^{\player}}} \)
  and any $\player$-ordering~$\kappa \in \ORDER^{\player}$,
  we say that \( \kappa \moinsfine \totalordering \) if
  \( \kappa=\cut_{\cardinal{\kappa}} \totalordering \) where $\cut_{\cardinal{\kappa}}$ has been defined
  in~\eqref{eq:cut}.
  When \( \kappa\moinsfine\totalordering \), we introduce the tail ordering
  \( \totalordering\!\setminus\!\kappa=
  \sequence{\totalordering\np{\LocalIndexbis}}{\LocalIndexbis=\cardinal{\kappa}{+}1,\ldots,\cardinal{\totalordering}}
  \) so that \( 
  \kappa \moinsfine \totalordering \implies
  \totalordering=\np{ \FirstElements{\kappa}, \LastElement{\kappa},
    \totalordering\!\setminus\!\kappa } \).
  We also denote \( {\mixingdevice}_{\kappa}=
  \sequence{\mixingdevice_{\agent}}{\agent\in\range{\kappa}} \),
  \( {\mixingdevice}_{\FutureElements{\kappa}}=
  \sequence{\mixingdevice_{\agent}}{\agent\in\range{\FutureElements{\kappa}}} \)
  and \( {\MIXINGDEVICE}_{\FutureElements{\kappa}}=
  \prod_{\agent\in\range{\FutureElements{\kappa}}} {\MIXINGDEVICE}_{\agent} \).
  \medskip 

  Let  \( \omega\in\Omega \) and \(
  \mixingdevice^{-\player}\in\MIXINGDEVICE^{-\player} \) be fixed.
  Let \( \totalordering \in \ORDER^{\player}_{\cardinal{\AGENT^{\player}}} \)
  be a total $\player$-ordering of the agents in~\( \AGENT^{\player} \).
  As, by assumption, the W-game is 
  playable and partially measurable \wrt~$\player$, 
  for any \( \kappa \moinsfine \totalordering \)
  and \( {\mixingdevice'}^{\player} \in\MIXINGDEVICE^{\player} \),
  we get
  by~\eqref{eq:SolutionMap_and_widehat_SolutionMap_BGENT}
  the existence of a measurable mapping
  \(       \widehat{\SolutionMap}^{\range{\kappa}}_{\np{\astrategyothers\np{\mixingdevice^{-\player},\cdot},
      \astrategyprim_{\FutureElements{\kappa}}\np{{\mixingdevice'}^{\player},\cdot}}}
  \) 
  such that
  \begin{align*}
    & \SolutionMap_{ \np{\astrategyothers\np{\mixingdevice^{-\player},\cdot},
      \astrategyplayer_{{\kappa}}\np{{\mixingdevice}^{\player},\cdot},
      \astrategyprim_{\FutureElements{\kappa}}\np{{\mixingdevice}^{\player},\cdot}}
      }\np{\omega}     
      \nonumber \\
    &\hspace{1cm}=
      \widehat{\SolutionMap}^{\range{\kappa}}_{\np{\astrategyothers\np{\mixingdevice^{-\player},\cdot},
      \astrategyprim_{\FutureElements{\kappa}}\np{{\mixingdevice'}^{\player},\cdot}}}
      \Bcouple{\astrategyplayer_{\kappa}\bp{\mixingdevice^{\player},
      \SolutionMap_{ \np{\astrategyothers\np{\mixingdevice^{-\player},\cdot},
      \astrategyplayer_{{\kappa}}\np{{\mixingdevice}^{\player},\cdot},
      \astrategyprim_{\FutureElements{\kappa}}\np{{\mixingdevice'}^{\player},\cdot}}
      }\np{\omega}      
      }}{\omega}
      \eqfinv       
  \end{align*}
  where we have used the shorthand notation 
  \( \astrategyplayer_{\kappa}=\sequence{\astrategyplayer_{\agent}}{\agent\in
    \range{\kappa}} \) and
  \(
  \astrategyprim_{\FutureElements{\kappa}}=\sequence{\astrategyprim}{\agent\in\FutureElements{\kappa}}
  \). 

  As \(  \astrategyprim \) is an A-behavioral strategy,
  Equation~\eqref{eq:Aumann_behavioral_strategy} implies that
  \( \astrategyprim_{\FutureElements{\kappa}}\np{{\mixingdevice'}^{\player},\cdot} \)
  only depends on the randomizing component~\( {\mixingdevice'}_{\FutureElements{\kappa}}
  \in {\MIXINGDEVICE}_{\FutureElements{\kappa}} \)
  and, going back to the original
  definition~\eqref{eq:widehat_SolutionMap_BGENT}      
  we can denote 
  \(
  \widehat{\SolutionMap}^{\range{\kappa}}_{\np{\astrategyothers\np{\mixingdevice^{-\player},\cdot},
      \astrategyprim_{\FutureElements{\kappa}}\np{{\mixingdevice'}^{\player},\cdot}}}
  =
  \widehat{\SolutionMap}^{\range{\kappa}}_{\np{\astrategyothers\np{\mixingdevice^{-\player},\cdot},
      \astrategyprim_{\FutureElements{\kappa}}\np{{\mixingdevice'}_{\FutureElements{\kappa}},\cdot}}}
  \), obtaining thus that
  \begin{align}
    & \SolutionMap_{ \np{\astrategyothers\np{\mixingdevice^{-\player},\cdot},
      \astrategyplayer_{{\kappa}}\np{{\mixingdevice}^{\player},\cdot},
      \astrategyprim_{\FutureElements{\kappa}}\np{{\mixingdevice'}_{\FutureElements{\kappa},\cdot}}}
      }\np{\omega}     
      \nonumber \\
    &\hspace{1cm}=
      \widehat{\SolutionMap}^{\range{\kappa}}_{\np{\astrategyothers\np{\mixingdevice^{-\player},\cdot},
      \astrategyprim_{\FutureElements{\kappa}}\np{{\mixingdevice'}_{\FutureElements{\kappa},\cdot}}}}
      \Bcouple{\astrategyplayer_{\kappa}\bp{\mixingdevice^{\player},
      \SolutionMap_{ \np{\astrategyothers\np{\mixingdevice^{-\player},\cdot},
      \astrategyplayer_{{\kappa}}\np{ {\mixingdevice}^{\player},\cdot},
      \astrategyprim_{\FutureElements{\kappa}}\np{{\mixingdevice'}_{\FutureElements{\kappa},\cdot}}
      }}\np{\omega}     
      }}{\omega}
      \eqfinp
      \label{eq:SolutionMap_and_widehat_SolutionMap_A-strategies}
  \end{align}
  \medskip

  For any   $\player$-ordering~$\kappa \in \ORDER^{\player}$
  such that \( \kappa \moinsfine \totalordering \), 
  we prove that the following quantity
  \begin{equation}
    \theta\np{\kappa}=
    \int_{{\MIXINGDEVICE}_{\FutureElements{\kappa}}}
    \LebesgueMeasure_{\FutureElements{\kappa}}\np{\dd{\mixingdevice'}_{\FutureElements{\kappa}}}
    \int_{ \MIXINGDEVICE^{\player} }
    \LebesgueMeasure^{\player}\np{\dd{\mixingdevice}^{\player}} 
    \np{ \1_{ \HISTORY_{\totalordering}^{\ordering} }\Criterion } \bp{
      \SolutionMap_{ \np{\astrategyothers\np{\mixingdevice^{-\player},\cdot},
          \astrategyplayer_{{\kappa}}\np{{\mixingdevice}^{\player},\cdot}},
        \astrategyprim_{\FutureElements{\kappa}}\np{{\mixingdevice'}_{\FutureElements{\kappa},\cdot}}
      }\np{\omega}  }
    \label{eq:quantity_LocalIndex}
  \end{equation}
  is equal to \( \theta\np{\FirstElements{\kappa}} \).
  This proves the desired result as
    \begin{align*}
      \theta\np{ \totalordering }
      &=
        \int_{ \MIXINGDEVICE^{\player} }
        \LebesgueMeasure^{\player}\np{\dd{\mixingdevice}^{\player}} 
        \np{ \1_{ \HISTORY_{\totalordering}^{\ordering} }\Criterion } \bp{
        \npAumannSolutionMap{\omega}{\astrategyothers,\astrategyplayer}{\mixingdevice^{-\player},\mixingdevice^{\player}} }
        \eqfinv
      \\
      \theta\np{\emptyset}
      &=
        \int_{{\MIXINGDEVICE}^{\player}}
        \LebesgueMeasure^{\player}\np{\dd{\mixingdevice}^{\player}}
        \np{ \1_{ \HISTORY_{\totalordering}^{\ordering} }\Criterion } \bp{
        \npAumannSolutionMap{\omega}{\astrategyothers,\astrategyprim}{\mixingdevice^{-\player},{\mixingdevice}^{\player}} }
        \eqfinv
    \end{align*}
  where the notation~$\emptyset$ in \( \theta\np{\emptyset} \) refers to 
  the convention that \( \FirstElements{\kappa}=\emptyset \in  \ORDER^{\player}_0 = \{
  \emptyset \} \) when $\kappa \in \ORDER^{\player}_1$.

  First, we focus on the inner integral in~\eqref{eq:quantity_LocalIndex}:
  for fixed \( {\mixingdevice'}_{\FutureElements{\kappa}}
  \in {\MIXINGDEVICE}_{\FutureElements{\kappa}} \), we have that 
  \begin{align*}
    \int_{ \MIXINGDEVICE^{\player} }
    &
      \LebesgueMeasure^{\player}\np{\dd{\mixingdevice}^{\player}} 
      \np{ \1_{ \HISTORY_{\totalordering}^{\ordering} }\Criterion }
      \bp{
      \SolutionMap_{\np{
      \astrategyothers\np{\mixingdevice^{-\player},\cdot},
      \astrategyplayer_{{\kappa}}\np{{\mixingdevice}^{\player},\cdot},
      \astrategyprim_{\FutureElements{\kappa}}\np{{\mixingdevice'}_{\FutureElements{\kappa},\cdot}}
      }}
      \np{\omega}
      }
    \\    
    =&
       \int_{ \MIXINGDEVICE^{\player} }
       \LebesgueMeasure^{\player}\np{\dd{\mixingdevice}^{\player}} 
       \bgc{      \np{ \1_{ \HISTORY_{\totalordering}^{\ordering} }\Criterion }
       \Bp{
       \widehat{\SolutionMap}^{\range{\kappa}}_{\np{\astrategyothers\np{\mixingdevice^{-\player},\cdot},
       \astrategyprim_{\FutureElements{\kappa}}\np{{\mixingdevice'}_{\FutureElements{\kappa},\cdot}}}}
       \bcouple{
       \astrategyplayer_{{\kappa}}\np{{\mixingdevice}^{\player},\history}
       }{\omega}}
       }_%
       { \history=}
    \\
    & \hspace{5cm}
      \SolutionMap_{ \np{\astrategyothers\np{\mixingdevice^{-\player},\cdot},
      \astrategyplayer_{{\kappa}}\np{{\mixingdevice}^{\player},\cdot},
      \astrategyprim_{\FutureElements{\kappa}}\np{{\mixingdevice'}_{\FutureElements{\kappa},\cdot}}}
      }\np{\omega} 
      \tag{by~\eqref{eq:SolutionMap_and_widehat_SolutionMap_A-strategies}
      }
    \\    
    =&
       \int_{ \MIXINGDEVICE^{\player} }
       \LebesgueMeasure^{\player}\np{\dd{\mixingdevice}^{\player}} 
       \bgc{      \np{ \1_{ \HISTORY_{\totalordering}^{\ordering} }\Criterion }
       \Bp{
       \widehat{\SolutionMap}^{\range{\kappa}}_{\np{\astrategyothers\np{\mixingdevice^{-\player},\cdot},
       \astrategyprim_{\FutureElements{\kappa}}\np{{\mixingdevice'}_{\FutureElements{\kappa},\cdot}}}}
       \bcouple{
       \astrategyplayer_{\FirstElements{\kappa}}\np{{\mixingdevice}^{\player},\history},
       \astrategyplayer_{\LastElement{\kappa}}\np{{\mixingdevice}^{\player},\history}
       }{\omega}}
       }_%
       { \history=}
    \\
    & \hspace{5cm}
      \SolutionMap_{ \np{\astrategyothers\np{\mixingdevice^{-\player},\cdot},
      \astrategyplayer_{\FirstElements{\kappa}}\np{{\mixingdevice}^{\player},\cdot},
      \astrategyplayer_{\LastElement{\kappa}}\np{{\mixingdevice}^{\player},\cdot},
      \astrategyprim_{\FutureElements{\kappa}}\np{{\mixingdevice'}_{\FutureElements{\kappa},\cdot}}}
      }\np{\omega} 
      \tag{by using the decomposition $ \astrategyplayer_{\kappa}=
      \np{\astrategyplayer_{\FirstElements{\kappa}},\astrategyplayer_{\LastElement{\kappa}}}$}
    \\      
    =&
       \int_{ \MIXINGDEVICE^{\player} }
       \LebesgueMeasure^{\player}\np{\dd{\mixingdevice}^{\player}} 
       \Bc{ \1_{\HISTORY_{\kappa}^{\ordering}}\np{\history}
       \Phi\bp{
       \astrategyplayer_{\FirstElements{\kappa}}
       \np{{\mixingdevice}^{\player},\history },
       {\astrategyplayer_{\LastElement{\kappa}}
       \np{{\mixingdevice}^{\player},\history } }} }_%
       { \history=}
    \\
    & \hspace{5cm}
      \SolutionMap_{ \np{\astrategyothers\np{\mixingdevice^{-\player},\cdot},
      \astrategyplayer_{\FirstElements{\kappa}}\np{{\mixingdevice}^{\player},\cdot},
      \astrategyplayer_{\LastElement{\kappa}}\np{{\mixingdevice}^{\player},\cdot},
      \astrategyprim_{\FutureElements{\kappa}}\np{{\mixingdevice'}_{\FutureElements{\kappa},\cdot}}}
      }\np{\omega} 
      \intertext{where we have used the property
      \( \1_{\HISTORY_{\kappa}^{\ordering}}  \1_{
      \HISTORY_{\totalordering}^{\ordering} }
      =  \1_{ \HISTORY_{\totalordering}^{\ordering} }\) since \( \HISTORY_{\totalordering}^{\ordering} 
      \subset \HISTORY_{\kappa}^{\ordering} \) as \( \kappa \moinsfine \totalordering \),       
      and where we have dropped the variables
      $\omega$, \( {\mixingdevice}^{-\player},{\mixingdevice'}_{\FutureElements{\kappa}} \)
      that do not contribute to the integration (to the difference of~\( {\mixingdevice}^{\player} \))
      inside the notation
      \[
      \Phi\np{\control_{\FirstElements{\kappa}},\control_{\LastElement{\kappa}}}
      = \np{ \1_{ \HISTORY_{\totalordering}^{\ordering} }\Criterion }
      \Bp{
      \widehat{\SolutionMap}^{\range{\kappa}}_{\np{\astrategyothers\np{\mixingdevice^{-\player},\cdot},
      \astrategyprim_{\FutureElements{\kappa}}\np{{\mixingdevice'}_{\FutureElements{\kappa},\cdot}}}}
      \bcouple{ \np{\control_{\FirstElements{\kappa}},\control_{\LastElement{\kappa}}}       
      }{\omega}}
      \eqfinv
      \]
      where the function   \( \Phi : \CONTROL_{\range{\kappa}} \to \RR
      \) is bounded measurable --- as \( \1_{
      \HISTORY_{\totalordering}^{\ordering} } \) is measurable
      by~\eqref{eq:PerfectRecall_a}, 
      as the function~\( \Criterion \) is bounded measurable by assumption,
      and as the mapping
      \( \widehat{\SolutionMap}^{\range{\kappa}}_{\np{\astrategyothers\np{\mixingdevice^{-\player},\cdot},
      \astrategyprim_{\FutureElements{\kappa}}\np{{\mixingdevice'}_{\FutureElements{\kappa},\cdot}}}} \)
      is measurable by assumption that the W-game
      is playable and partially measurable \wrt~$\player$ }
    =&
       \int_{ \MIXINGDEVICE^{\player} }
       \LebesgueMeasure^{\player}\np{\dd{\mixingdevice}^{\player}} 
       \int_{{\MIXINGDEVICE}_{\LastElement{\kappa}}}
       \LebesgueMeasure_{\LastElement{\kappa}}\np{\dd{\mixingdevice'_{\LastElement{\kappa}}}}
    \\
    & \hspace{3cm}
      \Bc{ \1_{\HISTORY_{\kappa}^{\ordering}}\np{\history} \Phi\bp{
      \astrategyplayer_{\FirstElements{\kappa}}
      \np{ {\mixingdevice}^{\player}, \history },
      \astrategyprim_{\LastElement{\kappa}}
      \np{ \mixingdevice'_{\LastElement{\kappa}}, \history}
      }}_%
      { \history=}
    \\
    & \hspace{5cm}
      \SolutionMap_{ \np{\astrategyothers\np{\mixingdevice^{-\player},\cdot},
      \astrategyplayer_{\FirstElements{\kappa}}\np{{\mixingdevice}^{\player},\cdot},
      \astrategyplayer_{\LastElement{\kappa}}\np{{\mixingdevice}^{\player},\cdot},
      \astrategyprim_{\FutureElements{\kappa}}\np{{\mixingdevice'}_{\FutureElements{\kappa},\cdot}}}
      }\np{\omega} 
      \intertext{by using Lemma~\ref{lem:MixedA-Strategy_property_substitution}
      making possible the substitution~\eqref{eq:MixedA-Strategy_property_substitution}
      where the term \( \astrategyplayer_{\LastElement{\kappa}}\bp{ {\mixingdevice}^{\player},
      \history } \) has been replaced by
      \( \astrategyprim_{\LastElement{\kappa}}\np{ \mixingdevice'_{\LastElement{\kappa}},
      \history } \) inside a new integral}
    =&
       \int_{ \MIXINGDEVICE^{\player} }
       \LebesgueMeasure^{\player}\np{\dd{\mixingdevice}^{\player}} 
       \int_{{\MIXINGDEVICE}_{\LastElement{\kappa}}}
       \LebesgueMeasure_{\LastElement{\kappa}}\np{\dd{\mixingdevice'_{\LastElement{\kappa}}}}
    \\
    & \hspace{3cm}
      \Bc{ \1_{\HISTORY_{\kappa}^{\ordering}}\np{\history} \Phi\bp{ 
      \astrategyplayer_{\FirstElements{\kappa}}
      \np{{\mixingdevice}^{\player},\history},
      {\astrategyprim_{\LastElement{\kappa}}
      \np{ \mixingdevice'_{\LastElement{\kappa}},
      \history } }} }_%
      { \history=}
    \\
    & \hspace{5cm}
      \SolutionMap_{ \np{\astrategyothers\np{\mixingdevice^{-\player},\cdot},
      \astrategyplayer_{\FirstElements{\kappa}}\np{{\mixingdevice}^{\player},\cdot},
      \astrategyprim_{\LastElement{\kappa}}\np{\mixingdevice'_{\LastElement{\kappa}},\cdot},
      \astrategyprim_{\FutureElements{\kappa}}\np{{\mixingdevice'}_{\FutureElements{\kappa},\cdot}}}
      }\np{\omega} 
      \intertext{%
      where, in the expression
      \( \history= \SolutionMap_{ \np{\astrategyothers\np{\mixingdevice^{-\player},\cdot},
      \astrategyplayer_{\FirstElements{\kappa}}\np{{\mixingdevice}^{\player},\cdot},
      \astrategyprim_{\LastElement{\kappa}}\np{\mixingdevice'_{\LastElement{\kappa}},\cdot},
      \astrategyprim_{\FutureElements{\kappa}}\np{{\mixingdevice'}_{\FutureElements{\kappa},\cdot}}}
      }\np{\omega} \),  the term \( \astrategyplayer_{\LastElement{\kappa}}\np{{\mixingdevice}^{\player},\cdot} \) has been substituted
      for \(
      \astrategyprim_{\LastElement{\kappa}}\np{\mixingdevice'_{\LastElement{\kappa}},\cdot}
      \)       
      by Proposition~\ref{pr:invariance}
      because the function
      \(\HISTORY \ni \history \mapsto \1_{\HISTORY_{\kappa}^{\ordering}}\np{\history} 
      \Phi\bp{
      {\astrategyplayer_{\FirstElements{\kappa}}\bp{\mixingdevice^{\player},\history}},
      {\astrategyprim_{\LastElement{\kappa}}\bp{\mixingdevice'_{\LastElement{\kappa}},
      \history}}}  \) is \( \tribu{\Information}_{\LastElement{\kappa}} \)-measurable;
      indeed, the function is measurable with respect to 
      \( \HISTORY_{\kappa}^{\ordering} \cap
      \np{ \bvee \limits_{\agent \in \range{\kappa}}
      \tribu{\Information}_{\agent} } \)
      by definition~\eqref{eq:Aumann_mixed_strategy} of an
      A-mixed strategy (recall that \(
      \astrategyprim_{\LastElement{\kappa}}\bp{\mixingdevice'_{\LastElement{\kappa}},\cdot}
      \) is \( \tribu{\Information}_{\LastElement{\kappa}} \)-measurable
      by Lemma~\ref{lem:MixedA-Strategy=STOCHASTICKERNEL}) and by definition of the trace field
      \( \HISTORY_{\kappa}^{\ordering} \cap
      \np{ \bvee \limits_{\agent \in \range{\kappa}}
      \tribu{\Information}_{\agent} } \), hence 
      with respect to 
      \( \HISTORY_{\kappa}^{\ordering} \cap \np{
      \tribu{\Choice}_{\range{\FirstElements{\kappa}}} 
      \vee \tribu{\Information}_{\LastElement{\kappa}} } \) by
      definition~\eqref{eq:PastChoiceField} of \( \tribu{\Choice}_{\range{\FirstElements{\kappa}}} \),
      hence with respect to \( \tribu{\Information}_{\LastElement{\kappa}} \vee \np{ \HISTORY_{\kappa}^{\ordering} \cap
      \tribu{\Information}_{\LastElement{\kappa}}} \)
      by perfect recall~\eqref{eq:PerfectRecall}, 
      hence with respect to \( \tribu{\Information}_{\LastElement{\kappa}} \)
      as \( \HISTORY_{\kappa}^{\ordering} \in  \tribu{\Information}_{\LastElement{\kappa}} \)
      by~\eqref{eq:PerfectRecall} with \( \History=\HISTORY\)}
    =&
       \int_{\MIXINGDEVICE^\player}
       \LebesgueMeasure^{\player}\np{\dd{\mixingdevice}^{\player}}
       \int_{{\MIXINGDEVICE}_{\LastElement{\kappa}}}
       \LebesgueMeasure_{\LastElement{\kappa}}\np{\dd{\mixingdevice'_{\LastElement{\kappa}}}}
       \Big[
       \np{ \1_{ \HISTORY_{\totalordering}^{\ordering} }\Criterion }
    \\
    & \hspace{1.5cm}
      \Bp{ 
      \widehat{\SolutionMap}^{\range{\kappa}}_{\np{\astrategyothers\np{\mixingdevice^{-\player},\cdot},
      \astrategyprim_{\FutureElements{\kappa}}\np{{\mixingdevice'}_{\FutureElements{\kappa},\cdot}}}}
      \bcouple{ \astrategyplayer_{\FirstElements{\kappa}}
      \np{{\mixingdevice}^{\player},\history},
      \astrategyprim_{\LastElement{\kappa}}\np{\mixingdevice'_{\LastElement{\kappa}},
      \history}}{\omega}}  \bigg]_%
      { \history=}
    \\
    & \hspace{5cm}
      \SolutionMap_{ \np{\astrategyothers\np{\mixingdevice^{-\player},\cdot},
      \astrategyplayer_{\FirstElements{\kappa}}\np{{\mixingdevice}^{\player},\cdot},
      \astrategyprim_{\LastElement{\kappa}}\np{\mixingdevice'_{\LastElement{\kappa}},\cdot},
      \astrategyprim_{\FutureElements{\kappa}}\np{{\mixingdevice'}_{\FutureElements{\kappa},\cdot}}}
      }\np{\omega} 
      \tag{by definition of the function~$\Phi$}
    \\
    =&
       \int_{\MIXINGDEVICE^\player}
       \LebesgueMeasure^{\player}\np{\dd{\mixingdevice}^{\player}} 
       \int_{{\MIXINGDEVICE}_{\LastElement{\kappa}}}
       \LebesgueMeasure_{\LastElement{\kappa}}\np{\dd{\mixingdevice'_{\LastElement{\kappa}}}}
       \np{ \1_{ \HISTORY_{\totalordering}^{\ordering} }\Criterion }
       \Bp{
       \SolutionMap_{\np{\astrategyothers\np{\mixingdevice^{-\player},\cdot},
       \astrategyplayer_{\FirstElements{\kappa}}\np{{\mixingdevice}^{\player},\cdot},
       \astrategyprim_{\LastElement{\kappa}}\np{\mixingdevice'_{\LastElement{\kappa}},\cdot},
       \astrategyprim_{\FutureElements{\kappa}}\np{{\mixingdevice'}_{\FutureElements{\kappa},\cdot}}}
       }\np{\omega} 
       }
  \end{align*}
  by formula~\eqref{eq:SolutionMap_and_widehat_SolutionMap_A-strategies},
  but with~\( \bp{\astrategyothers\np{\mixingdevice^{-\player},\cdot},
    \astrategyplayer_{{\kappa}}\np{{\mixingdevice}^{\player},\cdot},
    \astrategyprim_{\FutureElements{\kappa}}\np{{\mixingdevice'}_{\FutureElements{\kappa},\cdot}
    }}
  \) replaced by
  \\
  \( \bp{{\astrategyothers\np{\mixingdevice^{-\player},\cdot},
      \astrategyplayer_{\FirstElements{\kappa}}\np{{\mixingdevice}^{\player},\cdot},
      \astrategyprim_{\LastElement{\kappa}}\np{\mixingdevice'_{\LastElement{\kappa}},\cdot},
      \astrategyprim_{\FutureElements{\kappa}}\np{{\mixingdevice'}_{\FutureElements{\kappa},\cdot}}}
  }  \).

  Thus, inserting this last expression in the right-hand side of Equation~\eqref{eq:quantity_LocalIndex}, we conclude that 
  \begin{align*}
    \theta\np{\kappa}
    &=
      \int_{{\MIXINGDEVICE}_{\FutureElements{\kappa}}}
      \LebesgueMeasure_{\FutureElements{\kappa}}\np{\dd{\mixingdevice'}_{\FutureElements{\kappa}}}
      \int_{\MIXINGDEVICE^\player}
      \LebesgueMeasure^{\player}\np{\dd{\mixingdevice}^{\player}} 
    \\
    &\hspace{2cm}
      \int_{{\MIXINGDEVICE}_{\LastElement{\kappa}}}
      \LebesgueMeasure_{\LastElement{\kappa}}\np{\dd{\mixingdevice'_{\LastElement{\kappa}}}}
      \np{ \1_{ \HISTORY_{\totalordering}^{\ordering} }\Criterion }
      \Bp{
      \SolutionMap_{ \np{\astrategyothers\np{\mixingdevice^{-\player},\cdot},
      \astrategyplayer_{\FirstElements{\kappa}}\np{{\mixingdevice}^{\player},\cdot},
      \astrategyprim_{\LastElement{\kappa}}\np{\mixingdevice'_{\LastElement{\kappa}},\cdot},
      \astrategyprim_{\FutureElements{\kappa}}\np{{\mixingdevice'}_{\FutureElements{\kappa},\cdot}}}
      }\np{\omega} 
      }
    \\
    &=      \int_{{\MIXINGDEVICE}_{\LastElement{\kappa}}}
      \LebesgueMeasure_{\LastElement{\kappa}}\np{\dd{\mixingdevice'}_{\LastElement{\kappa}}}   
      \int_{{\MIXINGDEVICE}_{\FutureElements{\kappa}}}      
      \LebesgueMeasure_{\FutureElements{\kappa}}\np{\dd{\mixingdevice}'_{\FutureElements{\kappa}}}
      \int_{\MIXINGDEVICE^\player}
      \LebesgueMeasure^{\player}\np{\dd{\mixingdevice}^{\player}} 
    \\
    &\hspace{4cm}
      \np{ \1_{ \HISTORY_{\totalordering}^{\ordering} }\Criterion }
      \bp{
      \SolutionMap_{ \np{\astrategyothers\np{\mixingdevice^{-\player},\cdot},
      \astrategyplayer_{\FirstElements{\kappa}}\np{{\mixingdevice}^{\player},\cdot},
      \astrategyprim_{\LastElement{\kappa}}\np{\mixingdevice'_{\LastElement{\kappa}},\cdot},
      \astrategyprim_{\FutureElements{\kappa}}\np{{\mixingdevice'}_{\FutureElements{\kappa},\cdot}}}
      }\np{\omega} 
      }
      \tag{by Fubini's Theorem}
    \\
    &=
      \int_{{\MIXINGDEVICE}_{\LastElement{\kappa}} \times {\MIXINGDEVICE}_{\FutureElements{\kappa}}}      
      \np{ \LebesgueMeasure_{\LastElement{\kappa}}
      \otimes
      \LebesgueMeasure_{\FutureElements{\kappa}} }
      \np{\dd{\mixingdevice'}_{\LastElement{\kappa}} \dd{\mixingdevice}'_{\FutureElements{\kappa}}}
      \int_{\MIXINGDEVICE^\player}
      \LebesgueMeasure^{\player}\np{\dd{\mixingdevice}^{\player}} 
    \\
    &\hspace{4cm}
      \np{ \1_{ \HISTORY_{\totalordering}^{\ordering} }\Criterion }
      \bp{
      \SolutionMap_{ \np{\astrategyothers\np{\mixingdevice^{-\player},\cdot},
      \astrategyplayer_{\FirstElements{\kappa}}\np{{\mixingdevice}^{\player},\cdot},
      \astrategyprim_{\LastElement{\kappa}}\np{\mixingdevice'_{\LastElement{\kappa}},\cdot},
      \astrategyprim_{\FutureElements{\kappa}}\np{{\mixingdevice'}_{\FutureElements{\kappa},\cdot}}}
      }\np{\omega} 
      }
      \intertext{by Fubini's Theorem and by definition of the product probability
      \( \LebesgueMeasure_{\LastElement{\kappa}}\otimes
      \LebesgueMeasure_{\FutureElements{\kappa}} \) }
    &=
      \int_{{\MIXINGDEVICE}_{\FutureElements{\FirstElements{\kappa}}}}      
      \LebesgueMeasure_{\FutureElements{\FirstElements{\kappa}}}\np{\dd{\mixingdevice}'_{\FutureElements{\FirstElements{\kappa}}}}
      \int_{\MIXINGDEVICE^\player}
      \LebesgueMeasure^{\player}\np{\dd{\mixingdevice}^{\player}} 
      \np{ \1_{ \HISTORY_{\totalordering}^{\ordering} }\Criterion }
      \bp{
      \SolutionMap_{\np{\astrategyothers\np{\mixingdevice^{-\player},\cdot},
      \astrategyplayer_{\FirstElements{\kappa}}
      \np{{\mixingdevice}^{\player},\cdot},
      \astrategyprim_{\FutureElements{\FirstElements{\kappa}}}\np{{\mixingdevice'}_{\FutureElements{\FirstElements{\kappa}},\cdot}}}
      }\np{\omega} 
      }       
      \intertext{by changes of variables \( \np{{\mixingdevice'}_{\LastElement{\kappa}},
      {\mixingdevice'}_{\FutureElements{\FirstElements{\kappa}}} }=
      {\mixingdevice}'_{\FutureElements{\FirstElements{\kappa}}} \) and
      \( \FutureElements{\FirstElements{\kappa}} =
      \np{ \LastElement{\kappa}, \FutureElements{\kappa} } \) }
    &=
      \theta\np{\FirstElements{\kappa}}
      \eqfinp 
  \end{align*}
  \medskip

  This ends the proof.
\end{proof}

\subsubsubsection{Proof of Theorem~\ref{th:KET}}

\begin{proof} 
  To prove~\eqref{eq:equivalent_mixed_W-strategies_profiles_player},
  we consider a bounded measurable function \( \Criterion : \np{\HISTORY,\tribu{\History}}
  \to \np{\RR,\borel{\RR}} \), and we proceed with    
  \begin{align*}
    \int_{\HISTORY}
    & \Criterion\np{\history} \QQ^{\nu}_{\np{\astrategyothers,\astrategyplayer}}\np{\dd\history}
    \\
    =& 
       \int_{\Omega} \dd\nu\np{\omega}
       \int_{\MIXINGDEVICE^{-\player}\times\MIXINGDEVICE^{\player}}
       \LebesgueMeasure^{-\player}\np{\dd\mixingdevice^{-\player}}
       \otimes \LebesgueMeasure^{\player}\np{\dd{\mixingdevice}^{\player}} 
       \Criterion\bp{
       \npAumannSolutionMap{\omega}{\astrategyothers,\astrategyplayer}{\mixingdevice^{-\player},\mixingdevice^{\player}}
       }
       \intertext{by the pushforward probability
       formula~\eqref{eq:push_forward_probability}
       and by detailing the product structures of \( \MIXINGDEVICE \)
       and \( \LebesgueMeasure\) in~\eqref{eq:MIXINGDEVICE_LebesgueMeasure}}
    =& 
       \int_{\Omega} \dd\nu\np{\omega}
       \int_{\MIXINGDEVICE^{-\player}} 
       \LebesgueMeasure^{-\player}\np{\dd\mixingdevice^{-\player}}
       \Bc{ \int_{\MIXINGDEVICE^{\player}}
       \LebesgueMeasure^{\player}\np{\dd{\mixingdevice}^{\player}} 
       \Criterion\bp{
       \npAumannSolutionMap{\omega}{\astrategyothers,\astrategyplayer}{\mixingdevice^{-\player},\mixingdevice^{\player}}
       } }
       \tag{by Fubini's Theorem}
    \\
    =&
       \int_{\Omega} \dd\nu\np{\omega}
       \int_{\MIXINGDEVICE^{-\player}} 
       \LebesgueMeasure^{-\player}\np{\dd\mixingdevice^{-\player}}
       \sum_{  \totalordering \in \ORDER^{\player}_{\cardinal{\AGENT^{\player}}} }
       \Bc{
       \int_{\MIXINGDEVICE^{\player}}
       \LebesgueMeasure^{\player}\np{\dd{\mixingdevice}^{\player}} 
       \np{ \1_{ \HISTORY_{\totalordering}^{\ordering} }\Criterion } \bp{ 
       \npAumannSolutionMap{\omega}{\astrategyothers,\astrategyplayer}{\mixingdevice^{-\player},\mixingdevice^{\player}}
       } }
       \intertext{since the subsets \( \HISTORY_{\totalordering}^{\ordering} \)
       in~\eqref{eq:HISTORY_k_kappa_player} 
       are pairwise disjoint when the total ordering~\( \totalordering \) varies in
       \( \ORDER^{\player}_{\cardinal{\AGENT^{\player}}} \),
       and their union is \( \HISTORY \) 
       }
    =&
       \int_{\Omega} \dd\nu\np{\omega}
       \int_{\MIXINGDEVICE^{-\player}} 
       \LebesgueMeasure^{-\player}\np{\dd\mixingdevice^{-\player}}
       \sum_{  \totalordering \in \ORDER^{\player}_{\cardinal{\AGENT^{\player}}} }
       \Bc{
       \int_{{\MIXINGDEVICE}^{\player}}
       \LebesgueMeasure^{\player}\np{\dd{\mixingdevice'}^{\player}}
       \int_{\MIXINGDEVICE^{\player}}
       \LebesgueMeasure^{\player}\np{\dd{\mixingdevice}^{\player}} 
       \np{ \1_{ \HISTORY_{\totalordering}^{\ordering} }\Criterion } \bp{ 
       \npAumannSolutionMap{\omega}{\astrategyothers,\astrategyprim}{\mixingdevice^{-\player},{\mixingdevice'}^{\player}}
       }}
       \tag{by~\eqref{eq:substitution_sur_MIXINGDEVICEtotalorderingplayerorderingmixingdevice-playeromega}
       in the substitution
       Lemma~\ref{lem:substitution_sur_MIXINGDEVICEtotalorderingplayerorderingmixingdevice-playeromega}}
    \\
    =& 
       \int_{\Omega} \dd\nu\np{\omega}
       \int_{{\MIXINGDEVICE}^{\player}}
       \LebesgueMeasure^{\player}\np{\dd{\mixingdevice'}^{\player}}
       \int_{\MIXINGDEVICE^{-\player}} 
       \LebesgueMeasure^{-\player}\np{\dd\mixingdevice^{-\player}}
       \sum_{  \totalordering \in \ORDER^{\player}_{\cardinal{\AGENT^{\player}}} }
       \Bc{             
       \int_{\MIXINGDEVICE^{\player}}
       \LebesgueMeasure^{\player}\np{\dd{\mixingdevice}^{\player}} 
       \np{ \1_{ \HISTORY_{\totalordering}^{\ordering} }\Criterion } \bp{ 
       \npAumannSolutionMap{\omega}{\astrategyothers,\astrategyprim}{\mixingdevice^{-\player},{\mixingdevice'}^{\player}} }}
       \tag{by Fubini's Theorem}
    \\
    =& 
       \int_{\Omega} \dd\nu\np{\omega}
       \int_{{\MIXINGDEVICE}^{\player}}
       \LebesgueMeasure^{\player}\np{\dd{\mixingdevice'}^{\player}}
       \int_{\MIXINGDEVICE^{-\player}} 
       \LebesgueMeasure^{-\player}\np{\dd\mixingdevice^{-\player}}
       \int_{\MIXINGDEVICE^{\player}}
       \LebesgueMeasure^{\player}\np{\dd{\mixingdevice}^{\player}} 
       \Criterion\bp{ 
       \npAumannSolutionMap{\omega}{\astrategyothers,\astrategyprim}{\mixingdevice^{-\player},{\mixingdevice'}^{\player}} }
    \\
    =& 
       \int_{\Omega} \dd\nu\np{\omega}
       \int_{{\MIXINGDEVICE}^{\player}}
       \LebesgueMeasure^{\player}\np{\dd{\mixingdevice'}^{\player}}
       \int_{\MIXINGDEVICE^{-\player}} 
       \LebesgueMeasure^{-\player}\np{\dd\mixingdevice^{-\player}}
       \Criterion\bp{ 
       \npAumannSolutionMap{\omega}{\astrategyothers,\astrategyprim}{\mixingdevice^{-\player},{\mixingdevice'}^{\player}} }
    \\
    =& 
       \int_{\Omega} \dd\nu\np{\omega}
       \int_{\MIXINGDEVICE^{-\player}\times{\MIXINGDEVICE}^{\player}}
       \LebesgueMeasure^{-\player}\np{\dd\mixingdevice^{-\player}}
       \otimes \LebesgueMeasure^{\player}\np{\dd{\mixingdevice'}^{\player}} 
       \Criterion\bp{ 
       \npAumannSolutionMap{\omega}{\astrategyothers,\astrategyprim}{\mixingdevice^{-\player},{\mixingdevice'}^{\player}} }
       \tag{by Fubini's Theorem}
    \\
    =&
       \int_{\Omega} \dd\nu\np{\omega}
       \int_{\MIXINGDEVICE} \dd\LebesgueMeasure\np{\mixingdevice}
       \Criterion\bp{ \AumannSolutionMap{\omega}{\np{\astrategyothers,\astrategyprim}}{\mixingdevice}
       }
       \tag{as \( \LebesgueMeasure=\LebesgueMeasure^{-\player}\otimes
       \LebesgueMeasure^{\player} \) and
       \( \MIXINGDEVICE=\MIXINGDEVICE^{-\player}\times{\MIXINGDEVICE}^{\player}\) by~\eqref{eq:MIXINGDEVICE_LebesgueMeasure}}
    \\
    =&
       \int_{\HISTORY} \Criterion\np{\history} \QQ^{\nu}_{\np{\astrategyothers,\astrategyprim}}\np{\dd\history}
       \eqfinp
       \tag{by the pushforward probability formula~\eqref{eq:push_forward_probability}}
  \end{align*}
  \medskip

  This ends the proof.
\end{proof}

\subsection{Proof of Theorem~\ref{th:reciproq}}
\label{Proof_of_Theorem_th:reciproq}

We start with Lemma~\ref{lem:behavioral-support-implication},
{which} gives constraints on the marginals of the pushforward probability
induced by any A-behavioral strategy~$\astrategyprim$
of the player~\( \player \) satisfying Equation~\eqref{eq:equivalent_mixed_W-strategies_profiles_player}.

\begin{lemma}
  \label{lem:behavioral-support-implication}
  We consider a playable and measurable W-game (see Definition~\ref{de:W-game}).
  We focus on the player~\( \player \in \PLAYER \) and 
  we suppose that \( \AGENT^{\player} \) is a finite set.
  Let be given a probability~$\nu$ on \( \np{\Omega, \tribu{\NatureField}} \),
  an A-mixed strategy \( \astrategyothers=
  \sequence{\astrategyall_{\agent}}{\agent\in \AGENT^{-\player}} \)
  of the other players, 
  an A-mixed strategy \( \astrategyplayer=\sequence{\astrategyplayer_{\agent}}{\agent\in \AGENT^{\player}} \),
  of the player~\( \player \),
  and an A-behavioral strategy 
  \( \astrategyprim=\sequence{\astrategyprim_{\agent}}{\agent\in
    \AGENT^{\player}} \) of the player~\( \player \).
  We set
  \begin{equation}
    \MIXINGDEVICE'_\agent\nc{\history} = 
    \bset{ \mixingdevice_\agent \in {\MIXINGDEVICE}_\agent }%
    {
      \astrategyprim_\agent\bp{\mixingdevice_\agent,\history}
      = {\history}_\agent }
    \eqsepv \forall \agent \in \AGENT^{\player} 
    \eqsepv \forall \history \in \HISTORY 
    \eqfinp
    \label{eq:definitionWprima}
  \end{equation}
  Then, we have the following implication, for any $\history \in \HISTORY$,
  \begin{align}
    \QQ^{\nu}_{\couple{\astrategyplayer}{\astrategyothers}} 
    = \QQ^{\nu}_{\couple{\astrategyprim}{\astrategyothers}}
    \mtext{ and }
    \QQ^{\nu}_{\couple{\astrategyplayer}{\astrategyothers}}\np{\na{\history}} > 0
    \implies
    \LebesgueMeasure_{\agent}
    \bp{ {\MIXINGDEVICE}'_\agent\nc{\history}} > 0 
    \eqsepv \forall  \agent \in \AGENT^{\player}
    \eqfinp
    \label{eq:behavioral-support-implication}
  \end{align}
\end{lemma}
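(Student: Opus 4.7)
My plan is to compute $\QQ^{\nu}_{\couple{\astrategyprim}{\astrategyothers}}\np{\history}$ through the pushforward formula~\eqref{eq:push_forward_probability} and to exploit the behavioral structure of $\astrategyprim$ in order to factor the preimage of $\history$ under the solution map $T_{\couple{\astrategyprim}{\astrategyothers}}$ into an explicit Cartesian product, one of whose factors is $\prod_{\agent\in\AGENT^{\player}}\MIXINGDEVICE'_\agent\nc{\history}$. First, the hypothesis of the lemma combined with~\eqref{eq:push_forward_probability} immediately yields $\bp{\oproduct{\nu}{\LebesgueMeasure}}\bp{\Converse{T}_{\couple{\astrategyprim}{\astrategyothers}}\np{\history}}>0$, so it suffices to understand this preimage.

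Second, by playability (Definition~\ref{de:playability}) and the characterization~\eqref{eq:solution_map_IFF}, a pair $\np{\omega,\mixingdevice}$ lies in $\Converse{T}_{\couple{\astrategyprim}{\astrategyothers}}\np{\history}$ if and only if $\omega=\history_\emptyset$, the equations $\astrategyothers_\agent\np{\mixingdevice^{-\player},\history}=\history_\agent$ hold for every $\agent\in\AGENT^{-\player}$, and $\astrategyprim_\agent\np{\mixingdevice^{\player},\history}=\history_\agent$ for every $\agent\in\AGENT^{\player}$. The key observation is that since $\astrategyprim$ is \emph{behavioral}, property~\eqref{eq:Aumann_behavioral_strategy} makes $\astrategyprim_\agent\np{\cdot,\history}$ depend only on the single component $\mixingdevice_\agent$ of $\mixingdevice^{\player}$, so the latter family of equations decouples into the pointwise conditions $\mixingdevice_\agent\in\MIXINGDEVICE'_\agent\nc{\history}$ from~\eqref{eq:definitionWprima}. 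The preimage is therefore the Cartesian product
\[
\Converse{T}_{\couple{\astrategyprim}{\astrategyothers}}\np{\history}
=
\ba{\history_\emptyset}
\times
\bp{\prod_{\agent\in\AGENT^{\player}}\MIXINGDEVICE'_\agent\nc{\history}}
\times C^{-\player}\np{\history}
\eqsepv
\]
where $C^{-\player}\np{\history}\subset\MIXINGDEVICE^{-\player}$ collects the $\mixingdevice^{-\player}$ satisfying the closed-loop equations of the other players' agents at~$\history$.

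Third and last, since $\LebesgueMeasure^{\player}=\bigotimes_{\agent\in\AGENT^{\player}}\LebesgueMeasure_\agent$ and $\LebesgueMeasure=\oproduct{\LebesgueMeasure^{\player}}{\LebesgueMeasure^{-\player}}$ are product measures by~\eqref{eq:MIXINGDEVICE_LebesgueMeasure}, the $\oproduct{\nu}{\LebesgueMeasure}$-measure of the above Cartesian product equals the product of the measures of its components. Positivity of the total therefore forces positivity of every factor; in particular $\LebesgueMeasure_\agent\np{\MIXINGDEVICE'_\agent\nc{\history}}>0$ for every $\agent\in\AGENT^{\player}$, which is the conclusion sought. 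The main obstacle is justifying the product decomposition of the preimage: this hinges both on playability (to reduce membership to the closed-loop equations at $\history$) and, crucially, on the behavioral property of $\astrategyprim$, without which the conditions on $\mixingdevice^{\player}$ would not separate agent by agent---this is precisely the structural rigidity that distinguishes behavioral from mixed strategies and that underlies the counter-example to be built in Theorem~\ref{th:reciproq}. Routine measurability of $\MIXINGDEVICE'_\agent\nc{\history}$ and $C^{-\player}\np{\history}$ then follows from the measurability of the corresponding A-strategies.
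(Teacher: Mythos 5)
Your proposal is correct and follows essentially the same route as the paper's proof: both compute $\QQ^{\nu}_{\couple{\astrategyprim}{\astrategyothers}}\np{\history}$ via the pushforward formula~\eqref{eq:push_forward_probability}, characterize the preimage of~$\history$ through the closed-loop equations~\eqref{eq:solution_map_IFF}, use the behavioral property~\eqref{eq:Aumann_behavioral_strategy} to decouple the conditions agent by agent into the sets $\MIXINGDEVICE'_\agent\nc{\history}$, and invoke the product structure of $\oproduct{\nu}{\LebesgueMeasure}$ to conclude that positivity of the total forces positivity of each factor. The only cosmetic difference is that you gather the other players' constraints into a single set $C^{-\player}\np{\history}$, whereas the paper factors them player by player, which changes nothing in the argument.
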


\begin{proof}
  Let a configuration~$\history \in \HISTORY$ be given.
  Then, we have that
  \begin{align*}
    \QQ^{\nu}_{\couple{\astrategyprim}{\astrategyothers}}\np{\na{\history}}
    &=
      \np{\oproduct{\nu}{\LebesgueMeasure^{-\player} \otimes
      \bigotimes_{ \agent \in \AGENT^{\player} }\LebesgueMeasure_{ \agent } }}
      \Bp{\bset{ \np{\mixingdevice,\omega} \in
      \product{\Omega}{\MIXINGDEVICE^{-\player}
      \times \prod_{ \agent \in \AGENT^{\player} } \MIXINGDEVICE_\agent} }%
      {  \AumannSolutionMap{\omega}\astrategyprim{\mixingdevice}=\history}}
      \intertext{by definition~\eqref{eq:push_forward_probability} of the
      pushforward probability \( \QQ^{\nu}_{\couple{\astrategyprim}{\astrategyothers}}
      \) and by~\eqref{eq:MIXINGDEVICE_LebesgueMeasure}}
    &= 
      \np{\oproduct{\nu}{\LebesgueMeasure^{-\player} \otimes
      \bigotimes_{ \agent \in \AGENT^{\player} }\LebesgueMeasure_{ \agent } }}
      \left (  \right. \{ \np{\mixingdevice,\omega} \in
      \product{\Omega}{\MIXINGDEVICE^{-\player}
      \times \prod_{ \agent \in \AGENT^{\player} } \MIXINGDEVICE_\agent}
      \mid 
      {  \omega= \history_{\emptyset} \eqsepv
      }
    \\
    & 
      \astrategyall_{\AGENT^\playerbis}\np{\mixingdevice^{\playerbis},\history}
      =\history_{\AGENT^{\playerbis}}
      \eqsepv \forall  \playerbis \in\PLAYER\setminus\na{\player}
      \eqsepv 
      \astrategyprim_\agent\np{\mixingdevice_\agent,\history}
      =\history_\agent
      \eqsepv \forall \agent \in \AGENT^{\player} \} \left. \right)
      \intertext{by {the} solution map property~\eqref{eq:solution_map_IFF} and by
      definition~\eqref{eq:AumannSolutionMap} of \( \AumannSolutionMap{\omega}\astrategyprim{\mixingdevice} \)}
    &=
      \nu\bp{\na{\history_{\emptyset}}}       \times
      \prod_{\playerbis \in\PLAYER\setminus\na{\player}}
      \LebesgueMeasure^{\playerbis} \Bp{ \bset{ \mixingdevice^{\playerbis} \in \MIXINGDEVICE^{\playerbis} }%
      {
      \astrategyall_{\AGENT^\playerbis}\np{\mixingdevice^{\playerbis},\history}
      =\history_{\AGENT^{\playerbis}} } }
    \\
    &
      \hspace{1.5cm} \times
      \prod_{ \agent \in \AGENT^{\player} }
      \LebesgueMeasure_{ \agent }
      \Bp{ \bset{ \mixingdevice_\agent \in \MIXINGDEVICE_\agent }%
      {
      \astrategyprim_\agent\np{\mixingdevice_\agent,\history}
      =\history_\agent } }
      \tag{by definition of a product probability}
    \\
    &=
      \nu\bp{\na{\history_{\emptyset}}}       \times
      \prod_{\playerbis \in\PLAYER\setminus\na{\player}}
      \LebesgueMeasure^{\playerbis} \Bp{ \bset{ \mixingdevice^{\playerbis} \in \MIXINGDEVICE^{\playerbis} }%
      {
      \astrategyall_{\AGENT^\playerbis}\np{\mixingdevice^{\playerbis},\history}
      =\history_{\AGENT^{\playerbis}} } }
      \times \prod_{ \agent \in \AGENT^{\player} }
      \LebesgueMeasure_{ \agent } \bp{{\MIXINGDEVICE}'_\agent\nc{\history} }
      \tag{by definition of $\MIXINGDEVICE'_\agent\nc{\history}$ in~\eqref{eq:definitionWprima}}
      \eqfinp 
  \end{align*}
  As a consequence, if
  \( \QQ^{\nu}_{\couple{\astrategyplayer}{\astrategyothers}} 
  = \QQ^{\nu}_{\couple{\astrategyprim}{\astrategyothers}} \)
  and \(
  \QQ^{\nu}_{\couple{\astrategyplayer}{\astrategyothers}}\np{\na{\history}}
  > 0 \),   
  we deduce that the nonnegative quantity $\LebesgueMeasure_{ \agent } \bp{{\MIXINGDEVICE}'_\agent\nc{\history} }$ 
  must be positive for all $\agent \in \AGENT^{\player}$.
  \medskip

  We have proven~\eqref{eq:behavioral-support-implication}
  and this ends the proof.
\end{proof}

\subsubsubsection{Proof of Theorem~\ref{th:reciproq}}

\begin{proof}
  We consider a playable and measurable W-game (see Definition~\ref{de:W-game}).
  We focus on the player~\( \player \in \PLAYER \) and 
  we suppose that she satisfies the Borel measurable functional information
  assumption
  (see Definition~\ref{de:Borel_measurable_functional_information_assumption})
  and partial causality (see Definition~\ref{de:PartialCausality}),
  that \( \AGENT^{\player} \) is a finite set,
  and that \( \CONTROL_{\agent} \) contains at least two distinct elements,
  for all \( \agent\in \AGENT^{\player} \). 

  By assumption (see Equation~\eqref{eq:reciproq}),   
  we have that, for the $\player$-configuration-ordering
  \( \ordering: \HISTORY \to \ORDER^{\player} \)
  given by Definition~\ref{de:PartialCausality},
  there exists a $\player$-ordering
  \( \kappa\in\ORDER^{\player} \) such that 
  \[
    \exists \history^{+},\history^{-} \in \HISTORY_{\kappa}^{\ordering}
    \eqsepv
    \BigFunctionZ_{\LastElement{\kappa}}\np{\history^{+}}
    =    \BigFunctionZ_{\LastElement{\kappa}}\np{\history^{-}}
    \eqsepv
    \sequence{ \BigFunctionZ_{\agent}\np{\history^{+}},\history^{+}_{\agent} }%
    {\agent \in \range{\FirstElements{\kappa}}} \neq
    \sequence{ \BigFunctionZ_{\agent}\np{\history^{-}},\history^{-}_{\agent} }%
    {\agent \in \range{\FirstElements{\kappa}}}
    \eqfinp 
  \]
  
  Therefore, setting \( j_\cgent=\cardinal{\kappa} \geq 2 \)
  (because the case \( \cardinal{\kappa} =1 \) is void) and
  \( \cgent=\kappa\np{j_\cgent}=\LastElement{\kappa} \), we deduce that
  one of the following two mutually exclusive and exhaustive cases holds true:
  \begin{enumerate}
  \item (two distinct configurations give the same information)
    either there exists
    \( \history^{+}, \history^{-} \in \HISTORY_{\kappa}^{\ordering} \) such that 
    \( \BigFunctionZ_{\cgent}\np{\history^{+}}
    = \BigFunctionZ_{\cgent}\np{\history^{-}} \), and
    there exists an agent \( \bgent \in \range{\FirstElements{\kappa}} \) 
    such that \( \history^{+}_ \bgent \neq \history^{-}_\bgent \),
  \item  (two distinct configurations do not give the same information)
    or
    \( \BigFunctionZ_{\cgent}\np{\history}
    = \BigFunctionZ_{\cgent}\np{\history'} \implies 
    \history_\agent= \history'_\agent \), for all
    \( \history, \history' \in \HISTORY_{\kappa}^{\ordering} \)
    and for all \( \agent \in \range{\FirstElements{\kappa}} \),
    and there exists
    \( \history^{+}, \history^{-} \in \HISTORY_{\kappa}^{\ordering} \) such that 
    \( \BigFunctionZ_{\cgent}\np{\history^{+}}
    = \BigFunctionZ_{\cgent}\np{\history^{-}} \), 
    and there exists an agent \( \bgent \in \range{\FirstElements{\kappa}} \)
    such that \( \BigFunctionZ_{\bgent}\np{\history^{+}}
    \neq \BigFunctionZ_{\bgent}\np{\history^{-}} \).
  \end{enumerate}

  In both cases, we denote \( \history^{+}=\couple{\control^{+}}{\omega^{+}} \) 
  and \( \history^{-}=\couple{\control^{-}}{\omega^{-}} \).
  For any mixed strategy~$\astrategyprim_\cgent$ of the agent~$\cgent$,
  we have that \( \astrategyprim_\cgent\bp{\mixingdevice_\cgent,\history^{+} }
  = \astrategyprim_\cgent\bp{\mixingdevice_\cgent,\history^{-} } \) since the
  mapping \( \HISTORY \ni \history \mapsto
  \astrategyprim_\cgent\bp{\mixingdevice_\cgent,\history } \)
  is \( \tribu{\Information}_{\cgent} \)-measurable,
  as  \( \Converse{\BigFunctionZ_{\cgent}}
  \np{\mathcal{\BigFunctionZ}_{\cgent}}= \tribu{\Information}_{\cgent} \)
  and \( \BigFunctionZ_{\cgent}\np{\history^{+}}= \BigFunctionZ_{\cgent}\np{\history^{-}} \).
  Without loss of generality, 
  we can suppose that \( \history^{+}_{\cgent}=\control^{+}_{\cgent}
  \neq \control^{-}_{\cgent}=\history^{-}_{\cgent} \).
  Indeed, as the player~\( \player \) satisfies partial causality,
  we have that 
  \(
  \HISTORY_{\kappa}^{\ordering} \cap
  \bset{\history\in\HISTORY}{ \BigFunctionZ_{\cgent}\np{\history}
    =\BigFunctionZ_{\cgent}\np{\history^{-}} }
  \in \tribu{\History}_{\range{\FirstElements{\kappa}}}^{\player}
  \)
  by~\eqref{eq:PartialCausality}, so that
  \( \BigFunctionZ_{\cgent}\np{\np{\history_{\cgent}^{-},\control_{\cgent}}}
  =\BigFunctionZ_{\cgent}\np{\history^{-}} \)
  for any \( \control_{\cgent} \in \CONTROL_{\cgent} \),
  and we choose \( \control_{\cgent} \neq \control^{+}_{\cgent} \).

  In both cases above, the structure of the proof is as follows:
  design an A-mixed strategy \( \couple{\astrategyplayer}{\astrategyothers} \)
  (making use of the two configurations~\( \history^{+} \) 
  and \( \history^{-} \)) such that, for any A-behavioral strategy 
  \( \astrategyprim=\sequence{\astrategyprim_{\agent}}{\agent\in
    \AGENT^{\player}} \) of the player~\( \player \), one has that
  \( \QQ^{\nu}_{\couple{\astrategyplayer}{\astrategyothers}} \neq
  \QQ^{\nu}_{\couple{\astrategyprim}{\astrategyothers}} \). 
  
  For this purpose, we set \( \tilde\PLAYER=\PLAYER\setminus\na{\player} \) and, 
  in both cases above, we consider the same 
  A-mixed strategy \( \astrategyothers=
  \sequence{\astrategyall^\playerbis}{\playerbis\in \tilde\PLAYER}
  \) for the other players than player~$\player$.  
  We introduce, for any
  player \( \playerbis \in \tilde\PLAYER \),
  a partition \( \MIXINGDEVICE^{+}_\playerbis \) and \( \MIXINGDEVICE^{-}_\playerbis \)
  of~\( \MIXINGDEVICE^{\playerbis} \)
  with \( \LebesgueMeasure^{\playerbis} \np{\MIXINGDEVICE^{+}_\playerbis}
  = \LebesgueMeasure^{\playerbis} \np{\MIXINGDEVICE^{-}_\playerbis}= 1/2 \),
  and we define 
  the A-mixed strategy \( \astrategyall^{\playerbis}=
  \sequence{\astrategyall_{\agent}^{\playerbis}}{\agent\in \AGENT^{\playerbis}} \) by
  \begin{align}
    \astrategyall_{\agent}^{\playerbis}\bp{\mixingdevice^{\playerbis},\history} 
    &=
      \history^{\epsilon}_{\agent} 
      \eqsepv \forall \playerbis \in \tilde\PLAYER 
      \eqsepv \forall \agent \in \AGENT^{\playerbis} 
      \eqsepv \forall \epsilon \in \na{+,-}
      \eqsepv \forall \mixingdevice^{\playerbis} \in \MIXINGDEVICE_{\playerbis}^{\epsilon}
      \eqsepv \forall \history \in \HISTORY 
      \eqfinp
      \label{eq:other_players_A-mixed}        
  \end{align}
  Notice that the above definition is valid even if
  \( \history^{+}_{\agent}=\history^{-}_{\agent} \),
  and that, for any fixed \( \mixingdevice^{\playerbis} \in
  \MIXINGDEVICE_{\playerbis}^{\epsilon} \), the pure strategy profile
  \( \astrategyall^{\playerbis}\np{\mixingdevice^{\playerbis},\cdot} \) is a 
  constant mapping with value 
  \(   \sequence{\history_{\agent}^{\epsilon}}{\agent\in \AGENT^{\playerbis}}
  \).
  \bigskip

  In the first case (two distinct configurations give the same information),
  Lemma~\ref{lem:two_distinct_configurations_give_the_same_information} below exhibits
  an A-mixed strategy \( \couple{\astrategyplayer}{\astrategyothers} \)
  such that, for any A-behavioral strategy 
  \( \astrategyprim=\sequence{\astrategyprim_{\agent}}{\agent\in
    \AGENT^{\player}} \) of the player~\( \player \), one has that
  \( \QQ^{\nu}_{\couple{\astrategyplayer}{\astrategyothers}} \neq
  \QQ^{\nu}_{\couple{\astrategyprim}{\astrategyothers}} \). 
  In the second case (two distinct configurations do not give the same information),
  Lemma~\ref{lem:two_distinct_configurations_do_not_give_the_same_information} below
  does the same.
  \medskip

  This ends the proof.   
\end{proof} 

\begin{lemma}
  \label{lem:two_distinct_configurations_give_the_same_information}
  We consider 
  the first case (two distinct configurations give the same information)
  where there exists an agent \( \bgent \in \range{\FirstElements{\kappa}} \) 
  such that \( \history^{+}_\bgent \neq \history^{-}_\bgent \).
  We can always suppose that
  \( \bgent = \kappa\np{ j_\bgent } \) where
  \( j_\bgent= \inf\nset{j \in \ic{1,j_\cgent-1}} 
  { \history^{+}_{\kappa\np{j}} \neq  \history^{-}_{\kappa\np{j}} } 
  \) 
  so that \( \history^{+}_{\kappa\np{j}} = \history^{-}_{\kappa\np{j}} \),
  for all \(j \in \ic{1, j_\bgent-1} \) (the empty set if $j_\bgent=1$).
  We define the A-mixed strategy \( \astrategyplayer=
  \sequence{\astrategyplayer_{\agent}}{\agent\in \AGENT^{\player}} \)
  of player~$\player$ in the same way than for the other players:
  we introduce a partition \( \MIXINGDEVICE^{+}_\player \) and \( \MIXINGDEVICE^{-}_\player \)
  of~\( \MIXINGDEVICE^{\player} \)
  with \( \LebesgueMeasure^{\player}\np{\MIXINGDEVICE^{+}_\player}
  = \LebesgueMeasure^{\player}\np{\MIXINGDEVICE^{-}_\player}= 1/2 \),
  and we define 
  \begin{align}
    \astrategyplayer_{\agent}\bp{\mixingdevice^{\player},\history} 
    &=
      \history^{\epsilon}_{\agent} 
      \eqsepv \forall \agent \in \AGENT^{\player} 
      \eqsepv \forall \epsilon \in \na{+,-}
      \eqsepv \forall \mixingdevice^{\player} \in \MIXINGDEVICE_{\player}^{\epsilon}
      \eqsepv \forall \history \in \HISTORY 
      \eqfinp
      \label{eq:player_A-mixed_case1}
  \end{align}
  We consider any probability distribution~$\nu$ on~$\Omega$
  such that \( \nu\bp{\na{\omega^{+}}}>0 \), \( \nu\bp{\na{\omega^{-}}}>0 \)
  and \( \nu\bp{\na{\omega^{+},\omega^{-}}}=1 \), 
  thus covering both cases where
  \( \omega^{+}=\omega^{-} \) or \( \omega^{+} \neq \omega^{-} \).
  
  Then, it holds that, for any A-behavioral strategy 
  \( \astrategyprim=\sequence{\astrategyprim_{\agent}}{\agent\in
    \AGENT^{\player}} \) of the player~\( \player \), one has that
  \( \QQ^{\nu}_{\couple{\astrategyplayer}{\astrategyothers}} \neq
  \QQ^{\nu}_{\couple{\astrategyprim}{\astrategyothers}} \). 
\end{lemma}

\begin{proof}
  {The proof proceeds by contradiction}. 
  We will consider any A-behavioral strategy 
  \( \astrategyprim=\sequence{\astrategyprim_{\agent}}{\agent\in
    \AGENT^{\player}} \) of the player~\( \player \), suppose that 
  \( \QQ^{\nu}_{\couple{\astrategyplayer}{\astrategyothers}} =
  \QQ^{\nu}_{\couple{\astrategyprim}{\astrategyothers}} \),
  and then arrive at a contradiction.

  Notice that, by~\eqref{eq:other_players_A-mixed}
  and~\eqref{eq:player_A-mixed_case1},
  the probability distribution~\(
  \QQ^{\nu}_{\couple{\astrategyplayer}{\astrategyothers}} \) does not have mass
  outside of \( \na{ \history^{+}, \history^{-} } \). 

  On the one hand, as, for any player \( \playerbis \in \PLAYER \) and for any
  \( \mixingdevice^{\playerbis} \in \MIXINGDEVICE_{\playerbis}^{+} \)
  (resp. \( \mixingdevice^{\playerbis} \in \MIXINGDEVICE_{\playerbis}^{-} \))
  the pure strategy profile 
  \( \astrategyall^{\playerbis}\np{\mixingdevice^{\playerbis},\cdot} \)
  takes the constant value 
  \( \sequence{\history_{\agent}^{+}}{\agent\in \AGENT^{\playerbis}} \)
  (resp. \( \sequence{\history_{\agent}^{-}}{\agent\in \AGENT^{\playerbis}} \))
  by~\eqref{eq:other_players_A-mixed}--\eqref{eq:player_A-mixed_case1},
  we readily get --- by definition~\eqref{eq:AumannSolutionMap} of \(
  \AumannSolutionMap{\omega^{\epsilon}}{\couple{\astrategyplayer}{\astrategyothers}}{\mixingdevice} \)
  and by characterization~\eqref{eq:solution_map_IFF} of the solution map --- that
  \begin{align*}
    \mixingdevice \in \prod_{\playerbis \in \PLAYER}\MIXINGDEVICE^{+}_\playerbis
    &
      \implies    \AumannSolutionMap{\omega^{+}}{\couple{\astrategyplayer}{\astrategyothers}}{\mixingdevice}=
      \np{\omega^{+},\control^{+}}=\history^{+}
      \eqsepv
    \\
    \mixingdevice \in \prod_{\playerbis \in \PLAYER}\MIXINGDEVICE^{-}_\playerbis
    &
      \implies    \AumannSolutionMap{\omega^{-}}{\couple{\astrategyplayer}{\astrategyothers}}{\mixingdevice}=
      \np{\omega^{-},\control^{-}}=\history^{-}
      \eqfinv    
  \end{align*}
  hence, as \( \nu\bp{\na{\omega^{\epsilon}}}>0 \) and
  \( \prod_{\playerbis \in \PLAYER} \LebesgueMeasure^{\playerbis}\np{\MIXINGDEVICE^{\epsilon}_\playerbis}
  = 1/2^{\cardinal{\PLAYER}} >0 \) for \( \epsilon \in \na{+,-} \), that 
  \begin{subequations}
    \begin{equation}
      \QQ^{\nu}_{\couple{\astrategyplayer}{\astrategyothers}}\bp{\na{\history^{+}}} > 0 \mtext{ and }
      \QQ^{\nu}_{\couple{\astrategyplayer}{\astrategyothers}}\bp{\na{\history^{-}}} > 0
      \eqfinp
      \label{eq:reachable_property_case1}
    \end{equation}
    On the other hand, we also readily get, in the same way but focusing
    on~\eqref{eq:player_A-mixed_case1}, that 
    \begin{equation}
      \QQ^{\nu}_{\couple{\astrategyplayer}{\astrategyothers}}\bp{\na{\history}} > 0 
      \implies      \mtext{ either }
      \history_{\AGENT^{\player}}=  \history^{+}_{\AGENT^{\player}}
      \mtext{ or }
      \history_{\AGENT^{\player}}=  \history^{-}_{\AGENT^{\player}}
      \eqfinp  
      \label{eq:support_property_case1}
    \end{equation}
  \end{subequations}
  The proof is by contradiction and we suppose that
  there exists an A-behavioral strategy
  \( \astrategyprim=\sequence{\astrategyprim_{\agent}}{\agent\in
    \AGENT^{\player}} \) of the player~\( \player \) such that
  \( \QQ^{\nu}_{\couple{\astrategyplayer}{\astrategyothers}}=
  \QQ^{\nu}_{\couple{\astrategyprim}{\astrategyothers}} \).
  Applying 
  Lemma~\ref{lem:behavioral-support-implication} to~$\history^+$ and $\history^{-}$, we obtain that 
  \(
  \LebesgueMeasure_{ \agent }\np{ \MIXINGDEVICE'_\agent\nc{\history^{+}} } >0\) 
  and 
  \( 
  \LebesgueMeasure_{ \agent }\np{ \MIXINGDEVICE'_\agent\nc{\history^{-}} } >0\) 
  , \(\forall \agent \in \AGENT^{\player} \).
  As a consequence, the following set 
  \begin{equation}
    \MIXINGDEVICE^{\pm}=
    \prod_{\playerbis \in \tilde\PLAYER} \MIXINGDEVICE^{+}_\playerbis \times
    \prod_{ \agent \in \AGENT^{\player}\setminus\na{\cgent}}
    \MIXINGDEVICE'_\agent\nc{\history^{+}} \times \MIXINGDEVICE'_\cgent\nc{\history^{-}} 
    \label{eq:MIXINGDEVICE^pm}
  \end{equation}
  has positive probability and, for any \( \mixingdevice\in\MIXINGDEVICE^{\pm}
  \),
  we are going to show that the configuration
  \( \history=\AumannSolutionMap{\omega^+}{\couple{\astrategyprim}{\astrategyothers}}{\mixingdevice} \)
  contradicts~\eqref{eq:support_property_case1}.
  First, we observe that the configuration~\( \history \) is such that
  \( \history_{\AGENT^{-\player}}=\history_{\AGENT^{-\player}}^{+} \)
  because, for any player \( \playerbis\in\tilde\PLAYER \), the pure strategy profile 
  \( \astrategyall^{\playerbis}\np{\mixingdevice^{\playerbis},\cdot} \)
  takes the constant value 
  \( \sequence{\history_{\agent}^{+}}{\agent\in \AGENT^{\playerbis}} \)
  when \( \mixingdevice\in\MIXINGDEVICE^{\pm} \) by definition~\eqref{eq:MIXINGDEVICE^pm} of~\(
  \MIXINGDEVICE^{\pm} \).
  Second, we prove by induction on~$j \in \ic{1,j_\cgent-1} $
  (where \( j_\cgent=\cardinal{\kappa} \geq 2 \), hence \(j_\cgent-1 \geq 1\))
  that \( \history_{\kappa\np{j}}=\history_{\kappa\np{j}}^{+} \) and
  that \( \history \in \HISTORY_{\kappa\np{1},\ldots,\kappa\np{j-1}}^{\ordering} \).
  We suppose that $j \geq 1$ and that 
  \( \history_{\kappa\np{i}}=\history_{\kappa\np{i}}^{+} \)
  for all \( i \in \ic{1,j-1} \)
  and \( \history \in \HISTORY_{\kappa\np{1},\ldots,\kappa\np{j-1}}^{\ordering} \) 
  (the special case $j=1$ corresponds to the initialization part of the proof by
  induction that we cover too).
  Then, we have that
  \begin{align*}
    \history_{\kappa\np{j}}
    &=
      \astrategyprim_{\kappa\np{j}}\np{\mixingdevice_{\kappa\np{j}},\history}
      \tag{by {the} solution map property~\eqref{eq:solution_map_IFF}
      of \( \history=\AumannSolutionMap{\omega^+}{\couple{\astrategyprim}{\astrategyothers}}{\mixingdevice} \)}
    \\
    &=
      \astrategyprim_{\kappa\np{j}}\bp{\mixingdevice_{\kappa\np{j}},\np{\omega^{+},
      \history_{\AGENT^{-\player}},\history_{\kappa\np{1}},\ldots,\history_{\kappa\np{j-1}} }}
      \intertext{by {the} partial causality
      property~\eqref{eq:PartialCausality_property_2}
      and short notation~\eqref{eq:PartialCausality_property},
      using Lemma~\ref{lem:PartialCausality_property}
      as \( \history_\emptyset=\omega^{+} \) and \( \history\in
      \HISTORY_{\kappa\np{1},\ldots,\kappa\np{j-1}}^{\ordering} \)
      by {the} induction assumption (remaining true in the special case $j=1$ because
      \( \history_{\AGENT^{-\player}}=\history_{\AGENT^{-\player}}^{+} \)
      and \( \history\in \HISTORY_{\emptyset}^{\ordering}=\HISTORY \))}
    &=
      \astrategyprim_{\kappa\np{j}}\bp{\mixingdevice_{\kappa\np{j}},\np{\omega^{+},
      \history_{\AGENT^{-\player}}^{+},\history_{\kappa\np{1}}^{+},\ldots,\history_{\kappa\np{j-1}}^{+} }}
      \intertext{as we have seen that \( \history_{\AGENT^{-\player}}=\history_{\AGENT^{-\player}}^{+}
      \), and as \( \np{\history_{\kappa\np{1}},\ldots,\history_{\kappa\np{j-1}}}
      =\np{\history_{\kappa\np{1}}^{+},\ldots,\history_{\kappa\np{j-1}}^{+} } \)
      by the induction assumption}
    &=
      \astrategyprim_{\kappa\np{j}}\bp{\mixingdevice_{\kappa\np{j}},\history^{+}}
      \tag{by using again partial causality, but with $\history^{+}$ this time}
    \\
    &=
      \history^{+}_{\kappa\np{j}}
  \end{align*}
  as \( \mixingdevice\in\MIXINGDEVICE^{\pm} \),
  hence \( \mixingdevice_{\kappa\np{j}} \in
  \MIXINGDEVICE'_{\kappa\np{j}}\nc{\history^{+}} \) by 
  definition~\eqref{eq:MIXINGDEVICE^pm} of the set~$\MIXINGDEVICE^{\pm}$,
  and by definition~\eqref{eq:definitionWprima}
  of the set~\(
  {\MIXINGDEVICE'}_{\kappa\np{j}}\nc{\history^{+}} \). 
  From \( \history\in
  \HISTORY_{\kappa\np{1},\ldots,\kappa\np{j-1}}^{\ordering} \),
  \( \history_{\AGENT^{-\player}}=\history_{\AGENT^{-\player}}^{+}
  \) and \( \np{\history_{\kappa\np{1}},\ldots,\history_{\kappa\np{j}}}
  =\np{\history_{\kappa\np{1}}^{+},\ldots,\history_{\kappa\np{j}}^{+} } \),
  we deduce that \( \history\in
  \HISTORY_{\kappa\np{1},\ldots,\kappa\np{j}}^{\ordering} \)
  by {the} partial causality
  property~\eqref{eq:PartialCausality_property_2},
  using Lemma~\ref{lem:PartialCausality_property}
  as \( \history^{+} \in \HISTORY_{\kappa}^{\ordering} \subset
  \HISTORY_{\kappa\np{1},\ldots,\kappa\np{j}}^{\ordering} \) by
  definition~\eqref{eq:HISTORY_k_kappa_player}
  of \( \HISTORY_{\kappa}^{\ordering} \). 
  Thus, the induction is completed and we obtain that
  \( \history_{\range{\FirstElements{\kappa}}}= 
  \np{\history_{\kappa\np{1}},\ldots,\history_{\kappa\np{j_{\cgent}{-}1}}}
  =\np{\history_{\kappa\np{1}}^{+},\ldots,\history_{\kappa\np{j_{\cgent}{-}1}}^{+}
  }= \history^{+}_{\range{\FirstElements{\kappa}}} \),
  and that \( \history\in
  \HISTORY_{\FirstElements{\kappa}}^{\ordering} = \HISTORY_{\kappa\np{1},\ldots,\kappa\np{j_{\cgent}{-}1}}^{\ordering} \). 

  Third, we compute
  \begin{align*}
    \history_{\cgent} 
    &=
      \astrategyprim_{\cgent}\np{\mixingdevice_{\cgent},\history}
      \tag{by {the} solution map property~\eqref{eq:solution_map_IFF} of \(
      \history=\AumannSolutionMap{\omega^+}{\couple{\astrategyprim}{\astrategyothers}}{\mixingdevice} \)}
    \\
    &=
      \astrategyprim_{\cgent}\bp{\mixingdevice_{\cgent},\np{\omega^{+},
      \history_{\AGENT^{-\player}},\history_{\kappa\np{1}},\ldots,\history_{\kappa\np{j_{\cgent}{-}1}}}}
      \intertext{by {the} partial causality
      property~\eqref{eq:PartialCausality_property_2},
      and short notation~\eqref{eq:PartialCausality_property},
      using Lemma~\ref{lem:PartialCausality_property}
      as \( \history^{+} \in \HISTORY_{\kappa}^{\ordering} \subset
      \HISTORY_{\FirstElements{\kappa}}^{\ordering} \) by 
      definition~\eqref{eq:HISTORY_k_kappa_player}
      of \( \HISTORY_{\kappa}^{\ordering} \),
      and as \( \cgent=\kappa\np{j_\cgent}=\LastElement{\kappa} \)}
    &=
      \astrategyprim_{\cgent}\bp{\mixingdevice_{\cgent},\np{\omega^{+},
      \history_{\AGENT^{-\player}}^{+},\history_{\kappa\np{1}}^{+},\ldots,\history_{\kappa\np{j_{\cgent}{-}1}}^{+}}}
      \intertext{as \( \history_{\AGENT^{-\player}}=\history_{\AGENT^{-\player}}^{+}
      \) and as \( \np{\history_{\kappa\np{1}},\ldots,\history_{\kappa\np{j_{\cgent}{-}1}}}
      =\np{\history_{\kappa\np{1}}^{+},\ldots,\history_{\kappa\np{j_{\cgent}{-}1}}^{+} } \)
      as proved above by induction}
    &=
      \astrategyprim_{\cgent}\bp{\mixingdevice_{\cgent},\history^{+}}
      \tag{by using again partial causality, but with \( \history^{+}\in
      \HISTORY_{\kappa}^{\ordering} \) this time}
    \\
    &=
      \history^{-}_{\cgent}
  \end{align*}
  as \( \mixingdevice\in\MIXINGDEVICE^{\pm} \),
  hence \( \mixingdevice_{\cgent} \in
  \MIXINGDEVICE'_{\cgent}\nc{\history^{-}} \) by 
  definition~\eqref{eq:MIXINGDEVICE^pm} of the set~$\MIXINGDEVICE^{\pm}$,
  and by definition~\eqref{eq:definitionWprima}
  of the set~\( \MIXINGDEVICE'_{\cgent}\nc{\history^{-}} \).   
  As the set~\( \MIXINGDEVICE^{\pm} \) has positive probability,
  we conclude that
  \[
    \QQ^{\nu}_{\couple{\astrategyprim}{\astrategyothers}}\bset{\history\in\HISTORY}%
    { \history_{\bgent}=\history^{+}_{\bgent} \eqsepv \history_{\cgent}=\history^{-}_{\cgent}  }
    > 0
    \eqfinp
  \]
  Since \( \QQ^{\nu}_{\couple{\astrategyplayer}{\astrategyothers}}= \QQ^{\nu}_{\couple{\astrategyprim}{\astrategyothers}} \) by assumption,
  we deduce that
  \[
    \QQ^{\nu}_{\couple{\astrategyplayer}{\astrategyothers}}\bset{\history\in\HISTORY}%
    { \history_{\bgent}=\history^{+}_{\bgent} \eqsepv 
      \history_{\cgent}=\history^{-}_{\cgent}  } > 0
    \eqfinp
  \]
  But this contradicts~\eqref{eq:support_property_case1} because
  \( \history^{+}_{\bgent} \neq \history^{-}_{\bgent} \)
  and \( \history^{+}_{\cgent} \neq \history^{-}_{\cgent} \).
  \medskip
  
  This ends the proof.
\end{proof}

\begin{lemma}
  \label{lem:two_distinct_configurations_do_not_give_the_same_information}
  We consider the second case (two distinct configurations do not give the same
  information) where \( \BigFunctionZ_{\cgent}\np{\history}
  = \BigFunctionZ_{\cgent}\np{\history'} \implies 
  \history_\agent= \history'_\agent \), for all
  \( \history, \history' \in \HISTORY_{\kappa}^{\ordering} \)
  and for all \( \agent \in \range{\FirstElements{\kappa}} \),
  and there exists
  \( \history^{+}, \history^{-} \in \HISTORY_{\kappa}^{\ordering} \) such that 
  \( \BigFunctionZ_{\cgent}\np{\history^{+}}
  = \BigFunctionZ_{\cgent}\np{\history^{-}} \), 
  and there exists an agent \( \bgent \in \range{\FirstElements{\kappa}} \)
  such that \( \BigFunctionZ_{\bgent}\np{\history^{+}}
  \neq \BigFunctionZ_{\bgent}\np{\history^{-}} \).
  Thus, from \( \BigFunctionZ_{\cgent}\np{\history^{+}}
  = \BigFunctionZ_{\cgent}\np{\history^{-}} \), we deduce that
  \( \history^{+}_{\agent} = \history^{-}_{\agent} \), for all  \( \agent \in \range{\FirstElements{\kappa}} \),
  that is, \( \history^{+}_{\range{\FirstElements{\kappa}}}
  = \history^{-}_{\range{\FirstElements{\kappa}}} \).
  There exists an element
  \( \bar\history_{\bgent}\neq \history^{+}_{\bgent} \) by assumption
  { (action sets have at least two distinct elements).}
  We introduce a partition \( \MIXINGDEVICE^{+}_\player \) and \( \MIXINGDEVICE^{-}_\player \)
  of~\( \MIXINGDEVICE^{\player} \)
  with \( \LebesgueMeasure^{\player}\np{\MIXINGDEVICE^{+}_\player}
  = \LebesgueMeasure^{\player}\np{\MIXINGDEVICE^{-}_\player}= 1/2 \),
  and we define the A-mixed strategy \( \astrategyplayer=
  \sequence{\astrategyplayer_{\agent}}{\agent\in \AGENT^{\player}} \) by
  \begin{subequations}
    \begin{align}
      \astrategyplayer_{\agent}\bp{\mixingdevice^{\player},\history} 
      &=
        \history^{+}_{\agent}
        \eqsepv \forall \agent \in \AGENT^{\player}\setminus\na{\bgent,\cgent} 
        \eqsepv \forall \mixingdevice^{\player} \in \MIXINGDEVICE^{\player}
        \eqsepv \forall \history \in \HISTORY 
        \eqfinv
        \label{eq:astrategy_agent_case2}
      \\
      \astrategyplayer_{\bgent}\bp{\mixingdevice^{\player},\history} 
      &= 
        \begin{cases}
          \history^{+}_{\bgent} 
          &
          \mtext{if } \BigFunctionZ_{\bgent}\np{\history}=\BigFunctionZ_{\bgent}\np{\history^{+}} 
          \mtext{ and } \mixingdevice^{\player}\in \MIXINGDEVICE_{\player}^{+}
          \eqfinv %
          \\
          \bar\history_{\bgent} 
          &
          \mtext{if } \BigFunctionZ_{\bgent}\np{\history}
          \neq \BigFunctionZ_{\bgent}\np{\history^{+}}
          \mtext{ and } \mixingdevice^{\player}\in \MIXINGDEVICE_{\player}^{+}
          \eqfinv %
          \\
          \bar\history_{\bgent} 
          &
          \mtext{if } \BigFunctionZ_{\bgent}\np{\history}=\BigFunctionZ_{\bgent}\np{\history^{+}} 
          \mtext{ and } \mixingdevice^{\player}\in \MIXINGDEVICE_{\player}^{-}
          \eqfinv %
          \\
          \history^{+}_{\bgent} 
          &
          \mtext{if } \BigFunctionZ_{\bgent}\np{\history}
          \neq \BigFunctionZ_{\bgent}\np{\history^{+}}
          \mtext{ and } \mixingdevice^{\player}\in \MIXINGDEVICE_{\player}^{-}
          \eqfinv %
        \end{cases}
            \label{eq:astrategy_bgent_case2}
            \intertext{and finally} 
            \astrategyplayer_{\cgent}\bp{\mixingdevice^{\player},\history} 
      &= 
        \begin{cases}
          \history^{-}_{\cgent} 
          &
          \mtext{if } \BigFunctionZ_{\cgent}\np{\history}=\BigFunctionZ_{\cgent}\np{\history^{+}} 
          \mtext{ and } \mixingdevice^{\player}\in \MIXINGDEVICE_{\player}^{-}
          \eqfinv
          \\
          \history^{+}_{\cgent}  
          &
          \mtext{else.}
        \end{cases}
            \label{eq:astrategy_cgent_case2}                                          
    \end{align}
    \label{eq:astrategy_abcgent_case2}                                          
  \end{subequations}
  We consider any probability distribution~$\nu$ on~$\Omega$
  such that \( \nu\bp{\na{\omega^{+}}}>0 \), \( \nu\bp{\na{\omega^{-}}}>0 \)
  and \( \nu\bp{\na{\omega^{+},\omega^{-}}}=1 \), 
  thus covering both cases where
  \( \omega^{+}=\omega^{-} \) or \( \omega^{+} \neq \omega^{-} \).
  
  Then, it holds that, for any A-behavioral strategy 
  \( \astrategyprim=\sequence{\astrategyprim_{\agent}}{\agent\in
    \AGENT^{\player}} \) of the player~\( \player \), one has that
  \( \QQ^{\nu}_{\couple{\astrategyplayer}{\astrategyothers}} \neq
  \QQ^{\nu}_{\couple{\astrategyprim}{\astrategyothers}} \). 
\end{lemma}

\begin{proof}
  {The proof proceeds by contradiction}.
  We will consider any A-behavioral strategy 
  \( \astrategyprim=\sequence{\astrategyprim_{\agent}}{\agent\in
    \AGENT^{\player}} \) of the player~\( \player \), suppose that 
  \( \QQ^{\nu}_{\couple{\astrategyplayer}{\astrategyothers}} =
  \QQ^{\nu}_{\couple{\astrategyprim}{\astrategyothers}} \),
  and then arrive at a contradiction. 

  Notice that, by~\eqref{eq:other_players_A-mixed}
  and~\eqref{eq:astrategy_abcgent_case2}, 
  the probability distribution~\(
  \QQ^{\nu}_{\couple{\astrategyplayer}{\astrategyothers}} \) does not have mass
  outside of a finite set of configurations. 

  For any agent \( \agent \in \AGENT^{\player}\setminus\na{\bgent,\cgent} \),
  the mapping   \( \astrategyplayer_{\agent}\bp{\mixingdevice^{\player},\cdot} \) is
  \( \tribu{\Information}_{\agent} \)-measurable as it is constant
  by~\eqref{eq:astrategy_agent_case2}.
  The mapping
  \( \astrategyplayer_{\bgent}\bp{\mixingdevice^{\player},\cdot} \) is
  \( \tribu{\Information}_{\bgent} \)-measurable as it {is} measurably expressed
  in~\eqref{eq:astrategy_bgent_case2} as a 
  function of the \( \tribu{\Information}_{\bgent} \)-measurable
  mapping~$\BigFunctionZ_{\bgent}$.
  The same holds true for~\(
  \astrategyplayer_{\cgent}\bp{\mixingdevice^{\player},\cdot} \)
  in~\eqref{eq:astrategy_cgent_case2}.
  
  As a preliminary result, we prove that
  \begin{equation}
    \QQ^{\nu}_{\couple{\astrategyplayer}{\astrategyothers}} \bset{ \history\in\HISTORY }%
    { \history\in\HISTORY_{\kappa}^{\ordering} \eqsepv 
      \BigFunctionZ_{\bgent}\np{\history}=\BigFunctionZ_{\bgent}\np{\history^{+}}
      \eqsepv \history_{\cgent}= \history^{-}_{\cgent} } = 0
    \eqfinp  
    \label{eq:support_property_case2}
  \end{equation}
  Indeed, by~\eqref{eq:astrategy_cgent_case2}, any \( \history=
  \AumannSolutionMap{\omega}{\couple{\astrategyplayer}{\astrategyothers}}{\mixingdevice} \in\HISTORY_{\kappa}^{\ordering} \)
  such that \( \history_{\cgent}= \history^{-}_{\cgent} \)
  must be such that both
  \( \BigFunctionZ_{\cgent}\np{\history}=\BigFunctionZ_{\cgent}\np{\history^{+}} \)
  and \( \mixingdevice^{\player}\in \MIXINGDEVICE_{\player}^{-} \). 
  But, as \( \BigFunctionZ_{\cgent}\np{\history'}
  = \BigFunctionZ_{\cgent}\np{\history''} \implies 
  \history'_\agent= \history''_\agent \), for all
  \( \history', \history'' \in \HISTORY_{\kappa}^{\ordering} \)
  and for all \( \agent \in \range{\FirstElements{\kappa}} \),
  we deduce that
  \(   \history_\bgent= \history^{+}_\bgent \).
  As \( \mixingdevice^{\player}\in \MIXINGDEVICE_{\player}^{-} \),
  we get by~\eqref{eq:astrategy_bgent_case2} that necessarily
  \( \BigFunctionZ_{\bgent}\np{\history} \neq \BigFunctionZ_{\bgent}\np{\history^{+}} \).
  Thus, we have proven~\eqref{eq:support_property_case2},   
  and we will now show that any A-behavioral strategy contradicts~\eqref{eq:support_property_case2}.
  
    First, we get that 
    \begin{align*}
      \mixingdevice \in \prod_{\playerbis \in \tilde\PLAYER}\MIXINGDEVICE^{+}_\playerbis
      \times\MIXINGDEVICE^{+}_{\player}
      &  \implies
        \AumannSolutionMap{\omega^{+}}{\couple{\astrategyplayer}{\astrategyothers}}{\mixingdevice}=
        \np{\omega^{+},\history^{+}_{\AGENT^{-\player}},\history^{+}_{\AGENT^{\player}\setminus\na{\bgent,\cgent}},
        \history^{+}_{\bgent},\history^{+}_{\cgent}}= \history^{+}
    \end{align*}
    because, for any player \( \playerbis \in \tilde\PLAYER \) and for any
    \( \mixingdevice^{\playerbis} \in \MIXINGDEVICE_{\playerbis}^{+} \)
    the pure strategy profile 
    \( \astrategyall^{\playerbis}\np{\mixingdevice^{\playerbis},\cdot} \)
    takes the constant value
    \( \sequence{\history_{\agent}^{+}}{\agent\in \AGENT^{\playerbis}} \),
    and by the expressions~\eqref{eq:astrategy_agent_case2}--\eqref{eq:astrategy_bgent_case2}--\eqref{eq:astrategy_cgent_case2}
    of \( \astrategyplayer\np{\mixingdevice^{\player},\cdot} \)
    when   \( \mixingdevice^{\player} \in \MIXINGDEVICE_{\player}^{+} \).
    Now, we have that \( \nu\bp{\na{\omega^{+}}}>0 \) and
    \( \prod_{\playerbis \in \tilde\PLAYER}
    \LebesgueMeasure^{\playerbis}\np{\MIXINGDEVICE^{+}_\playerbis}
    \times \LebesgueMeasure^{\player}\np{\MIXINGDEVICE^{+}_{\player}}
    = 1/2^{\cardinal{\PLAYER}} >0 \).
    Thus, we get that \( \QQ^{\nu}_{\couple{\astrategyplayer}{\astrategyothers}}\bp{\na{\history^{+}}} >0 \)
    and, using Lemma~\ref{lem:behavioral-support-implication} as in the first case,
    we obtain that
    \(\LebesgueMeasure_{ \agent }\np{ \MIXINGDEVICE'_\agent\nc{\history^{+}} } >0 \),
    for any \( \agent \in \AGENT^{\player} \).   

    Second, we set 
    \begin{equation}
      \history^{\mp} =
      \np{\omega^{-},\history^{-}_{\AGENT^{-\player}},\history^{+}_{\AGENT^{\player}\setminus\na{\bgent,\cgent}},
        \history^{+}_{\bgent},\history^{-}_{\cgent}}
      \eqfinv 
      \label{eq:history_mp_case2}
    \end{equation}
  and we show that \( \QQ^{\nu}_{\couple{\astrategyplayer}{\astrategyothers}}\bp{\na{\history^{\mp}}} >0 \).

  For this purpose, we first establish that
  \begin{align*}
    \BigFunctionZ_{\bgent}\np{\history^{\mp}}
    &=
      \BigFunctionZ_{\bgent}\np{\omega^{-},\history^{\mp}_{\AGENT^{-\player}},
      \history^{\mp}_{\kappa\np{1}},\ldots,\history^{\mp}_{\kappa\np{j_{\bgent}{-}1}}}
      \intertext{by {the} partial causality
      property~\eqref{eq:PartialCausality_property_2},
      and short notation~\eqref{eq:PartialCausality_property},
      using Lemma~\ref{lem:PartialCausality_property}
      as \( \history^{\mp} \in \HISTORY_{\kappa\np{1},\ldots,\kappa\np{j_{\bgent}{-}1}}^{\ordering} \)
      since \( \history_{\AGENT^{-\player}}^{\mp}=\history_{\AGENT^{-\player}}^{-} \)
      and \( \history^{\mp}_{\range{\FirstElements{\kappa}}}
      = \history^{-}_{\range{\FirstElements{\kappa}}} \) --- 
      by definition~\eqref{eq:history_mp_case2} of~$\history^{\mp}$,
      using that \( \history^{+}_{\range{\FirstElements{\kappa}}}
      = \history^{-}_{\range{\FirstElements{\kappa}}} \) --- 
      and as \( \history^{-} \in \HISTORY_{\kappa}^{\ordering} \subset
      \HISTORY_{\kappa\np{1},\ldots,\kappa\np{j_{\bgent}{-}1}}^{\ordering} \) by
      definition~\eqref{eq:HISTORY_k_kappa_player}
      of \( \HISTORY_{\kappa}^{\ordering} \)}
    &=
      \BigFunctionZ_{\bgent}\np{\omega^{-},\history^{-}_{\AGENT^{-\player}},
      \history^{+}_{\kappa\np{1}},\ldots,\history^{+}_{\kappa\np{j_{\bgent}{-}1}}}
      \tag{by definition~\eqref{eq:history_mp_case2} of~$\history^{\mp}$}
    \\
    &=
      \BigFunctionZ_{\bgent}\np{\omega^{-},\history^{-}_{\AGENT^{-\player}},
      \history^{-}_{\kappa\np{1}},\ldots,\history^{-}_{\kappa\np{j_{\bgent}{-}1}}}
      \tag{as \( \history^{+}_{\agent} = \history^{-}_{\agent} \),
      for all  \( \agent \in \range{\FirstElements{\kappa}}
      \supset \na{\kappa\np{1},\ldots,\kappa\np{j_{\bgent}{-}1}} \)}
    \\
    &=
      \BigFunctionZ_{\bgent}\np{\omega^{-},\history^{-}_{\AGENT^{-\player}},\history^{-}_{\AGENT^{\player}}}
      \intertext{again by {the} partial causality
      property~\eqref{eq:PartialCausality_property_2}, but with 
      \( \history^{-}\in \HISTORY_{\kappa}^{\ordering} \) this time,
      and as \( \bgent=\kappa\np{j_\bgent} \)}
    &=
      \BigFunctionZ_{\bgent}\np{\history^{-}}
      \eqfinp
      \tag{as \( \history^{-} =
      \np{\omega^{-},\history^{-}_{\AGENT^{-\player}},\history^{-}_{\AGENT^{\player}}} \)}
  \end{align*}

  Then, we get that
  \begin{align*}
    \mixingdevice \in \prod_{\playerbis \in \tilde\PLAYER}\MIXINGDEVICE^{-}_\playerbis
    \times\MIXINGDEVICE^{-}_{\player}
    & \implies
      \AumannSolutionMap{\omega^{-}}{\couple{\astrategyplayer}{\astrategyothers}}{\mixingdevice}=
      \np{\omega^{-},\history^{-}_{\AGENT^{-\player}},\history^{+}_{\AGENT^{\player}\setminus\na{\bgent,\cgent}},
      \history^{+}_{\bgent},\history^{-}_{\cgent}}=\history^{\mp}
      \eqfinv
  \end{align*}
  because, for any player \( \playerbis \in \tilde\PLAYER \) and for any
  \( \mixingdevice^{\playerbis} \in \MIXINGDEVICE_{\playerbis}^{-} \)
  the pure strategy profile 
  \( \astrategyall^{\playerbis}\np{\mixingdevice^{\playerbis},\cdot} \)
  takes the constant value
  \( \sequence{\history_{\agent}^{-}}{\agent\in \AGENT^{\playerbis}} \),
  and by the expressions~\eqref{eq:astrategy_agent_case2}--\eqref{eq:astrategy_bgent_case2}--\eqref{eq:astrategy_cgent_case2}
  of \( \astrategyplayer\np{\mixingdevice^{\player},\cdot} \)
  when   \( \mixingdevice^{\player} \in \MIXINGDEVICE_{\player}^{+} \)
  using that \( \BigFunctionZ_{\bgent}\np{\history^{\mp}}=
  \BigFunctionZ_{\bgent}\np{\history^{-}} \neq \BigFunctionZ_{\bgent}\np{\history^{+}} \).
  Now, as \( \nu\bp{\na{\omega^{-}}}>0 \) and
  \(
  \prod_{\playerbis \in \PLAYER}
  \LebesgueMeasure^{\playerbis}\np{\MIXINGDEVICE^{-}_\playerbis}
  = 1/2^{\cardinal{\PLAYER}} >0 \) 
  we obtain that \(
  \QQ^{\nu}_{\couple{\astrategyplayer}{\astrategyothers}}\bp{\na{\history^{\mp}}}>0 \).

  Third, using Lemma~\ref{lem:behavioral-support-implication}, we deduce that
  \(  \LebesgueMeasure_{ \agent }\bp{ \MIXINGDEVICE'_\agent\nc{\history^{\mp}} } >0 \)
  for any \( \agent \in \AGENT^{\player} \) hence, in particular, that
  \(  \LebesgueMeasure_{ \cgent }\bp{ {\MIXINGDEVICE'}_\cgent\nc{\history^{\mp}} } >0 \).
  Now, we prove that
  \( \LebesgueMeasure_{ \cgent }\np{ {\MIXINGDEVICE'}_\cgent\nc{\history^{-}}} >0 \),
  where the set \( {\MIXINGDEVICE'}_\cgent\nc{\history^{-}} \)
  has been defined in~\eqref{eq:definitionWprima},
  by showing that \( {\MIXINGDEVICE'}_\cgent\nc{\history^{\mp}} 
  \subset {\MIXINGDEVICE'}_\cgent\nc{\history^{-}}\).
  Indeed, for \( \mixingdevice_{\cgent} \in {\MIXINGDEVICE'}_\cgent\nc{\history^{\mp}} \),
  we have that 
  \begin{align*}
    \astrategyprim_{\cgent}\bp{\mixingdevice_{\cgent},\history^{-}}
    &=
      \astrategyprim_{\cgent}\bp{\mixingdevice_{\cgent},\np{\omega^{-},
      \history^{-}_{\AGENT^{-\player}},\history^{-}_{\kappa\np{1}},\ldots,\history^{-}_{\kappa\np{j_{\cgent}{-}1}}}}
      \tag{by partial causality}
    \\
    &=
      \astrategyprim_{\cgent}\bp{\mixingdevice_{\cgent},\np{\omega^{-},
      \history^{-}_{\AGENT^{-\player}},\history^{+}_{\kappa\np{1}},\ldots,\history^{+}_{\kappa\np{j_{\cgent}{-}1}}}}
      \intertext{as \( \history^{+}_{\agent} = \history^{-}_{\agent} \),
      for all  \( \agent \in \range{\FirstElements{\kappa}}
      \supset {\kappa\np{1},\ldots,\kappa\np{j_{\cgent}{-}1}} \)}
    &=
      \astrategyprim_{\cgent}\bp{\mixingdevice_{\cgent},\np{\omega^{-},
      \history^{\mp}_{\AGENT^{-\player}},\history^{\mp}_{\kappa\np{1}},\ldots,\history^{\mp}_{\kappa\np{j_{\cgent}{-}1}}}}
      \tag{by definition~\eqref{eq:history_mp_case2} of \( \history^{\mp} \)}
    \\
    &=
      \astrategyprim_{\cgent}\np{\mixingdevice_{\cgent},\history^{\mp}}
      \tag{by partial causality}
    \\
    &=
      \history_{\cgent}^{\mp} 
      \tag{by definition of ${\MIXINGDEVICE'}_\cgent\nc{\history^{\mp}}$ in~\eqref{eq:definitionWprima}}
    \\
    &=
      \history_{\cgent}^{-} 
      \tag{by definition~\eqref{eq:history_mp_case2} of \( \history^{\mp} \)}
      \eqfinp 
  \end{align*}
  We have shown that \( {\MIXINGDEVICE'}_\cgent\nc{\history^{\mp}} 
  \subset {\MIXINGDEVICE'}_\cgent\nc{\history^{-}}\),
  hence we deduce that
  \( 
  \LebesgueMeasure_{ \cgent }\np{ {\MIXINGDEVICE'}_\cgent\nc{\history^{-}}} \geq
  \LebesgueMeasure_{ \cgent }\np{ {\MIXINGDEVICE'}_\cgent\nc{\history^{\mp}}}
  >0 \).  
  Thus, the set~\( \MIXINGDEVICE^{\pm} \) in~\eqref{eq:MIXINGDEVICE^pm}
  has positive probability and, for any \( \mixingdevice\in\MIXINGDEVICE^{\pm}
  \),
  we are going to show that the configuration
  \( \history=\AumannSolutionMap{\omega^+}{\couple{\astrategyprim}{\astrategyothers}}{\mixingdevice} \)
  contradicts~\eqref{eq:support_property_case2}.
  Indeed, the configuration~\( \history \) is such that
  \( \history_{\AGENT^{-\player}}=\history_{\AGENT^{-\player}}^{+} \)
  because, for any player \( \playerbis\in\tilde\PLAYER \), the pure strategy profile 
  \( \astrategyall^{\playerbis}\np{\mixingdevice^{\playerbis},\cdot} \)
  takes the constant value 
  \( \sequence{\history_{\agent}^{+}}{\agent\in \AGENT^{\playerbis}} \)
  when \( \mixingdevice\in\MIXINGDEVICE^{\pm} \) by definition~\eqref{eq:MIXINGDEVICE^pm} of~\(
  \MIXINGDEVICE^{\pm} \).
  Then, we get that 
  \begin{align*}
    \BigFunctionZ_{\bgent}\np{\history^{+}}
    &=
      \BigFunctionZ_{\bgent}\np{\omega^{+},
      \history^{+}_{\AGENT^{-\player}},\history^{+}_{\kappa\np{1}},\ldots,\history^{+}_{\kappa\np{j_{\bgent}{-}1}}}
      \intertext{by {the} partial causality
      property~\eqref{eq:PartialCausality_property_2},
      and short notation~\eqref{eq:PartialCausality_property},
      using Lemma~\ref{lem:PartialCausality_property}
      as \( \history^{+}\in \HISTORY_{\kappa}^{\ordering} \),
      and as \( \bgent=\kappa\np{j_\bgent} \)}  
    &=
      \BigFunctionZ_{\bgent}\np{\omega^{+},
      \history_{\AGENT^{-\player}},\history_{\kappa\np{1}},\ldots,\history_{\kappa\np{j_{\bgent}{-}1}}}
      \intertext{as we have just established that
      \( \history_{\AGENT^{-\player}}=\history^{+}_{\AGENT^{-\player}}
      \), and as \( \np{\history_{\kappa\np{1}},\ldots,\history_{\kappa\np{j_{\bgent}{-}1}}}
      =\np{\history_{\kappa\np{1}}^{+},\ldots,\history_{\kappa\np{j_{\bgent}{-}1}}^{+} } \)
      by definition~\eqref{eq:astrategy_agent_case2} of
      \( \astrategyplayer_{\agent}\bp{\mixingdevice^{\player},\history}=
      \history^{+}_{\agent} \) for any \( \agent \in \AGENT^{\playerbis}\setminus\na{\bgent,\cgent} \)}
    &=      
      \BigFunctionZ_{\bgent}\np{\history}    
  \end{align*}
  by {the} partial causality
  property~\eqref{eq:PartialCausality_property_2},
  and short notation~\eqref{eq:PartialCausality_property},
  using Lemma~\ref{lem:PartialCausality_property}
  as \( \history^{+}\in \HISTORY_{\kappa}^{\ordering} \),
  \( \history_{\AGENT^{-\player}}=\history^{+}_{\AGENT^{-\player}}
  \), \( \np{\history_{\kappa\np{1}},\ldots,\history_{\kappa\np{j_{\bgent}{-}1}}}
  =\np{\history_{\kappa\np{1}}^{+},\ldots,\history_{\kappa\np{j_{\bgent}{-}1}}^{+} } \)
  and \( \bgent=\kappa\np{j_\bgent} \). 
  Now, by definition~\eqref{eq:MIXINGDEVICE^pm} of~\(
  \MIXINGDEVICE^{\pm} \), we have that \(
  \history_{\cgent}=\history^{-}_{\cgent} \). 
  As the set~\( \MIXINGDEVICE^{\pm} \) has positive probability,
  we conclude that
  \[
    \QQ^{\nu}_{\couple{\astrategyprim}{\astrategyothers}}\bset{\history\in\HISTORY}%
    { \history\in\HISTORY_{\kappa}^{\ordering} \eqsepv 
      \BigFunctionZ_{\bgent}\np{\history}=\BigFunctionZ_{\bgent}\np{\history^{+}}
      \eqsepv \history_{\cgent}= \history^{-}_{\cgent} } 
    > 0
    \eqfinv
  \]
  but this contradicts~\eqref{eq:support_property_case2}
  since \( \QQ^{\nu}_{\couple{\astrategyplayer}{\astrategyothers}}
  = \QQ^{\nu}_{\couple{\astrategyprim}{\astrategyothers}} \) by assumption.
  \medskip
  
  This ends the proof.
\end{proof}

\section{Discussion}

In this paper, we have introduced an alternative representation of games, namely games
in product form.  
For this, we have adapted Witsenhausen's intrinsic model to games,
and the definition of perfect recall to this setting. Then, we have provided a statement and a proof of the 
celebrated Kuhn's equivalence theorem: when a player satisfies perfect recall, 
for any A-mixed strategy, there is an equivalent A-behavioral strategy
(and the converse).
A next step would be to characterize, or at least to give sufficient conditions,
for playability of games in product form in terms of the primitives.

\medskip

\textbf{Acknowledgments}.

We thank Dietmar Berwanger and Tristan Tomala for
fruitful discussions, and for their valuable comments on a first version of this
paper (in the finite case).
We thank Danil Kadnikov for the pictures of trees and W-models.
{We thank Carlos Al\'{o}s-Ferrer for nice discussions, in 2019 in Zurich, regarding the
  potential of Witsenhausen's intrinsic model to handle measurability issues
  and to deal with behavioral strategies. 
  We thank the editor for his comments and for letting us the opportunity to enter
  the review process.
  We are especially indebted to the anonymous reviewer who challenged the
  successive versions of the paper; she/he greatly helped --- by her/his suggestions, insightful comments and advice
  --- to improve the paper and to make it (we hope) accessible to game theorists.}
This research benefited from the support of the FMJH Program PGMO and
from the support to this program from EDF.

\bibliographystyle{abbrv}

\begin{thebibliography}{}

\end{thebibliography}


\begin{thebibliography}{10}

\bibitem{Aliprantis-Border:2006}
C.~D. Aliprantis and K.~C. Border.
\newblock {\em Infinite Dimensional Analysis: A Hitchhiker's Guide}.
\newblock Springer, Berlin, third edition, 2006.

\bibitem{Alos-Ferrer-Ritzberger:2016}
C.~Al\'{o}s-Ferrer and K.~Ritzberger.
\newblock {\em The theory of extensive form games}.
\newblock Springer Series in Game Theory. Springer-Verlag, Berlin, 2016.

\bibitem{Aumann:1964}
R.~Aumann.
\newblock Mixed and behavior strategies in infinite extensive games.
\newblock In M.~Dresher, L.~S. Shapley, and A.~W. Tucker, editors, {\em
  Advances in Game Theory}, volume~52, pages 627--650. Princeton University
  Press, 1964.

\bibitem{Bertsekas-Shreve:1996}
D.~P. Bertsekas and S.~E. Shreve.
\newblock {\em Stochastic Optimal Control: The Discrete-Time Case}.
\newblock Athena Scientific, Belmont, Massachusetts, 1996.

\bibitem{Blume-Brandenburger-Dekel:1991}
L.~Blume, A.~Brandenburger, and E.~Dekel.
\newblock Lexicographic probabilities and choice under uncertainty.
\newblock {\em Econometrica}, 59(1):61--79, 1991.

\bibitem{Carpentier-Chancelier-Cohen-DeLara:2015}
P.~Carpentier, J.-P. Chancelier, G.~Cohen, and M.~{De Lara}.
\newblock {\em Stochastic Multi-Stage Optimization. At the Crossroads between
  Discrete Time Stochastic Control and Stochastic Programming}.
\newblock Springer-Verlag, Berlin, 2015.

\bibitem{Dellacherie-Meyer:1975}
C.~Dellacherie and P.~A. Meyer.
\newblock {\em Probabilit\'es et potentiel}.
\newblock Hermann, Paris, 1975.

\bibitem{Harsanyi-Selten:1988}
J.~C. Harsanyi and R.~Selten.
\newblock {\em {A General Theory of Equilibrium Selection in Games}}.
\newblock The MIT Press, 1988.

\bibitem{Kallenberg:2002}
O.~Kallenberg.
\newblock {\em Foundations of Modern Probability}.
\newblock Springer-Verlag, New York, second edition, 2002.

\bibitem{Kuhn:1953}
H.~W. Kuhn.
\newblock Extensive games and the problem of information.
\newblock In H.~W. Kuhn and A.~W. Tucker, editors, {\em Contributions to the
  Theory of Games}, volume~2, pages 193--216. Princeton University Press,
  Princeton, 1953.

\bibitem{osborne1994course}
M.~J. Osborne and A.~Rubinstein.
\newblock {\em A course in game theory}.
\newblock MIT press, 1994.

\bibitem{ritzberger1999recall}
K.~Ritzberger.
\newblock Recall in extensive form games.
\newblock {\em International Journal of Game Theory}, 28(1):69--87, 1999.

\bibitem{Schwarz:1974}
G.~Schwarz.
\newblock Ways of randomizing and the problem of their equivalence.
\newblock {\em Israel Journal of Mathematics}, 17:1--10, 1974.

\bibitem{vonNeuman-Morgenstern:1947}
J.~{von Neuman} and O.~Morgenstern.
\newblock {\em Theory of games and economic behaviour}.
\newblock Princeton University Press, Princeton, second edition, 1947.

\bibitem{Witsenhausen:1971a}
H.~S. Witsenhausen.
\newblock On information structures, feedback and causality.
\newblock {\em SIAM J. Control}, 9(2):149--160, May 1971.

\bibitem{Witsenhausen:1975}
H.~S. Witsenhausen.
\newblock The intrinsic model for discrete stochastic control: Some open
  problems.
\newblock In A.~Bensoussan and J.~L. Lions, editors, {\em Control Theory,
  Numerical Methods and Computer Systems Modelling}, volume 107 of {\em Lecture
  Notes in Economics and Mathematical Systems}, pages 322--335.
  Springer-Verlag, 1975.

\end{thebibliography}
\newcommand{\noopsort}[1]{} \ifx\undefined\allcaps\def\allcaps#1{#1}\fi

\end{document}